\documentclass[a4paper,cleveref, autoref, thm-restate]{lipics-v2021}

\pdfoutput=1
\usepackage{amsmath, amssymb, amsthm}
\usepackage{pifont}
\usepackage{bbold}
\usepackage{mathtools}
\usepackage{ebproof}
\ebproofset{center = false}
\usepackage{braket}
\usepackage{appendix}
\usepackage{stmaryrd}
\usepackage{todonotes}
\presetkeys{todonotes}{inline,backgroundcolor=orange}{}
\hypersetup{colorlinks,linkcolor={blue},citecolor={blue}}

\foreach \b in {C, N, R}{
  \expandafter\xdef\csname mb\b \endcsname{\noexpand\mathbb{\b}}
}
\foreach \c in {A, B, C, F, J, O, R, S, T, V, U, X}{
  \expandafter\xdef\csname mc\c \endcsname{\noexpand\mathcal{\c}}
}
\foreach \t in {a, b, c, f, g, h, n, q, x, y, Q}{
  \expandafter\xdef\csname mt\t \endcsname{\noexpand\mathtt{\t}}
}
\newcommand{\mfR}{\mathfrak{R}}
\newcommand{\fractwo}{\frac{1}{\sqrt 2}}
\newcommand{\gmid}{\,\mid\,}
\newcommand{\size}[1]{|#1|}
\newcommand{\computes}[1]{\llbracket #1 \rrbracket}
\newcommand{\fv}[1]{\mathtt{FV}(#1)}
\newcommand{\pitwo}{\Pi_2^0}
\newcommand{\kron}[2]{\delta_{#1, #2}}
\newcommand{\seq}[1]{\overrightarrow{#1}}
\newcommand{\fbqp}{\ensuremath{\mathtt{FBQP}}}

\foreach \n/\c in {no/red,struct/orange,det/teal}{
  \expandafter \xdef \csname thmprf\n \endcsname{\noexpand\color{\c}}
}

\newcommand{\arity}[1][\mtb]{\mathtt{ar}(#1)}
\newcommand{\var}[1]{\mathcal{V}ar(#1)}
\newcommand{\trsyntax}{\langle \mcA, \mcC, \mcF, \mcX, \mcR \rangle}
\newcommand{\trset}{\mathscr{T}}
\newcommand{\trsetground}{\mathscr{T}_0}
\newcommand{\trsetclassical}{\mathscr{C}}
\newcommand{\trsetval}{\mathscr{V}}
\newcommand{\trsamp}[1][\mfR]{\mcA}
\newcommand{\trsvar}[1][\mfR]{\mcX}
\newcommand{\trscons}[1][\mfR]{\mcC}
\newcommand{\trsfunc}[1][\mfR]{\mcF}
\newcommand{\trsrule}[1][\mfR]{\mcR_{#1}}
\newcommand{\symbols}{\mcC \biguplus \mcF}

\newcommand{\natz}{\mathtt{0}}
\newcommand{\nats}{\mathtt{S}}
\newcommand{\zket}{\ket 0}
\newcommand{\oket}{\ket 1}
\newcommand{\consp}{\mathtt{pr}}
\newcommand{\consht}{\mathtt{cons}}
\newcommand{\nil}{[~]}

\newcommand{\stwo}{\mta_{\fractwo}}

\DeclarePairedDelimiter{\inter}{\llbracket}{\rrbracket}

\newcommand{\can}{\mathtt{CAN}}
\newcommand{\toq}{\to_\mfR}
\newcommand{\toqif}{\toq^?}
\newcommand{\toqs}{\toq^*}
\newcommand{\nf}{\mathtt{NF}_\mfR}
\newcommand{\nequivz}[1][t]{#1 \not\equiv \mta_0 \cdot #1'}
\newcommand{\equivz}[1][t]{#1 \equiv \mta_0 \cdot #1'}
\newcommand{\sumgen}[4]{\sum_{#1=1}^{#2} #3_{#1} \cdot #4_{#1}}
\newcommand{\sumdef}[1][t]{\sumgen i n \alpha {#1}}
\newcommand{\candef}[1][t]{\sumdef[#1]\in \can}
\newcommand{\sumdefb}[1]{\sumgen j m \beta {#1}}
\newcommand{\candefb}[1]{\sumdefb{#1}\in \can}

\newcommand{\qbit}{\mathtt{Qbit}}
\newcommand{\nat}{\mathtt{nat}}
\newcommand{\typelist}[1][T]{\mathtt{List}(#1)}
\newcommand{\sign}[1]{\mathrm{type}(#1)}
\newcommand{\signdef}[1][\mtb]{\sign{#1} = T_1 \times \dots \times T_n \to T}
\newcommand{\typedom}[1]{\mathrm{Dom}(#1)}
\newcommand{\subcont}[1][\Gamma]{\mathrm{Sub_0}(#1)}
\newcommand{\typing}[4]{
  \ifx&#1&\varnothing \else #1\fi;
  \ifx&#2&\varnothing \else #2\fi
  \vdash_{\mfR} #3 : #4
}
\newcommand{\typingamp}[2]{
  #1 \Vdash_{\mfR} #2
}
\newcommand{\typingdef}[1][t]{\typing \Gamma \Delta {#1} T}
\newcommand{\typingclosed}[2]{\vdash_{\mfR} #1 : #2}
\newcommand{\kronweird}[2]{\delta_\perp(#1,#2)}
\newcommand{\ortho}{\perp}
\newcommand{\orthored}{\ortho_\mfR}

\newcommand{\funcin}{\mathrm{In}(\mtf)}
\newcommand{\symbin}{\mathrm{In}(\mtb)}
\newcommand{\struct}[1]{\mathtt{struct}(#1)}
\newcommand{\structset}[1]{\mathtt{S}(#1)}
\newcommand{\qtrscirc}{\mathtt{QTRSCirc}}
\newcommand{\rank}[1][\mtf]{\mathrm{rk}(#1)}
\newcommand{\qtrsrec}{\mathtt{SRec}}
\newcommand{\timeset}[1][\tau]{\mathtt{Time}(#1)}
\newcommand{\qtrstime}[1][\tau]{\mathtt{QTRS}(#1)}

\newcommand{\qtrspoly}{\mathtt{QTRS}(Poly)}
\newcommand{\famcirc}{\mathtt{C}(\mfR)}

\newcommand{\orddef}[2]{
  \expandafter\xdef\csname ord#1\endcsname{\succ_{\noexpand #2}}
  \expandafter\xdef\csname ord#1eq\endcsname{\succcurlyeq_{\noexpand #2}}
}
\orddef{}{}
\orddef{nat}{\mbN}
\orddef{ext}{w}
\orddef{func}{\mfR}
\orddef{rpo}{rpo}
\orddef{pol}{\mbN[X]}
\orddef{assign}{\assign -}

\newcommand{\assign}[1]{\llparenthesis #1 \rrparenthesis}
\newcommand{\polset}[1]{\mbN[X_1, \dots, X_{#1}]}
\newcommand{\polvar}{\mathrm{VAR}}
\newcommand{\dpair}[2]{\langle #1, #2 \rangle}
\newcommand{\dpairdef}[1]{\dpair{s_{#1}}{t_{#1}}}
\newcommand{\subcan}{\,\triangleleft_{\,\can}\,}
\newcommand{\qt}[2]{\theta_{#1}(#2)}

\nolinenumbers

\title{Quantum Term Rewrite Systems: Applications to Complexity Analysis}

\author{Kostia Chardonnet}{Université de Lorraine, CNRS, Inria, LORIA, F-54000 Nancy, France}{kostia.chardonnet@inria.fr}{0009-0000-0671-6390}{}
\author{Emmanuel Hainry}{Université de Lorraine, CNRS, Inria, LORIA, F-54000 Nancy, France}{emmanuel.hainry@loria.fr}{0000-0002-975w0-0460}{}
\author{Romain Péchoux}{Université de Lorraine, CNRS, Inria, LORIA, F-54000 Nancy, France}{romain.pechoux@loria.fr}{0000-0003-0601-5425}{}
\author{Thomas Vinet}{Université de Lorraine, CNRS, Inria, LORIA, F-54000 Nancy, France}{thomas.vinet@inria.fr}{0009-0007-8547-6145}{}
\authorrunning{K. Chardonnet, E. Hainry, R. Péchoux, and T. Vinet}

\hideLIPIcs

\ccsdesc[500]{Theory of computation~Rewrite systems}
\ccsdesc[500]{Theory of computation~Quantum complexity theory}

\keywords{Term Rewrite Systems, Quantum Computing, Resource Analysis}

\begin{document}

\maketitle

\begin{abstract}
Term Rewrite Systems (TRS) is a computational model offering a level of
abstraction well-suited towards static analysis, e.g., termination or complexity
analyses. In this paper, we introduce Quantum Term Rewrite Systems (QTRS), an
extension of TRS to quantum computing, thus allowing to benefit from quantum
advantage while being able to certify the complexity. We ensure that QTRS
correspond to physically realizable processes and adapt techniques to obtain
termination certificates or generic bounds on the reduction length. We delineate a
class of terminating QTRS that can be compiled to uniform families of quantum
circuits of size bounded by the reduction length. Conversely, this class is
universal for quantum circuits. In particular, we show a characterization of the
class of functions computable in quantum polynomial time, known as $\fbqp$. 
\end{abstract}

\section{Introduction}
\label{sec:intro}
\subparagraph*{Motivations. }

The past few decades saw the emergence of quantum computation, a paradigm
in which problems can be solved more efficiently compared to their
classical counterparts~\cite{Sho97}.
This paradigm has been studied through multiple computational models, e.g.,
quantum circuits~\cite{Yao93,NC12}, linear optics~\cite{KLM01}, and
ZX-calculus~\cite{CD11}.
Recently, a particular attention has been paid to the development of
higher-level models, namely quantum programming languages.
Quantum programs can be classified in two categories, depending on their
control flow.
A first approach, known as \emph{classical control}, considered that only
classical data, e.g., the outcome of a measurement, could influence the control
flow~\cite{SV09,GLRSV13,QASM17,FY21}.
On the other hand, \emph{quantum control} allows applying controlled
operations depending on quantum data, yielding superpositions of programs.
Quantum control gained interest as it provides a computational advantage over
classical control~\cite{ACB14,TCM+21,KOY24}, and has been approached through
various languages~\cite{SVV18, DCM22, YVC24, BPP26}.
Thus, a natural issue is to develop and study \emph{hybrid languages}, i.e.,
languages that feature both classical and quantum control
flow/data~\cite{Qunity23, Yin24, DLPZ25, CI26, CHPV26}.

The clear theoretical advantage of quantum computation is however mitigated by
the hardness of hardware implementation.
In particular, each type of hardware has its own set of constraints, and
quantum gates may be more or less costly to implement depending on the
hardware.
Thus, providing a static analysis of the (low-level) resources used by a
program (e.g., the size, depth, or gate count of its corresponding quantum
circuit) is a relevant issue, and has been studied in~\cite{DLMZ10, AMPPZ22,
HPS23, CDL25, FHPS25}.
These works fall within the field of \emph{Implicit Computational Complexity}
(ICC), by providing a language that characterizes exactly a well-known
complexity class.
However, these languages are targeted towards a specific complexity class
through typing or syntactic restrictions, and are thus not a good match for a
generic complexity analysis.

With regard to classical programs, Term Rewriting Systems (TRS)~\cite{BN98} are
a computational model that is sufficiently abstract and simple to allow for
automatic and semi-automatic formal analyses of programs. In particular,
numerous generic techniques exist for analyzing their termination (e.g.,
\cite{AG00,ADLY20}) as well as their runtime and complexity, (e.g.,
\cite{BMM11, AM13}), making them well-suited for studying ICC-related
properties.
TRS have also proven to be a robust computational model that allows for simple
and natural extensions to other paradigms, such as probabilistic
rewriting~\cite{BK02, ADLY20} or higher-order rewriting~\cite{MN98, Yam01}.

The development of an extension to the quantum setting is still missing,
which would allow to reuse existing techniques, to certify properties
on the quantum resources.

\subparagraph*{Contributions. }

This paper fills the gap by introducing the notion of \emph{quantum term rewrite
systems} (QTRS).
One advantage of QTRS, in contrast with aforementioned resource-aware
languages, is that QTRS feature quantum control, and have no fixed initial set
of quantum gates, making them prone to a general resource analysis.
Furthermore, for an expressive fragment, QTRS correspond to quantum circuits,
and their size can be related with the runtime-complexity of the QTRS.
This paves the way for the reuse of any complexity technique on standard TRS.
Our work contains the following contributions:

\begin{itemize}
  \item We introduce QTRS as an extension of TRS, where the semantics
    achieves quantum parallelism~\cite{NC12}, and the type system ensures the
    physicality of the terms.
    Standard properties are also proven, such as confluence
    (Lemma~\ref{lem:confluence}), subject reduction (Lemma~\ref{lem:subred}),
    and a characterization of the normal forms (Lemma~\ref{lem:normal-forms}).
  \item We show that type inference is $\Pi_2^0$-hard, thus undecidable
    in the general case (Theorem~\ref{thm:undecide}).
    However, decidability can be recovered for an expressive subset of terms,
    on which type inference is decidable in polynomial time
    (Theorem~\ref{thm:type-inference}).
  \item We exhibit a fragment of terminating QTRS, which is universal
    for quantum circuits (Theorem~\ref{thm:universality}). 
    In this fragment, QTRS can be compiled into uniform families of quantum
    circuits (Theorem~\ref{thm:compile}).
    Furthermore, for QTRS with a specific recursive structure, the size of
    the circuit can be bounded by the runtime-complexity of the QTRS
    (Theorem~\ref{thm:compile-bound}).
    In particular, compilable QTRS terminating in polynomial time characterize
    exactly the class $\fbqp$ of functions computable in quantum
    polynomial time (Theorem~\ref{thm:fbqp}).
  \item We study how complexity and termination properties can be inferred on
    QTRS by extending any ordering on standard TRS to the QTRS setting
    (Section~\ref{sec:wpo}).
    We show that existing techniques --- e.g., polynomial
    interpretations~\cite{Lan79} (Theorem~\ref{thm:inter}), and dependency
    pairs~\cite{AG00} (Theorem~\ref{thm:dpair}) --- can be adapted and reused
    on QTRS to guarantee termination and complexity properties of a QTRS.
\end{itemize}

\subparagraph*{Related work. }

\newcommand{\foq}{\mathtt{FOQ}}
Quantum complexity properties have already been studied for various languages
with classical control:
\cite{DLMZ10} characterizes (polynomial time) complexity classes on a variant
of quantum lambda calculus;
\cite{AMPPZ22} adapts expectation transformers to a quantum imperative language
to infer expected costs and values and has been implemented in~\cite{MS26};
\cite{CDL25} develops a dependent type system on a variant of the Quipper
circuit description language~\cite{GLRSV13}.

With regard to quantum control, $\foq$~\cite{HPS23} is a hybrid imperative
language characterizing functions terminating in quantum polynomial time, and
has been refined in \cite{FHPS25} to obtain a characterization of quantum
polylogarithmic time.
However, only qubit lists are featured, while QTRS can express all standard
inductive datatypes.
Furthermore, $\foq$ does not feature general recursion, which is necessary to
achieve a fully generic resource analysis.

TRS have been widely studied in order to guarantee ICC-related properties.
Here we provide a non-exhaustive overview.
One panel of studies is obtained through \emph{interpretations}, yielding
characterization of polynomial complexity~\cite{BCMT01}.
\emph{Quasi-interpretations} relax strict
monotonicity conditions by incorporating path orderings~\cite{Der82}, yielding
polynomial time and space characterization~\cite{BMM11}.
Another set of techniques reside in orderings, which yield polynomial
termination via light multiset path order~\cite{Mar03} or polynomial path
orders~\cite{AM13}, while polynomial space can be obtained via Knuth-Bendix
orders~\cite{BM10}.
\emph{Dependency pairs}~\cite{AG00}, originally a tool for studying
termination, have also been
refined to analyze the runtime complexity of the TRS~\cite{HM08, AM09, EGN13}.

All the aforementioned techniques  have also been extended to other paradigms.
For example, termination has been studied for probabilistic TRS~\cite{BG05,
KG23}, conditional TRS~\cite{LM17}, and higher-order TRS~\cite{MN25}.
Similarly, runtime complexity analysis has been performed
on conditional~\cite{KMS17} and higher-order TRS~\cite{KV23}, and for
TRS with a parallel reduction strategy~\cite{BFG24}.
In particular, characterizations of basic feasible functions~\cite{CK89} have
been obtained for higher-order TRS~\cite{HP20, BDLKV25}.
While these complexity results are often aimed at achieving upper bounds, some
work has also been carried to obtain lower bounds~\cite{AFGHS16} or to ensure
constant runtime~\cite{FG18}.
We believe that the very general concept of a quantum TRS presented
in this paper can be adapted to these various paradigms.

\section{Quantum Term Rewrite Systems}
\label{sec:qtrs}
This section introduces \emph{Superposed TRS} (STRS) as an extension of
standard TRS~\cite{BN98} that features superpositions.
The semantics of STRS is defined in term of a call-by-basis reduction
strategy~\cite{DCGMV19} mimicking quantum parallelism~\cite{NC12}.
\emph{Quantum TRS} are then introduced as a typed restriction
ensuring physicality.

\subsection{Syntax}\label{ss:syntax}

Let $\mcX$ be a countable set of variables $\mtx, \mty, \dots$, and let $\mcA$,
$\mcC$, $\mcF$ be three signatures containing \emph{amplitude
symbols} $\mta$, \emph{constructor symbols} $\mtc$, and \emph{function symbols}
$\mtf$, respectively.
Recall that a \emph{signature} is a set of symbols $\mtb$, together with a
fixed \emph{arity} $\arity \in \mbN$.
The four sets are assumed to be disjoint.

First-order TRS are extended by allowing \emph{term superposition}, that is,
having a binary sum for superpositions $(+)$ and multiplication by an
amplitude ($\alpha \cdot$). This is formalized by the following grammar.
\begingroup
\renewcommand{\arraystretch}{1.1}
\[
  \begin{array}{r r @{\quad \Coloneqq \quad} l}
    (\text{Natural numbers}) & \iota & \mtx \gmid \natz \gmid \nats(\iota) \\
    (\text{Amplitudes}) & \alpha &\mta(\iota, \dots,
    \iota)  \gmid \alpha \cdot \alpha \gmid \alpha + \alpha \\
    (\text{Terms}) & \trset \ni\, t & \mtx \gmid \mtc(t, \dots, t) \gmid \mtf(t,
    \dots, t) \gmid \alpha \cdot t \gmid t + t \\
    (\text{Values}) & \trsetval \ni v & \mtc(v, \dots, v) \gmid
    \underline \alpha \cdot v
    \gmid v + v
  \end{array}
\]
\endgroup

The distinction between constructor and function symbols is akin to
\emph{constructor TRS}, i.e., function symbols in $\mcF$ will be evaluated
through rewrite rules, while constructor symbols in $\mcC$ are constants.
In the following, we consider that $\mcC$ contains a unit constructor $()$;
qubit constructor symbols $\zket, \oket$; constructors for natural numbers
$\natz$ and $\nats$; a pair constructor $\consp$ of arity $2$, written $(a, b)
\triangleq \consp(a, b) $; as well as list constructors, $\nil$ of arity $0$,
and $\consht$ of arity $2$, written $h :: t \triangleq \consht(h, t)$.

Amplitude symbols of arity $0$ correspond to fixed scalars. In what follows,
$\mta_\lambda$ will be the arity-$0$ amplitude symbol encoding the
scalar $\lambda \in \mbC$.
Amplitude symbols of greater arity will be used to express complex programs
(see Example~\ref{ex:qft}), where the amplitudes may vary.

Let $\var r$ be the set of variables occurring in any term or amplitude $r$.
If $\var r = \emptyset$, $r$ is said to be \emph{ground}, and written
$\underline r$. Let $\trsetground$ denote the set of ground terms.
A term is called \emph{classical} if it has neither superposition,
nor amplitude; denote $\trsetclassical$ as the set of classical terms.
Remark that a classical term can be viewed as a term of a standard TRS.
The \emph{size} of a term $t$, denoted $\size t$, is the number of occurrences
of variables and symbols inside $t$.

A $\mtf$-\emph{rewrite rule}, also called simply \emph{rewrite rule}, is a pair
$l \to r$, with $l,r \in \trset$, such that $l = \mtf(p_1, \dots, p_n)$, where
$p_i$ is a classical term with no function symbol, called \emph{pattern}, and
$\var l \supseteq \var r$.
A set of rewrite rules is said to be \emph{orthogonal}, if the rewrite rules
are \emph{left-linear}, i.e., each variable occurs at most once in the
left-hand side of each rule, and \emph{non-overlapping}, i.e., no pair of
left-hand side of rules can overlap.

\begin{definition}
  A \emph{Superposed Term Rewrite System (STRS)} $\mfR$ is a tuple
  $\trsyntax_\mtf$, where $\mcR$ is a set of orthogonal rewrite rules $l \to r$
  with $l, r \in \trset$, and $\mtf \in \mcF$ is called the \emph{main function
  symbol}.
  The set of rules of $\mfR$ will be denoted by $\trsrule \triangleq \mcR$.
\end{definition}

\begin{example}\label{ex:cliffordt}
  Quantum gates can be written as a STRS, by defining one rule for each action
  on a basis element.
  For example, define the STRS $\mfR$ containing the rewrite rules $\mcR$ below.
  $\mfR$ expresses the Clifford + T set of gates, which is known to
  be \emph{universal}~\cite{NC12}, i.e., any quantum gate on $n$ qubits can be
  approximated by a sequence of gates from this set.
  \[
    \mcR = \left\{
      \begin{array}{r@{\hspace{3pt}} l r@{\hspace{3pt}} l}
        \mathtt{X}(\zket) &\to \oket &
        \mathtt{X}(\oket) &\to \zket \\
        \mathtt{T}(\zket) &\to \zket &
        \mathtt{T}(\oket) &\to \mta_{e^{i \pi / 4}} \cdot \oket \\
        \mathtt{H}(\zket) &\to \stwo \cdot \zket + \stwo \cdot \oket &
        \mathtt{H}(\oket) &\to \stwo \cdot \zket + \mta_{-1}
        \cdot \stwo \cdot \oket \\
        \mathtt{CNOT}(\zket, \mtq) & \to (\zket, \mtq) &
        \mathtt{CNOT}(\oket, \mtq) &\to (\oket, \mathtt{X}(\mtq)) \\
      \end{array}
    \right\}
  \]
\end{example}

\begingroup
\newcommand{\sep}{\\[0.5ex]}
\newcommand{\ctrl}{\mathtt{ctrl}}
\newcommand{\phase}{\mathtt{phase}}
\newcommand{\rot}{\mathtt{rot}}
\newcommand{\rec}{\mathtt{rec}}
\newcommand{\inv}{\mathtt{inv}}
\newcommand{\qft}{\mathtt{qft}}
\begin{example}\label{ex:qft}
  One can encode the Quantum Fourier Transform as a STRS with main function
  symbol $\mathtt{qft}$ and the following set of rules, where $\mta(n)$
  corresponds to the scalar $e^{2i \pi / 2^n}$,
  and $(t_1, \dots, t_n)$ corresponds to $n-1$ pair constructors.
  The second argument of the symbol $\rec$ allows to define
  the behavior of $\rec$ without an additional function symbol.
  \[
    \scalebox{0.9}{$
      \left\{
        \begin{aligned}
          \inv(\nil, l) &\to l &
          \inv(h::t, l) &\to \inv(t, h::l) \sep
          \phase(\zket, n) &\to \zket &
          \phase(\oket, n) &\to \mta(n) \cdot \oket \\
          \ctrl((q, \zket, t, l), n) &\to (q, t, \zket::l) &
          \ctrl((q, \oket, t, l), n) &\to (\phase(q, n), t, \oket::l) \\
          \rot((q, \nil, l), n) &\to q :: \inv(l, \nil) &
          \rot((q, h::t, l), n) &\to \rot(\ctrl((q, h, t, l), n), \nats(n)) \\
          \rec(\nil, b) &\to \nil &
          \rec(h::t, \natz) &\to
          \rec(\rot((\mathtt{Had}(h), t, \nil), \nats(\nats(\natz))),
          \nats(\natz))  \\
          \rec(h::t, \nats(b)) &\to h::\rec(t, b) &
          \qft(l) &\to \inv(\rec(l, \natz), \nil)
        \end{aligned}
      \right\}
    $}
  \]
\end{example}
\endgroup

\subsection{Semantics}\label{sec:semantics}
Before introducing the semantics of a STRS, several steps must be performed.
First, the Hilbert structure of quantum spaces requires us to
consider terms modulo an \textit{equivalence relation} on
the vector space generated by term superpositions.
to ensure reductions only consider meaningful terms.
Third, a specific \textit{evaluation strategy} will be fixed for
classical terms.

\noindent \textit{Vector space structure.}
Defining a vector space structure requires checking properties on amplitudes,
e.g., if an amplitude corresponds semantically to $0$.
To that end, each amplitude symbol $\mta \in \mcA$ comes with a total function
$\inter \mta : \mbN^{\arity[\mta]} \to \mbC$ called \emph{interpretation}.
The interpretation is then defined on ground amplitudes by canonical extension:
\begin{align*}
  \inter{\natz} &\triangleq 0  &
  \inter{\nats(\underline \iota)} &\triangleq \inter{\underline \iota} + 1 &
  \inter{\mta(\underline{\iota_1},\dots, \underline{\iota_n})} &\triangleq
  \inter\mta(\inter{\underline{\iota_1}}, \dots, \inter{\underline{\iota_n}}) \\
  \inter{\underline{\alpha_1} + \underline{\alpha_2}} &\triangleq
  \inter{\underline{\alpha_1}} + \inter{\underline{\alpha_2}} &
  \inter{\underline{\alpha_1} \cdot \underline{\alpha_2}} &\triangleq
  \inter{\underline{\alpha_1}} \times \inter{\underline{\alpha_2}}
\end{align*}

Recall that a substitution is a map $\sigma: \mcX \to \trsetval$. Let $r
\sigma$ be the application of the substitution $\sigma$ to the  term or
amplitude $r$.
The interpretation can be extended to arbitrary amplitudes as follows:
define $\inter \alpha \triangleq \lambda$ if for any substitution $\sigma$, if
$\alpha \sigma$ is a ground amplitude, then $\inter{\alpha \sigma} = \lambda$.
Two amplitudes $\alpha, \beta$ are \emph{equivalent} if
$\inter{\alpha}= \inter{\beta}$.

\begingroup
\renewcommand{\arraystretch}{1.5}
\begin{figure*}[t]
  \begin{center}
    \begin{prooftree}
      \infer0{t_1 + t_2 \equiv t_2 + t_1 \vphantom{(t_1)}}
    \end{prooftree}
    \qquad
    \begin{prooftree}
      \infer0{t_1 + (t_2 + t_3) \equiv (t_1 + t_2) + t_3}
    \end{prooftree}
    \qquad
    \begin{prooftree}
      \hypo{\inter \alpha = 0}
      \infer1{s + \alpha \cdot t \equiv s\vphantom{(t_1)}}
    \end{prooftree}
    \qquad
    \begin{prooftree}
      \hypo{\inter \alpha = 1}
      \infer1{\alpha \cdot t \equiv t\vphantom{(t_1)}}
    \end{prooftree}
    \\
    \begin{prooftree}
      \hypo{\phantom{t}}
      \infer1{ \alpha \cdot (\beta \cdot t) \equiv (\alpha \cdot \beta) \cdot t}
    \end{prooftree}
    \qquad
    \begin{prooftree}
      \hypo{\phantom{t}}
      \infer1{ \alpha \cdot (t_1 + t_2) \equiv \alpha \cdot t_1 +
      \alpha \cdot t_2}
    \end{prooftree}
    \qquad
    \begin{prooftree}
      \hypo{\phantom{t}}
      \infer1{  \alpha \cdot t + \beta \cdot t \equiv (\alpha + \beta) \cdot t}
    \end{prooftree}
    \\
    \begin{prooftree}
      \hypo{\phantom{t}}
      \infer1{  \mtb(\dots, \alpha \cdot t, \dots) \equiv \alpha
      \cdot  \mtb(\dots, t, \dots)}
    \end{prooftree}
    \qquad
    \begin{prooftree}
      \hypo{\phantom{t}}
      \infer1{ \mtb(\dots, t_1 + t_2, \dots) \equiv \mtb(\dots, t_1,
      \dots)+\mtb(\dots, t_2, \dots) }
    \end{prooftree}
    \\[1em]
    \begin{prooftree}
      \hypo{s \equiv t}
      \infer1{C[s] \equiv C[t]}
    \end{prooftree}
  \end{center}
  \caption{Equivalence relation $\equiv$}
  \label{tab:equiv}
\end{figure*}
\endgroup

A ($1$-hole) \emph{context} is a term $C$ containing exactly one
occurrence of the special symbol $\diamond$ and defined by the
following grammar.
\[
  (\text{Contexts}) \qquad C \Coloneqq \, \diamond \gmid \mtc(t_1, \dots, C,
  \dots, t_n) \gmid \mtf(t_1, \dots, C, \dots, t_n) \gmid \alpha \cdot C \gmid
  C + t
\]
Let $C[t]$ be the term obtained by substituting $t$ to  $\diamond$ in $C$.
This notion can be generalized to \emph{$n$-hole contexts} $C[\diamond_1,
\dots, \diamond_n]$, which are terms containing one occurrence of each symbol
$\diamond_1, \dots, \diamond_n$.
A \emph{classical context} is a context with no amplitude ($\alpha \cdot$), nor
superposition ($+$).

\emph{Term equivalence} $\equiv$ is defined as the smallest equivalence
relation over $\trset$ obtained from the rules of Figure~\ref{tab:equiv}.
The two first lines of Figure~\ref{tab:equiv} express the rules of a vector
space structure and allow us to define unambiguously general summation as
$\sum_{i=1}^n \alpha_i \cdot t_i \triangleq \alpha_1 \cdot t_1 + (\dots +
\alpha_n \cdot t_n)$.
The remaining lines of Figure~\ref{tab:equiv} highlight the
linearity of quantum programs.

\noindent \textit{Canonical forms.}
Quantum parallelism will be defined using a restricted form of superposition to
avoid non-physical reductions, e.g., reducing a term of amplitude $0$. This is
formalized below by the notion of \emph{canonical form}.

\begin{definition}\label{def:canonical}
  A \emph{canonical form} of a term $t$ is any term $\sumdef$ such that $t
  \equiv \sumdef$, where $t_i$ are pairwise distinct classical terms and
  $\inter{\alpha_i} \neq 0$. The set of canonical forms is denoted by $\can$.
\end{definition}

\noindent \textit{Evaluation strategy.}
Finally, the reduction of classical terms will be defined by fixing a given
evaluation strategy. This is performed with the help of
\emph{evaluation contexts}.
\[
  (\text{Eval. contexts}) \quad
  E \Coloneqq \diamond
  \gmid \mtc({v_1}, \dots, {v_{i-1}}, E, t_{i+1}, \dots, t_n)
  \gmid \mtf({v_1}, \dots, {v_{i-1}}, E, t_{i+1}, \dots, t_n)
\]

\begin{figure*}[t]
  \centering
  \begin{prooftree}
    \hypo{s \equiv s'}
    \hypo{s' \toq t'}
    \hypo{t' \equiv t}
    \infer3[(E)]{s \toq t \vphantom{\sum_{i=1}^n}}
  \end{prooftree}
  \quad
  \begin{prooftree}
    \hypo{\candef[s]}
    \hypo{\forall i,\ s_i \toqif t_i}
    \hypo{\exists i,\ s_i \toq t_i}
    \infer3[(Q)]{\sumdef[s] \toq \sumdef[t]}
  \end{prooftree}
  \\[2ex]
  \begin{prooftree}
    \hypo{l \to r \in \trsrule}
    \hypo{E[l \sigma] \in \trsetclassical}
    \infer2[(C)]{E[l\sigma] \toq E[r\sigma]}
  \end{prooftree}
  \vspace{10pt}
  \caption{
    Inference rules of the relation $\toq\ \subseteq \trsetground \times
    \trsetground$
  }
  \label{tab:toq}
\end{figure*}

The semantics of a STRS $\mfR$ is then defined inductively in
Figure~\ref{tab:toq} as the relation $\toq$ on ground terms.
In Rule (Q), $\toq$ relies on the intermediate relation $\toqif$, defined by
\[
  s \toqif t \iff (s \in \nf \wedge t = s) \vee s \toq t
\]
where $\nf \triangleq \{s \in \trsetground \gmid \nexists\,t,\, s \toq t\}$
is the set of \emph{normal forms}.

We will now provide some insights into the rules of Figure~\ref{tab:toq}.
Rule (E) allows one to perform reduction with respect to the
equivalence relation on the vector space.
Rule (Q) reduces superpositions only if expressed as canonical forms, ensuring
that only meaningful reductions are fired.
Furthermore, $\toqif$ implies that any term that is able to reduce will reduce.
Therefore, $\toq$ achieves quantum parallelism~\cite{NC12}, similarly
to~\cite{GMM20}.
An alternative reduction strategy not implementing parallelism would correspond
to classical simulation, leading any program acting on $n$ qubits to be
evaluated in $2^n$ reduction steps.
Finally, Rule (C) considers classical terms: variables are
substituted by classical values, thus $\toq$ follows a \emph{call-by-basis}
strategy~\cite{DCGMV19}. The definition of evaluation contexts also
imposes an evaluation of terms from left to right.
While such choice is arbitrary, having a fixed strategy is required to obtain
complexity results in Section~\ref{sec:complexity}.

Given a STRS $\mfR$, define $\toqs$ as the reflexive and transitive closure of
$\toq$.
We say that $t$ \emph{terminates in at most $k$ steps}, and write $t
\toq^{\leq k} s$, if all chains of reduction starting from $t$ reach a normal
form in at most $k$ steps.
A term \emph{terminates} if it terminates in at most $k$ steps for
some $k \in \mbN$. A STRS $\mfR$ is said to be \emph{terminating}, if any term
$t \in \trsetground$ terminates.

\begingroup
\newcommand{\sgn}{\mathtt{sgn}}
\begin{example}\label{ex:had-gen}
  Define the unitary matrix $Q_n$, representing a single qubit gate,
  for any $n \in \mbN$, along the rewrite rules of its corresponding STRS:
  \[
    Q_n \triangleq \fractwo \cdot
    \begin{pmatrix}
      1 & e^{i \pi / n} \\
      e^{-i \pi / n} & -1
    \end{pmatrix}
    \qquad
    \mcR \triangleq \left\{
      \begin{aligned}
        \mtf(n, \zket) &\to \sgn^+ \cdot \zket + \mta^+(n) \cdot \oket \\
        \mtf(n, \oket) &\to \mta^-(n) \cdot \zket + \sgn^- \cdot \oket
      \end{aligned}
    \right\}
  \]
  where $ \inter{\sgn^\pm} \triangleq \pm \fractwo$ and
  $\inter{\mta^\pm}(x) \triangleq \fractwo e^{\pm i \pi / x}$.
  One can verify that applying twice $\mtf(n, \cdot)$ is equivalent to the
  identity, e.g., if applied to $\zket$, for any ground natural number
  $\underline{\iota}$:
  \[
    \begin{aligned}
      \mtf(\underline{\iota}, \mtf(\underline{\iota}, \zket))
      &\toq \mtf(\underline{\iota}, \sgn^+ \cdot \zket +
      \mta^+(\underline{\iota}) \cdot \oket)
      \equiv \sgn^+ \cdot \mtf(\underline{\iota}, \zket) +
      \mta^+(\underline{\iota}) \cdot \mtf(\underline{\iota}, \oket) \\
      &\toq \left(\sgn^+ \cdot \sgn^+ + \mta^+(\underline{\iota}) \cdot
      \mta^-(\underline{\iota}) \right) \cdot \zket +
      \left(\sgn^+ \cdot \mta^+(\underline{\iota}) +
        \mta^+(\underline{\iota}) \cdot \sgn^-
      \right) \oket \\
      &\equiv \zket
    \end{aligned}
  \]
  The first and second reductions respectively used the rules (C)
  and (Q) from Figure~\ref{tab:toq}.
  The last equivalence can be obtained by seeing the reduced term as
  $\alpha \cdot \zket + \beta \cdot \oket$, and checking that $\inter \alpha =
  1$ and $\inter \beta = 0$, e.g., see below for $\inter \alpha = 1$:
  \[
    \begin{aligned}
      \inter{\alpha} &= \inter{\sgn^+ \cdot \sgn^+ + \mta^+(\underline{\iota})
      \cdot \mta^-(\underline{\iota})}
      &= \inter{\sgn^+}\times \inter{\sgn^+} +
      \inter{\mta^+}(\underline{\iota}) \times
      \inter{\mta^-}(\underline{\iota}) =1\\
    \end{aligned}
  \]
  Therefore, using rule (E) from Figure~\ref{tab:toq}, $\mtf(\underline{\iota},
  \mtf(\underline{\iota}, \zket))$ rewrites in two steps to $\zket$.
\end{example}
\endgroup

\subsection{Type System for Physicality}

STRS extend standard TRS by allowing any general term superposition.
However, quantum programs need to satisfy additional properties to be physically
sound, e.g., linear use of data or norm preservation.
This is tackled by introducing a type system, both on terms and amplitudes,
allowing us to then define Quantum TRS (QTRS) as the STRS with a
physical reality.

Towards that end, a unique \emph{typed signature} $\sign \mtb \triangleq T_1
\times \dots \times T_{\arity} \to T$ is assigned to each symbol $\mtb \in
\symbols$, where $T, T_1,\ldots,T_{\arity}$ are taken from a countable set of
basic types. The types $\mathbb 1$ for unit, $\qbit$ for qubits, or $\nat$ for
natural numbers are examples of basic types.
If $\mtb$ is of arity $0$, its signature is written simply $\sign \mtb = T$.
The constructor symbols introduced in Section~\ref{ss:syntax} can be
equipped with the following signatures, for any basic type $T, T_1, T_2$:
\[
  \begin{aligned}
    \sign{()} &\triangleq \mathbb 1 &
    \sign{\zket} = \sign{\oket} &\triangleq \qbit \\
    \sign{\natz} &\triangleq \nat &
    \sign{\nats} &\triangleq \nat \to \nat \\
    \sign{\nil_T} &\triangleq \typelist &
    \sign{\consht_{\typelist} } &\triangleq T \times \typelist \to \typelist \\
    \sign{\consp_{T_1, T_2}} &\triangleq T_1 \times T_2 \to \times_{T_1, T_2}&&
  \end{aligned}
\]
Note that polymorphic types (e.g., lists or pairs) need to have multiple
constructors defined (one for each possible input type).
However, when clear from the context, the previous notations are used,
e.g., $h::t$ and $(h,t)$.
To ensure linearity of quantum data, the
set of basic types is split disjointly between \emph{quantum types} $Q \in \mtQ$
and \emph{classical types} $C \notin \mtQ$. The set of quantum types
$\mtQ$ is defined below:
\[
  \mtQ \triangleq \set{\qbit} \cup \set{
    T \gmid \exists\,\mtc \in \mcC,\exists\,i \in \mbN,
    \signdef[\mtc] \wedge T_i \in \mtQ
  }
\]
A quantum type is either the qubit type $\qbit$, or any type having (at least)
one constructor symbol with a quantum type in its signature.

\begingroup
\renewcommand{\arraystretch}{3}
\begin{figure*}[t]
  \[
    \centering
    \begin{array}{c}
      \begin{prooftree}
        \hypo{\forall i,\ \typing \Gamma{}{\iota_i} \nat}
        \hypo{\mta \in \trsamp}
        \hypo{\arity[\mta] = n}
        \infer3{\typingamp \Gamma {\mta(\iota_1, \dots, \iota_n)}}
      \end{prooftree}
      \qquad
      \begin{prooftree}
        \hypo{\typingamp \Gamma {\alpha_1}}
        \hypo{\typingamp \Gamma {\alpha_2}}
        \infer2{\typingamp \Gamma {\alpha_1 + \alpha_2} \vphantom{(\alpha_i^c)}}
      \end{prooftree}
      \qquad
      \begin{prooftree}
        \hypo{\typingamp \Gamma {\alpha_1}}
        \hypo{\typingamp \Gamma {\alpha_2}}
        \infer2{\typingamp \Gamma {\alpha_1 \cdot \alpha_2}
        \vphantom{(\alpha_i^c)}}
      \end{prooftree}
      \\[1ex]
      \begin{prooftree}
        \hypo{x \in \trsvar}
        \infer1{\typing{\Gamma, x : C}{} x C}
      \end{prooftree}
      \qquad
      \begin{prooftree}
        \hypo{x \in \trsvar}
        \infer1{\typing{\Gamma}{x : Q} x Q}
      \end{prooftree}
      \qquad
      \begin{prooftree}
        \hypo{\typing \Gamma \Delta s T}
        \hypo{s \equiv t}
        \hypo{\fv t \subseteq \typedom{\Gamma \cup \Delta}}
        \infer3{\typing \Gamma \Delta t T}
      \end{prooftree}
      \\
      \begin{prooftree}
        \hypo{\signdef}
        \hypo{\forall i,\ \typing{\Gamma}{\Delta_i}{t_i}{T_i}}
        \hypo{\mtb \in \trscons \biguplus \trsfunc}
        \infer3{
          \typing \Gamma {\Delta_1, \dots, \Delta_n}{\mtb(t_1, \dots, t_n)} T
        }
      \end{prooftree}
      \\
      \begin{prooftree}
        \hypo{\forall i,\ \typing \Gamma \Delta {t_i} Q}
        \hypo{\forall i,\ \typingamp \Gamma {\alpha_i}}
        \hypo{\forall \sigma \in \subcont,\, \sum_{i=1}^n
        \size{\inter{\alpha_i\sigma}}^2 = 1}
        \hypo{\forall i \neq j,\ t_i \orthored t_j}
        \infer4{\typing \Gamma \Delta{\sumdef} Q}
      \end{prooftree}
    \end{array}
  \]
  \caption{Typing rules of terms and amplitudes}
  \label{tab:typing}
\end{figure*}
\endgroup

\emph{Typing contexts} $\Gamma, \Delta$ are sets of the shape $\set{x_1 : T_1,
\dots, x_n: T_n}$, where $x_i$ are pairwise distinct variables, and
$T_i$ are basic types.
The \emph{domain} of a context is defined as its set of variables, i.e.,
$\typedom \Gamma \triangleq \{x \gmid x:  T \in \Gamma\}$.
Whenever we write $\Gamma, \Delta$ or $\Gamma;\Delta$, it is always assumed that
$\Gamma$ and $\Delta$ share no variables, i.e., $\typedom \Gamma \cap \typedom
\Delta = \varnothing$.
Given a typing context $\Theta$, we say that a substitution $\sigma$ is a
\emph{$\Theta$-context substitution}, if for any $x : T \in \Theta$, $x \sigma$
is a well-typed value of type $T$.
Let $\subcont[\Theta]$ be the set of $\Theta$-context substitutions.

An \emph{amplitude typing judgment} is written $\typingamp \Gamma \alpha$,
indicating that $\alpha$ is a well-typed amplitude, with respect to $\mfR$,
under (non-linear) context $\Gamma$.
A \emph{term typing judgment} is written $\typingdef$, indicating that $t$ is
well-typed with respect to $\mfR$ and basic type $T$, under non-linear context
$\Gamma$ and linear context $\Delta$; when $\Gamma = \Delta = \varnothing$, we
may write $\typingclosed t T$.

These judgments are derived inductively following the
typing rules of Figure~\ref{tab:typing}, for a given STRS $\mfR$.

A term typing judgment aims to type normalized terms only.
Towards that end, an \emph{orthogonality predicate} on terms is introduced.
It relies on the notion of \emph{orthogonal Kronecker product}, which is a
partial map on classical terms $\delta_\perp: \trsetclassical \times
\trsetclassical \to \set{0,1}$, where $\kronweird t t \triangleq 1$, and
$\kronweird s t \triangleq 0$ if there exist a $n+1$-hole context $C$, $q \in
\{0,1\}$, terms $s_1, \dots, s_n$ and $t_1, \dots, t_n$ such that $s = C[\ket
q, s_1, \dots, s_n]$ and $t = C[\ket{1 - q}, t_1, \dots, t_n]$.

\begin{definition}[Orthogonality]\label{def:orthogonality}
  Let $\mfR$ be a STRS.
  Let $\typing \Gamma \Delta s T$ and $\typing \Gamma \Delta t T$ be two terms.
  We say that $s$ and $t$ are \emph{orthogonal}, written $s \orthored t$, if for
  any substitution $\sigma \in \subcont[\Gamma \cup \Delta]$, $s
  \sigma \toqs \candef[v]$, $t \sigma \toqs \candefb w$, and:
  \[
    \sum_{i=1}^n \sum_{j=1}^m \inter{\alpha_i} \inter{\beta_j}^*
    \kronweird{v_i}{w_j} = 0
  \]
  where, given $\lambda \in \mathbb{C}$, $\lambda^*$ is the complex
  conjugate of  $\lambda$.
\end{definition}

Note that the above definition requires $\delta_\perp$ to be defined
for all $(v_i, w_j)$.
Typing a superposition requires the amplitudes to correspond
semantically to a normalized complex vector, and that the normal forms of the
terms are pairwise orthogonal.

Given a symbol $\mtb \in \trscons \biguplus \trsfunc$, define
$\symbin$ as the set of \emph{well-typed inputs of
$\mtb$}, i.e., $\overline{v}  \triangleq (v_1, \dots, v_n) \in \symbin$ if
$\signdef[\mtb]$ and $\typingclosed{v_i}{T_i}$, for all $i$.

Define $\overline v \orthored \overline w$ if $v_i \orthored w_i$ for some $i$.
While typing imposes that terms are normalized, we still need to ensure that
quantum programs preserve the norm, i.e., are \emph{isometries}.

\begin{definition}
  Given a STRS $\mfR$ and a function symbol $\mtf \in \trsfunc$, we say that
  $\mtf$ is an \emph{isometry}, if for all inputs $\overline v, \overline w \in
  \funcin$, $\overline v \orthored \overline w$ implies
  $\mtf(\overline v) \orthored \mtf(\overline w)$.
\end{definition}

Finally, quantum term rewrite systems are defined as STRS whose rewrite rules
are well-typed, and where function symbols are isometries.

\begin{definition}\label{def:qtrs}
  A STRS $\mfR$ is called a \emph{Quantum Term Rewrite System (QTRS)} if all
  function symbols are isometries and for any rule $l \to r \in
  \trsrule$, typing contexts $\Gamma, \Delta$, and type $T$:
  \[
    \typingdef[l] \implies \typingdef[r]
  \]
\end{definition}

\begin{example}
  The STRS from Example~\ref{ex:had-gen} can be showed to be a QTRS.
  Indeed, each side of the rewrite rule can be typed identically.
  Furthermore, as both rewrite rules of $\mtf$ reduce to
  orthogonal terms, $\mtf$ can be shown to be an isometry.
  This is formalized in Appendix~\ref{app:additional}.
\end{example}

\section{Main Results}
\label{sec:results}
This section is devoted to showing that QTRS are well-behaved and
enjoy standard properties (confluence, subject reduction, ...).
It also discusses the hardness of type inference, which is undecidable in
general, and shows that decidability can be recovered under some slight
restrictions.
Finally, we exhibit an expressive fragment of QTRS, which has the same
computational power as quantum circuits: it is universal, and can be compiled
to quantum circuits. Based on this fragment, we characterize the class of
functions computable in quantum polynomial time, known as $\fbqp$~\cite{BV97}.

\subsection{Standard Properties of QTRS}
\label{sec:properties}

Because of orthogonality and the evaluation strategy, STRS are
confluent up to equivalence.

\begin{restatable}[Confluence]{lemma}{confluence}\label{lem:confluence}
  Let $\mfR$ be a STRS, and let $t$ be a term.
  Suppose that there exist two terms $t_1, t_2$ such that $t \toqs t_1$ and $t
  \toqs t_2$. Then, there exist two terms $t_3, t_4$ such that $t_1 \toqs t_3$,
  $t_2 \toqs t_4$ and $t_3 \equiv t_4$.
\end{restatable}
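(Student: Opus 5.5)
The plan is to reduce the statement to a diamond property of a single step, up to $\equiv$, and then tile.

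\textbf{Step 1: classical determinism.} On classical ground terms, rule (C) is deterministic. Suppose $t = E[l\sigma] = E'[l'\sigma']$ for two evaluation contexts and two rule instances. Evaluation contexts impose leftmost-innermost positions: every argument to the left of the hole is a value, and values contain no function symbols. Hence the redex position is unique, so $E = E'$. At that position, the patterns are constructor terms and the rules are non-overlapping, so only one rule applies. Left-linearity then fixes $\sigma$ uniquely, so $E[r\sigma] = E'[r'\sigma']$. Consequently every classical term has at most one classical successor, and whether a classical term is a normal form is well defined.

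\textbf{Step 2: canonical forms and the superposition rule.} Next I would show that canonical forms are unique up to permutation of summands and equivalence of amplitudes. Writing $\sum_i \alpha_i\cdot s_i \equiv \sum_j \beta_j \cdot s'_j$ with pairwise distinct classical $s_i$ and nonzero amplitudes, the multisets $\{s_i\}$ and $\{s'_j\}$ coincide, with $\inter{\alpha_i}=\inter{\beta_{\pi(i)}}$. To prove this, I would define a linear interpretation of ground terms as finitely supported functions from classical terms to $\mbC$. Each rule of Figure~\ref{tab:equiv} preserves this interpretation, using multilinearity for the symbol rules. Two canonical forms with the same interpretation must then have the same support and the same coefficients.

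Combining this uniqueness with Step 1 and the $\toqif$ clause of rule (Q), which forces every reducible $s_i$ to reduce, I get the following. For any ground $t$, every one-step reduct through (Q) is $\equiv$-equivalent to $\sum_i \alpha_i \cdot t_i$, where $t_i$ is the unique successor of $s_i$ or $s_i$ itself. This requires checking that rule (E) cannot create distinct results. Chains of (E) and (Q) collapse to a single (Q) step applied to some term equivalent to the canonical form, so the uniqueness above applies. Terms with no canonical form, such as $\alpha_0\cdot s$ with all amplitudes zero, are still treated via $\equiv$, since $0$-amplitude summands vanish.

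This gives determinism up to equivalence: if $t\toq t_1$ and $t \toq t_2$, then $t_1\equiv t_2$.

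\textbf{Step 3: tiling.} From determinism modulo $\equiv$, I would prove by induction on the lengths of $t\toqs t_1$ and $t\toqs t_2$ that the shorter derivation can be extended to match the longer one up to $\equiv$. The key fact is that $\toq$ is closed under $\equiv$ on both sides, which is rule (E) itself: if $t_1\equiv t_1'$ and $t_1\toq u$, then $t_1'\toq u$. Hence equivalent terms have the same reducts up to $\equiv$. If $t\toq^n t_1$ and $t\toq^m t_2$ with $n\le m$, then $t_1$ reduces in $m-n$ steps to some $t_3\equiv t_2$, and I take $t_4 = t_2$. The hypothesis on $t_2$ is discharged symmetrically.

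\textbf{Main obstacle.} The main obstacle is Step 2, specifically the uniqueness of canonical forms modulo $\equiv$. This requires showing that $\equiv$ is sound for the linear interpretation, in particular that the zero-amplitude and unit rules together with the context rule cannot identify distinct canonical forms. I would first establish soundness of each axiom of Figure~\ref{tab:equiv} with respect to the interpretation, and then use linear independence of distinct classical terms.
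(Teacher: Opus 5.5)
Your proposal is correct and follows essentially the paper's route. Your linear interpretation is the paper's quantity $\theta_s$ (used for Lemma~\ref{lem:canonical-unicity}), and your Steps 1--2 together amount to Lemma~\ref{lem:confluence-equiv}, i.e.\ one-step determinism modulo $\equiv$, which the paper proves by case analysis on (C)/(E)/(Q) derivations using evaluation contexts and orthogonality for (C); your Step 3 is its semi-confluence argument closing via rule (E), and the one detail the paper makes explicit that you leave implicit is that replacing $\alpha_i\cdot t_i$ by $\beta_{\pi(i)}\cdot t_i$ when $\inter{\alpha_i}=\inter{\beta_{\pi(i)}}$ must itself be derived in $\equiv$, which the paper does with an amplitude $\mta_{min}$ of value $-1$.
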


Typing is preserved through reduction for terminating or classical terms.

\begin{restatable}[Subject reduction]{lemma}{subred}\label{lem:subred}
  Let $\mfR$ be a QTRS, and let $\typingclosed s T$ be a well-typed term that
  is either terminating or classical.
  If $s \toq t$, then $\typingclosed t T$.
\end{restatable}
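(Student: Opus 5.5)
We argue by induction on the derivation of $s \toq t$ from Figure~\ref{tab:toq}, following the three rules (E), (C) and (Q). First we would prove two auxiliary facts. The first is a \emph{substitution lemma}: if $\typing \Gamma \Delta u T$ and $\sigma \in \subcont[\Gamma \cup \Delta]$ maps each variable to a closed well-typed value of the right type, then $\typingclosed{u\sigma}{T}$. The second is a \emph{context decomposition lemma}: if $\typingclosed{E[u]}{T}$ for an evaluation context $E$, then $\typingclosed{u}{T'}$ for some $T'$, and replacing $u$ by any $u'$ with $\typingclosed{u'}{T'}$ keeps $E[u']$ of type $T$. Both go by induction on typing derivations. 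Because of the equivalence rule in Figure~\ref{tab:typing}, we would first normalize derivations so that the equivalence rule is pushed to the root. We expect to use the symbol rule, which splits the linear context as $\Delta_1, \dots, \Delta_n$; for ground terms this splitting is trivial.

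Rule (E) is immediate. From $\typingclosed s T$ and $s \equiv s'$, the typing rule for $\equiv$ gives $\typingclosed{s'} T$, since ground terms have no free variables. The induction hypothesis applies to $s' \toq t'$, because $s'$ is still terminating or classical: termination and classicality are invariant under $\equiv$ up to the evident canonical-form reading. Then $t' \equiv t$ closes the case.

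For rule (C), we have $s = E[l\sigma]$ with $l \to r \in \trsrule$. Decompose using the context lemma to get $\typingclosed{l\sigma}{T'}$. Since the patterns of $l$ are classical and function-free, inverting this typing yields contexts $\Gamma;\Delta$ with $\typing \Gamma \Delta l {T'}$ and $\sigma \in \subcont[\Gamma\cup\Delta]$. Here linearity holds because quantum variables occur once by left-linearity. Definition~\ref{def:qtrs} then gives $\typing \Gamma \Delta r {T'}$. The substitution lemma gives $\typingclosed{r\sigma}{T'}$, and recomposition yields $\typingclosed{E[r\sigma]}T$. Classical $s$ always reduces by (C) or (E), so the classical hypothesis is only needed here.

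The main obstacle is rule (Q), where $s = \sumdef[s]$ is canonical and $s_i \toqif t_i$. If $n=1$ we reduce to the previous cases. Otherwise the last typing rule of $s$ (after normalization) is the superposition rule, with $T = Q$. Each $s_i$ is then typed $Q$, the amplitudes satisfy $\sum \size{\inter{\alpha_i}}^2=1$, and $s_i \orthored s_j$. Each $s_i$ is classical, so the induction hypothesis gives $\typingclosed{t_i}{Q}$; the amplitudes and the norm condition are unchanged. It remains to show $t_i \orthored t_j$. Here we use termination of $s$, hence of each $s_i$. By Definition~\ref{def:orthogonality} with empty context, $s_i \toqs v$ and $s_j \toqs w$ for normal forms whose $\delta_\perp$-inner product vanishes. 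By confluence (Lemma~\ref{lem:confluence}), $t_i$ reduces to a term equivalent to that same normal form, and likewise for $t_j$. So the same canonical normal forms witness $t_i \orthored t_j$, up to checking that the inner product is invariant under $\equiv$ on canonical forms. This last point is where care is needed: canonical forms are unique up to permutation and merging of equal basis terms, and that preserves the sum. We then reapply the superposition rule to obtain $\typingclosed{\sumdef[t]}{Q}$.
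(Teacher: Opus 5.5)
Your overall structure matches the paper's: induction on $\toq$, with the (C) case handled by typing $l$, applying the QTRS condition and using a substitution lemma. The (Q) case, however, rests on a false claim. You assume that, after normalization, the canonical form $\sumdef[s]$ is typed by a superposition rule whose premises are exactly the canonical summands $s_i$, so that $s_i \orthored s_j$.

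Normalization (the paper's Lemma~\ref{lem:tree-canonical}) gives less than that. It yields one (equiv) at the root, above which a \emph{tree} of superposition rules over classical leaves types some $s''\equiv s$. The canonical form only appears after developing that tree, merging equal leaves and discarding cancelled ones, and its summands need not be pairwise orthogonal. With the rules of Example~\ref{ex:cliffordt}, take
\[
  s'' = \stwo\cdot\mathtt{H}(\zket) + \stwo\cdot(\stwo\cdot\zket + \mta_{-1}\cdot\stwo\cdot\oket).
\]
This is typed $\qbit$ by the superposition rule: $\mathtt{H}(\zket)$ reduces to $\stwo\cdot\zket+\stwo\cdot\oket$, whose inner product with the second summand is $\frac12-\frac12=0$. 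So its canonical form $s=\stwo\cdot\mathtt{H}(\zket)+(\stwo\cdot\stwo)\cdot\zket+(\stwo\cdot\mta_{-1}\cdot\stwo)\cdot\oket$ is well-typed, terminating, and reduces by (Q). Yet $\mathtt{H}(\zket)$ and $\zket$ are not orthogonal (their inner product is $\fractwo$), so no superposition rule has the $s_i$ as premises. Your final step fails for the same reason. The reduct $t$ has summands $\stwo\cdot\zket+\stwo\cdot\oket$ and $\zket$, which are not orthogonal, so $t$ cannot be typed by reapplying the superposition rule to the $t_i$. It is typable only through (equiv), because $t\equiv\zket$.

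The missing idea is to do the replacement inside the tree of superposition rules rather than on the canonical form. This is the paper's argument, in four steps:
\begin{itemize}
\item replace every occurrence of a leaf equal to some $s_i$ by $t_i$, which is typed by the induction hypothesis since $s_i$ is classical;
\item leave the leaves that cancel out untouched;
\item re-establish orthogonality at each superposition node, using that replacing leaves by reducts does not change normal forms up to $\equiv$ (by confluence);
\item close with a single (equiv), after checking that developing the modified tree gives a term equivalent to $\sumdef[t]$.
\end{itemize}
The last check works because all copies of a leaf are replaced by the same term, so merging and cancellation are preserved.

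A smaller gap of the same kind affects your (C) case. Pushing (equiv) to the root does not yield a derivation of the classical term $E[l\sigma]$ that follows the shape of $E$. Such a term may be typed via (equiv) from a genuine superposition, e.g.\ $(\zket,\zket)\equiv\stwo\cdot(\zket,\stwo\cdot\zket+\stwo\cdot\oket)+\stwo\cdot(\zket,\stwo\cdot\zket+\mta_{-1}\cdot\stwo\cdot\oket)$. Your context decomposition therefore needs the separate fact that a well-typed classical term has a derivation using neither the superposition nor the equivalence rule. In the paper this is Lemma~\ref{lem:tree-classical}, proved with the quantity $\theta_s$.
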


\begin{remark}
  Termination is required to type superpositions obtained after reduction.
  For example, take the QTRS made of the three following rewrite rules:
  \[
    \mcR = \left\{
      \Omega(\natz) \to \Omega(\nats(\natz)) \qquad
      \Omega(\nats(n)) \to \Omega(n) \qquad
      \mtf(\ket i) \to \ket i
    \right\}
  \]
  Take $t = (\stwo \cdot \zket + \stwo \cdot \mtf(\oket), \Omega(\natz))$,
  with $\inter{\stwo} = \fractwo$.
  While $t$ is well-typed, $t \toq \stwo \cdot (\zket, \Omega(\nats(\natz))) +
  \stwo \cdot (\oket, \Omega(\natz))$, which cannot be typed, as summands do
  not terminate.
\end{remark}

A well-typed term always possesses a unique canonical form.

\begin{restatable}[Canonical form]{lemma}{canonicaltyped}
  \label{lem:canonical-typed}
  Let $\mfR$ be a QTRS, and let $\typingdef$ be a well-typed term.
  Then, $t$ has a unique canonical form $\sumdef$ up to reordering and
  amplitude equivalence, and $\typingdef[t_i]$.
\end{restatable}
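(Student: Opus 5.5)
We argue by induction on the derivation of $\typingdef$, proving existence and the typing of summands together, and treating uniqueness separately.

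\emph{Existence.} Before the induction we record a normalisation step, used in every case. Any sum $\sum_k \beta_k \cdot s_k$ with classical $s_k$ is equivalent to a canonical form. Group equal classical terms using $\alpha\cdot t + \beta \cdot t \equiv (\alpha+\beta)\cdot t$, then delete summands with zero amplitude using $s + \alpha\cdot t \equiv s$. The only degenerate case is when every amplitude vanishes; we exclude it where needed via normalisation.

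The induction then proceeds case by case.
\begin{itemize}
\item Variable: the canonical form is $1\cdot x$, using the rule $\inter\alpha = 1$.
\item Symbol $\mtb(t_1,\dots,t_n)$: apply the induction hypothesis to each $t_i$, giving $t_i \equiv \sum_j \alpha_{ij}\cdot t_{ij}$. Distribute with the rules $\mtb(\dots,\alpha\cdot t,\dots)\equiv \alpha\cdot\mtb(\dots,t,\dots)$ and $\mtb(\dots,t_1+t_2,\dots)\equiv\mtb(\dots,t_1,\dots)+\mtb(\dots,t_2,\dots)$. This yields a sum over tuples of classical terms $\mtb(t_{1j_1},\dots,t_{nj_n})$ with product amplitudes. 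These are pairwise distinct because the tuples differ, and the product amplitudes are nonzero, so no normalisation is needed.
\item Equivalence rule: $\equiv$ is transitive, so the canonical form of $s$ serves for $t$.
\item Superposition $\sum_i\alpha_i\cdot t_i$: expand each $t_i$ canonically and normalise. Nonemptiness follows from the condition $\sum_i\size{\inter{\alpha_i\sigma}}^2=1$ combined with orthogonality (see below).
\end{itemize}
Typing of summands is carried along the induction. In the symbol case, $\mtb(t_{1j_1},\dots)$ is typed by the symbol rule from the typed $t_{ij}$, with the linear contexts split exactly as for $t$. In the superposition case, each surviving summand is a summand of some $t_i$ and is typed by the induction hypothesis.

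\emph{Uniqueness.} Suppose $\sumdef \equiv \sum_{j=1}^m \beta_j\cdot s_j$ are both canonical. We want a map from terms to formal vectors over the basis of classical terms that is invariant under $\equiv$:
\begin{itemize}
\item interpret $\alpha\cdot$ as scalar multiplication by $\inter\alpha$ and $+$ as vector addition;
\item interpret symbols multilinearly, sending $\mtb$ applied to basis vectors to the basis vector $\mtb(\dots)$.
\end{itemize}
We then check each rule of Figure~\ref{tab:equiv} and closure under contexts; every check is immediate from vector-space and multilinearity axioms. Since the summands are pairwise distinct classical terms with nonzero amplitudes, a canonical form maps to the vector with coefficient $\inter{\alpha_i}$ on $t_i$. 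Hence equal images force the two forms to agree up to reordering and amplitude equivalence.

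For non-ground amplitudes, interpretation is defined via ground substitutions. We interpret pointwise under each substitution, or more simply use the paper's definition, which requires $\inter\alpha$ to be constant.

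\emph{Main obstacle.} The hard step is the superposition case: showing that normalising does not cancel everything, and that the resulting form is nonzero. We would first show by induction that the canonical form of a typed term has squared norm $1$ under every $\sigma\in\subcont$. Orthogonality $t_i \orthored t_j$ should make cross terms vanish, so the norm of $\sum\alpha_i t_i$ equals $\sum_i \size{\inter{\alpha_i}}^2 = 1$. This step requires relating the $\delta_\perp$-based inner product on normal forms to the inner product of the formal vectors, which is the delicate point. A nonzero norm then guarantees a nonempty canonical form.
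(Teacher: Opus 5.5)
Your variable, symbol and equivalence cases, the typing of summands, and the uniqueness argument are fine. The uniqueness argument uses an $\equiv$-invariant linear coefficient map over classical terms, which is essentially the paper's quantity $\theta_s$.

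The gap is the step you flag as the main obstacle, and the route you sketch for it cannot work. The invariant you want to carry through the induction is that the canonical form of a typed term has squared norm $1$. It is false, because $t_i \orthored t_j$ constrains only the \emph{normal forms} of $t_i\sigma$ and $t_j\sigma$, not the formal vectors of $t_i$ and $t_j$. With the rules for $\mathtt{X}$ from Example~\ref{ex:cliffordt}, take
\[
  t_1 = \stwo\cdot\zket + \stwo\cdot\mathtt{X}(\zket),
  \qquad
  t_2 = \stwo\cdot\zket + \mta_{-1}\cdot\stwo\cdot\oket .
\]
\begin{itemize}
\item Both have type $\qbit$.
\item $t_1$ normalises to $\stwo\cdot\zket+\stwo\cdot\oket$ and $t_2$ is already a value. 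Their inner product is $\tfrac12-\tfrac12=0$, so $t_1 \orthored t_2$.
\item Hence $\stwo\cdot t_1+\stwo\cdot t_2$ is well typed.
\item Yet its canonical form carries amplitudes $1,\tfrac12,-\tfrac12$ on $\zket,\mathtt{X}(\zket),\oket$, so its squared norm is $\tfrac32$.
\item The formal inner product of $t_1$ and $t_2$ is $\tfrac12$, not $0$.
\end{itemize}
Reduction can send formally orthogonal classical terms (here $\mathtt{X}(\zket)$ and $\oket$) to the same value. So there is no relation to establish between the $\delta_\perp$-based inner product on normal forms and the formal one.

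What is missing is to prove non-nullity on the normal-form side and transport it back by \emph{linearity of reduction}, arguing by contradiction.
\begin{itemize}
\item Write $t_i \equiv \sum_j \beta_{ij}\cdot s_j$ over a single family of pairwise distinct classical terms. Then $t\equiv\sum_j\gamma_j\cdot s_j$ with $\gamma_j=\sum_i\alpha_i\beta_{ij}$.
\item Suppose every $\inter{\gamma_j}$ vanishes, and fix $\sigma\in\subcont[\Gamma\cup\Delta]$.
\item Orthogonality gives $t_i\sigma\toqs v_i$.
\item Since reduction commutes with linear combinations (Lemma~\ref{lem:split-red}), $v_i\equiv\sum_j\beta_{ij}\cdot q_j$ with $s_j\sigma\toqs q_j$.
\item Hence $\sum_i\alpha_i\cdot v_i\equiv\sum_j\gamma_j\cdot q_j$ is null.
\item But the $v_i$ are non-null and pairwise orthogonal. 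This is literally what $t_i\orthored t_j$ states, since $\delta_\perp$, where defined, coincides with syntactic equality.
\item So the $v_i$ are linearly independent, which forces every $\inter{\alpha_i\sigma}$ to be $0$. This contradicts $\sum_i\size{\inter{\alpha_i\sigma}}^2=1$.
\end{itemize}
This is the paper's argument in the superposition case. It needs no norm identity for the unreduced canonical form, and none holds.
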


As a consequence of Lemma~\ref{lem:confluence} and~\ref{lem:canonical-typed},
the orthogonality predicate (Definition~\ref{def:orthogonality})
is sound.

A QTRS $\mfR$ is said to be \emph{total}, if for any function symbol $\mtf$ and
any input $\overline v \in \funcin$, there exist a substitution $\sigma$ and a
rule $l \to r \in \trsrule$ such that $l\sigma = \mtf(\overline v)$.
For total QTRS, normal forms coincide with terms equivalent to values.

\begin{restatable}[Normal forms]{lemma}{progress}\label{lem:normal-forms}
  Let $\mfR$ be a total QTRS. Then, $\nf$ is exactly the set of well-typed
  ground terms equivalent to values.
\end{restatable}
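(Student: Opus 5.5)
We have to prove both inclusions between $\nf$ and the set of well-typed ground terms equivalent to values. Throughout we read the statement as restricted to well-typed ground terms, since $\nf$ also contains ill-typed stuck terms and only the typed part can match values.

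\emph{Values are normal forms.} Let $t \equiv v$ with $v \in \trsetval$, and suppose $t \toq t'$. The derivation can be unfolded, peeling off rule (E), until it ends in a (C) or (Q) step applied to some $s' \equiv v$.
\begin{itemize}
\item A (C) step needs $s' = E[l\sigma]$ with $l$ headed by a function symbol.
\item A (Q) step needs a canonical decomposition whose summands $s_i$ are classical and such that some $s_i$ reduces by (C).
\end{itemize}
I will show that $\equiv$ preserves the predicate ``every classical summand of every canonical form is function-symbol-free''. For this I check each rule of Figure~\ref{tab:equiv}; none of them introduces a function symbol. More precisely, I prove by induction on the derivation of $\equiv$ that $\var{\cdot}$ and the multiset of function symbols occurring outside zero-amplitude summands are compatible. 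The cleaner route is to work with canonical forms. By Lemma~\ref{lem:canonical-typed}, $v$ has a unique canonical form, and its summands are classical values, hence contain no function symbol. Any $s' \equiv v$ that is classical, or that is the canonical form used in (Q), coincides with this one up to reordering. So no summand contains a redex, and no (C) step applies. Hence $t \in \nf$.

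\emph{Normal forms are values.} Let $t$ be well-typed, ground, and in $\nf$, with $\typingclosed t T$. By Lemma~\ref{lem:canonical-typed}, $t \equiv \candef[t]$ with each $t_i$ classical and $\typingclosed{t_i} T$. If some $t_i$ is reducible, then rule (Q) applies to the canonical form, with $\toqif$ on the other summands, and rule (E) transfers the step to $t$. This contradicts $t \in \nf$. So each $t_i$ is a classical, well-typed, ground normal form, and it remains to show that each $t_i$ is a value; then $t \equiv \sumdef[t]$ is a value.

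This last step is the heart of the proof. I argue by induction on $\size{t_i}$ that a closed, well-typed, classical term that is not a value is reducible. Since $t_i$ is ground it has no variables, so it is $\mtb(u_1, \dots, u_k)$. Let $u_j$ be the leftmost argument that is not a value.
\begin{itemize}
\item If such a $u_j$ exists, it is well-typed at the type $T_j$ from $\sign\mtb$. By induction it reduces by some (C) step $E[l\sigma] \toq E[r\sigma]$. Then $\mtb(u_1, \dots, E, \dots)$ is again an evaluation context, since the arguments to its left are values, so $t_i$ reduces.
\item Otherwise all arguments are values. If $\mtb$ is a constructor, $t_i$ is a value. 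If $\mtb = \mtf$ is a function symbol, then $\overline u \in \funcin$, and totality gives a rule $l \to r$ and a substitution $\sigma$ with $l\sigma = \mtf(\overline u)$. Rule (C) with $E = \diamond$ then fires.
\end{itemize}

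The main obstacle is the interaction with $\equiv$ in the first direction. A term equivalent to a value may contain function symbols hidden under zero amplitudes, as in $v + \mta_0 \cdot \mtf(\dots)$. I must make sure that such subterms cannot be reduced through (E). Rule (Q) only fires on canonical forms, whose amplitudes are non-zero, and uniqueness of canonical forms (Lemma~\ref{lem:canonical-typed}) rules out any canonical representative that contains the redex. I would make this precise by showing that any classical term equivalent to a value is syntactically equal to it, which follows because the equivalence rules only identify classical terms up to identity.
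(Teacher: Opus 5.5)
Your proof is correct and follows essentially the paper's route: double inclusion, where the first direction shows that no term equivalent to a value can reduce, and the second lifts a totality-based progress argument on the classical summands of the canonical form through (Q) and (E). In the first direction you use uniqueness of canonical forms, while the paper uses the quantity $\theta$ directly, which is what that uniqueness rests on.

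One step needs tightening. In the (Q) case, the premise $s_i \toq t_i$ need not be a (C) step, so phrase the first direction as an induction on the height of the reduction derivation, as the paper does. Then apply the induction hypothesis to the summand $s_i$, which is a classical value. Relatedly, your closing claim should say that a classical term equivalent to a value \emph{is} a value, not that it is syntactically equal to it.
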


\subsection{Type Inference}\label{sec:inference}

This section discusses the decidability of type inference.
The typing rule for superpositions requires computing additions,
multiplications, and nullity checks on complex numbers.
Recall that algebraic numbers are complex numbers that are roots
of a polynomial in $\mathbb Q[X]$, and write their sets as $\bar{\mathbb C}$.
In the field of algebraic numbers, equality is decidable, and product and sum
are computable~\cite{HHK05}. In this section, we thus restrict ourselves to
amplitudes symbols satisfying
$\inter \mta : \mbN^{\arity[\mta]} \to \bar{\mbC}$.

The typing rule for superpositions (Figure~\ref{tab:typing}) also
requires orthogonality checks $s \orthored t$, which imply checking
termination wrt a universal quantification over all possible
substitutions  (see Definition~\ref{def:orthogonality}). Hence,
typing is undecidable in the general case:
we show that it is at least as hard as the Universal Halting Problem, i.e.,
$\Pi^0_2$-hard in the arithmetical hierarchy~\cite{EGZ09}.

\begin{restatable}[Undecidability of type inference]{theorem}{inferenceundecide}
  \label{thm:undecide}
  Given a STRS $\mfR$, deciding whether $t$ can be typed is $\pitwo$-hard,
  and belongs to $\Sigma_3^0$.
\end{restatable}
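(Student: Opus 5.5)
The proof has two parts: a reduction from the Universal Halting Problem to typability, giving $\pitwo$-hardness, and an analysis of the quantifier structure of the typing rules, giving the $\Sigma_3^0$ upper bound.

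\textbf{Hardness.} I will reduce from the universal halting problem for standard (classical) TRS. That problem is $\pitwo$-complete~\cite{EGZ09}: given a TRS, decide whether every ground term (equivalently, every input $\underline\iota$ of a unary main symbol $\mtg$ on $\nat$) terminates. We may assume the TRS is a constructor TRS with orthogonal rules, since orthogonal constructor TRS are Turing complete and its uniform halting stays $\pitwo$-hard.

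From such a system I build an STRS $\mfR$ containing its rules, viewed as STRS rules on classical terms. I add a function symbol $\mtf : \nat \times \qbit \to \qbit$ with the two rules
\[
  \mtf(\mtx, \zket) \to \mathtt{K}(\mtg(\mtx), \zket)
  \qquad
  \mtf(\mtx, \oket) \to \mathtt{K}(\mtg(\mtx), \oket)
\]
together with rules $\mathtt{K}(\mty, \mtq) \to \mtq$ for each constructor shape of $\mtg$'s output type. These make $\mathtt K$ discard its first argument once that argument is a value. The term to be typed is
\[
  t = \stwo \cdot \mtf(\mtx, \zket) + \stwo \cdot \mtf(\mtx, \oket),
\]
in the context $\mtx : \nat$.

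The only nontrivial premise of the superposition rule is $\mtf(\mtx, \zket) \orthored \mtf(\mtx, \oket)$. Unfolding Definition~\ref{def:orthogonality}, this requires that for every $\sigma \in \subcont[\mtx:\nat]$, i.e.\ every numeral $\underline\iota$, both summands reduce to canonical forms. Under the evaluation strategy, $\mathtt K$ can only fire once $\mtg(\underline\iota)$ has reached a value. So orthogonality holds iff $\mtg(\underline\iota)$ terminates for all $\underline\iota$, in which case the normal forms are $\zket$ and $\oket$ with $\delta_\perp = 0$. The amplitude condition is immediate since $\size{\inter{\stwo}}^2 \cdot 2 = 1$.

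To handle the fact that typing involves the equivalence rule and possibly other derivations, I will argue that every typing derivation of $t$ at type $\qbit$ must eventually use the superposition rule on summands equivalent to these two. The canonical form of $t$ is fixed up to $\equiv$, so whichever derivation is used, the same orthogonality premise appears. This gives a computable many-one reduction, hence $\pitwo$-hardness.

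\textbf{Upper bound.} I will inspect the quantifier structure of the typing rules.
\begin{itemize}
  \item A typing derivation is a finite object, so its existence contributes one outer $\exists$. The equivalence rule is fine here because the derivation includes the chain witnessing $s \equiv t$, with each step checkable. Amplitude equalities are decidable over $\bar{\mbC}$~\cite{HHK05}.
  \item Each premise is then checked. The normalization condition $\forall\sigma,\ \sum_i \size{\inter{\alpha_i\sigma}}^2 = 1$ is $\Pi_1^0$.
  \item The orthogonality premise has the form $\forall\sigma\, \exists$ finite reductions to canonical forms, with a decidable sum condition. Confluence (Lemma~\ref{lem:confluence}) and uniqueness of canonical forms ensure that any witnessing reductions give the same sum, so the existential is sound. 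This premise is therefore $\pi_2$.
  \item The side conditions that function symbols are isometries and that rules preserve typing quantify over inputs. The isometry condition is $\forall$ inputs followed by orthogonality, which is again $\pitwo$. Rule-typing preservation is an implication between typing statements; it will be handled by bounding the relevant contexts and types syntactically so that it reduces to checks of the same form.
\end{itemize}
Overall we obtain $\exists\, \pitwo$, i.e.\ $\Sigma_3^0$.

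\textbf{Main obstacle.} The step I expect to be hardest is the hardness direction: showing that \emph{no} alternative derivation can type $t$ when $\mtg$ diverges on some input. This needs a small inversion lemma for superpositions modulo $\equiv$, relying on Lemma~\ref{lem:canonical-typed}. A secondary issue is ensuring the constructed $\mfR$ is itself legitimate: orthogonality of rules holds by construction. If the statement is read as concerning QTRS rather than just STRS, the isometry requirement would also need checking, and I would avoid this by keeping $\mtf$ outside the requirement, since the statement quantifies over STRS.
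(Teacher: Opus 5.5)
Your hardness reduction fails at the step you flagged as the main obstacle. For the full type system, the inversion property you want is false, because the linearity rules of $\equiv$ let the superposition be pushed inside $\mtf$:
\[
  \stwo\cdot\mtf(\mtx,\zket) + \stwo\cdot\mtf(\mtx,\oket) \;\equiv\; \mtf(\mtx,\ \stwo\cdot\zket + \stwo\cdot\oket).
\]
The right-hand side is typed as $\mtx:\nat;\varnothing\vdash_{\mfR}\cdots:\qbit$ by the symbol rule. Its second premise is a superposition rule that only needs $\zket\orthored\oket$, with no reduction involved. One equivalence step then types your $t$, whether or not $\mtg$ halts, so your map sends every instance to a yes-instance.

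Typing derivations need not go through the canonical form. Lemma~\ref{lem:canonical-typed} only says that a typed term has a canonical form with typed components. The lemma that actually reorganizes derivations into a classical derivation followed by superposition rules, Lemma~\ref{lem:tree-canonical}, assumes termination, which is exactly the unknown here; compare the remark after Lemma~\ref{lem:subred}.

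The paper runs essentially your argument, with the gadget $\stwo\cdot(\mtf(\mtx),\zket)+\stwo\cdot(\mtf(\mtx),\oket)$ instead of $\mathtt{K}$. It does so only for \emph{free typing inference}, i.e.\ derivations without the equivalence rule. There a sum can only be typed by the superposition rule, so the orthogonality premise on the two given summands is forced. That is the version of your reduction that is sound. The paper's gadget factorizes in the same way, so the paper does not establish hardness with $\equiv$ either. That would need a non-factorizable (entangled) gadget, such as a sum whose summands differ in a non-qubit position as in the reduct of that remark, together with a genuine inversion argument.

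For the upper bound, your route differs from the paper's and is arguably more robust. The paper uses Lemma~\ref{lem:tree-canonical} to move all equivalence steps to a single final one. Typability then becomes: there exists $s\equiv t$ with $s$ typable without the equivalence rule, and the latter is $\pitwo$. You instead take the whole finite derivation, including its finite $\equiv$-chains, as the existential witness and check its finitely many premises. This avoids relying on a lemma stated only for terminating terms of a QTRS.

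Two small corrections:
\begin{itemize}
  \item The side conditions $\inter{\alpha}=0$ and $\inter{\alpha}=1$ of $\equiv$-steps are $\Pi_1^0$ rather than decidable once amplitude symbols have positive arity. This is harmless for the $\Sigma_3^0$ bound.
  \item The isometry and rule-typing conditions are not premises of the typing rules relative to a given STRS, so you should simply drop them.
\end{itemize}
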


Moreover, type inference without equivalence $\equiv$ is $\pitwo$-complete.
Decidability of type inference can be recovered by restricting the typing
rules: a well-typed term $t$ is said to be \emph{easily typed}, if its typing
derivation does not use any equivalence rule and, in any typing rule for
superpositions, the orthogonality check $s \orthored t$ is replaced by the
check $\kronweird s t = 0$.
\emph{Easy type inference} can be decided in polynomial time and
implies that the term is well-typed.

\begin{restatable}[Decidability of easy type inference]{theorem}{typeinference}
  \label{thm:type-inference}
  Let $\mfR$ be a terminating STRS, where $\mcA$ contains only symbols of arity
  $0$. There is a polynomial $P$ such that easy type inference of $t$ is
  decidable in $P(\size t)$, and implies that $t$ is well-typed.
\end{restatable}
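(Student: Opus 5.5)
Two claims have to be established: that easy type inference runs in polynomial time, and that it is sound with respect to full typing. Both will be handled by a syntax-directed algorithm, which exists because the easy typing rules no longer contain the equivalence rule.

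\textbf{The algorithm.} Without the equivalence rule, each term constructor (variable, symbol application $\mtb(t_1,\dots,t_n)$, sum $\sumdef$) is matched by at most one typing rule. Sums will be read as maximal $n$-ary sums $\sum_i \alpha_i \cdot t_i$, peeling off nested scalar multiplications. The algorithm proceeds bottom-up on the syntax tree of $t$.
\begin{itemize}
\item For $\mtb(t_1,\dots,t_n)$, the type is read off the unique signature $\sign{\mtb}$, and each argument is checked against the corresponding $T_i$.
\item Linear contexts are split according to the free quantum variables of each argument. Each such variable must occur in exactly one argument, which is a linear scan.
\item For superpositions, each $t_i$ is checked to have the same quantum type $Q$ and the same linear context.
\end{itemize}
The amplitudes are ground because $\mcA$ has only arity-$0$ symbols. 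Their values are therefore computed in $\bar{\mbC}$, and $\sum_i \size{\inter{\alpha_i}}^2 = 1$ is checked using decidable equality and computable sum and product~\cite{HHK05}. The interpretation of each fixed symbol is treated as a constant of the system. Finally, $\kronweird{t_i}{t_j} = 0$ is checked for all pairs $i \neq j$.

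\textbf{The $\delta_\perp$ check.} Deciding $\kronweird{s}{t}=0$ amounts to traversing $s$ and $t$ simultaneously and searching for a common context with $\ket q$ in one term and $\ket{1-q}$ in the other. Equivalently, the algorithm checks whether there is a position at which the two terms carry distinct basis constructors $\zket, \oket$, with identical head symbols along the path to that position. If the heads differ somewhere before such a position is reached, $\delta_\perp$ is undefined and the check fails. This takes time linear in $\size s + \size t$. There are quadratically many pairs per sum, and each node of $t$ is visited a bounded number of times. The total cost is therefore bounded by a polynomial $P(\size t)$, where $P$ also absorbs the cost of the algebraic arithmetic, whose numbers stay of size polynomial in $\size t$.

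\textbf{Soundness.} I expect this to be the main obstacle. The plan is to show by induction on the easy derivation that every easy judgment is an ordinary typing judgment. All rules coincide except the superposition rule, so the only work is to prove that $\kronweird{t_i}{t_j}=0$ implies $t_i \orthored t_j$ in the sense of Definition~\ref{def:orthogonality}.
\begin{itemize}
\item Write $t_i = C[\ket q, \dots]$ and $t_j = C[\ket{1-q}, \dots]$.
\item Apply any $\sigma \in \subcont[\Gamma\cup\Delta]$.
\item By termination of $\mfR$, both terms reach normal forms.
\end{itemize}
The key observation is that the common context $C$ consists of symbols on the path down to the discriminating qubit. I will first try to show that this path contains only constructors. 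If a function symbol lay on the path, reduction could erase the distinction, so either the easy check must be restricted to constructor paths or the path argument must hold automatically. Under this constructor-path reading, rule (C) never rewrites the constructor positions above the qubit, since evaluation contexts go through constructors. Linearity by the equivalence rules then gives
\[
t_i\sigma \toqs \sum_k \alpha_k \cdot C[\ket q, \dots]_k
\quad\text{and}\quad
t_j\sigma \toqs \sum_l \beta_l \cdot C[\ket{1-q}, \dots]_l .
\]
Every pair of summands then has $\delta_\perp = 0$, so the orthogonality sum vanishes. Confluence (Lemma~\ref{lem:confluence}) and uniqueness of canonical forms (Lemma~\ref{lem:canonical-typed}) make this independent of the chosen reductions. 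This gives orthogonality, and hence well-typedness.
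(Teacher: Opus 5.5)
Your plan is the paper's own proof: a syntax-directed check (possible because the equivalence rule is dropped), polynomial cost from contexts bounded by $\size t$, ground algebraic amplitudes, a linear-time $\delta_\perp$ test, and soundness via $\kronweird{s}{t}=0 \Rightarrow s\orthored t$ (instantiate, normalize, distribute by linearity).

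\textbf{First gap: function symbols on the path.} The soundness step is not closed, and the obstruction is the one you flag yourself. The branch ``the path argument holds automatically'' is false. Contexts witnessing $\kronweird{s}{t}=0$ may contain function symbols. Take the terminating rule $\mtf(\mtx,\mty)\to\zket$ with $\sign{\mtf}=\qbit\times\qbit\to\qbit$. The summands of $\stwo\cdot\mtf(\zket,\zket)+\stwo\cdot\mtf(\oket,\oket)$ pass the easy check via $C=\mtf(\diamond_1,\diamond_2)$. Yet both normalize to $\zket$, so they are not orthogonal (reading Definition~\ref{def:orthogonality} on normal forms, as you and the paper do).

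Passing to QTRS does not rescue this branch either. With the Hadamard rules of Example~\ref{ex:cliffordt}, $(\mathtt{H}(\zket),\natz)$ and $(\mathtt{H}(\oket),\nats(\natz))$ have $\delta_\perp=0$. But their normal forms contain the basis terms $(\zket,\natz)$ and $(\zket,\nats(\natz))$, on which $\delta_\perp$ is undefined.

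So you must commit to the other branch and build it into the easy check: the discriminating qubit has to be reached through constructors only. The paper's proof makes exactly this assumption tacitly when it writes $s\sigma\toqs C[\zket,v_1,\dots,v_n]$ with $C$ untouched. Your path formulation is the precise version of that assumption: function symbols off the path are harmless, since rule (C) only fires at function-symbol positions.

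\textbf{Second gap: collapse to zero.} Definition~\ref{def:orthogonality} requires the normal forms of $s\sigma$ and $t\sigma$ to have canonical forms, i.e.\ your sums $\sum_k \alpha_k\cdot C[\ket q,\dots]_k$ must be non-empty. In an arbitrary terminating STRS they can collapse. Take $\mtg(\mtx)\to\stwo\cdot\zket+\mta_{-1}\cdot\stwo\cdot\zket$ with $\sign{\mtg}=\qbit\to\qbit$. Then $(\zket,\mtg(\zket))$ reduces to a term equivalent to $\beta\cdot(\zket,\zket)$ with $\inter{\beta}=0$, which has no canonical form. So it is not orthogonal to $(\oket,\mtg(\zket))$, even though the two differ along a constructor path.

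Lemma~\ref{lem:canonical-typed}, which you invoke, and subject reduction (Lemma~\ref{lem:subred}) concern well-typed terms of a QTRS, so they are unavailable for a general STRS. If $\mfR$ is a terminating QTRS, the step does go through:
\begin{itemize}
\item by induction on the easy derivation, the summands are well-typed;
\item hence so are their instances and, by subject reduction, their normal forms, which therefore have canonical forms;
\item all their basis terms carry $\ket q$ versus $\ket{1-q}$ under the same constructor path, so $\delta_\perp$ is defined and equal to $0$ on every pair.
\end{itemize}
The paper's proof has the same omission.
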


For example, all rewrite rules from Example~\ref{ex:cliffordt} can be easily
typed. However, Theorem~\ref{thm:type-inference} still requires termination,
which is known to be undecidable.
Nevertheless, Section~\ref{sec:complexity} will discuss termination techniques
that can be used to obtain a termination certificate, thus giving a
semi-automatized procedure for type inference.

\subsection{QTRS and Quantum Circuits}\label{sec:circuit}

This section exhibits a relation between QTRS and quantum circuits.
Towards that end, a fragment of QTRS is introduced ($\qtrscirc$,
Definition~\ref{def:qtrs-circ}). This fragment is universal for quantum circuits
(Theorem~\ref{thm:universality}), and can be compiled to families of quantum
circuits (Theorem~\ref{thm:compile}).
For programs with a specific recursive structure, the size of the obtained
circuit can be faithfully related with the runtime-complexity of the QTRS
(Theorem~\ref{thm:compile-bound}), allowing to characterize precisely $\fbqp$
(Theorem~\ref{thm:fbqp}), the natural complexity class for quantum
polynomial time computable functions.

While function symbols of a QTRS preserve orthogonality, they
do not always output a coherent number of qubits.
As a counter-example, just consider the following QTRS $\{\mtf(\zket)
  \to \zket :: \nil,
\mtf(\oket) \to \oket :: \oket :: \nil\}$.
Furthermore, a naive padding with ancilla qubits could break
the behaviour of a more general QTRS, e.g., if $\mtf$ is used as an input for
another program.
To tackle this problem, we introduce the notion of \emph{structure}, which
keeps only the classical shape of a value by replacing any occurrence
of a qubit by the unit.
Formally, the structure of a constructor is defined as $\struct
\zket = \struct \oket \triangleq ()$ and $\struct \mtc \triangleq \mtc$ for
$\mtc \in \mcC \setminus \{\zket, \oket\}$.
The structure is then defined as a partial map on values by
\begin{align*}
  \struct{\mtc(\overline{v})} &\triangleq \struct{c}(\struct{\overline{v}}) \\
  \struct {\alpha \cdot v} &\triangleq \struct{v} \\
  \struct{v_1 + v_2 } & \triangleq \struct{v_1}  && \text{if
  }\struct{v_1} = \struct{v_2}
\end{align*}
where $\struct {v_1,\ldots,v_n} \triangleq (\struct{v_1}, \dots,
\struct{v_n})$.
Two classical values have the same structure if they differ only on
their qubit constructors.
The above problem is solved by requiring that $\mtf$ reduces to values of
identical structure, on inputs of identical structure.

\begin{definition}
  Let $\mfR$ be a terminating QTRS.
  We say that $\mtf \in \trsfunc$ is \emph{structure preserving}, if for
  any $\overline{v}, \overline{w} \in \funcin$ such that
  $\struct{\overline{v}} = \struct{\overline{w}}$,
  $\mtf(\overline{v}) \toqs v', \mtf(\overline{w}) \toqs w'$, and
  $\struct{v'} = \struct{w'}$.
\end{definition}

Therefore, compilation of a function symbol $\mtf$ will differ depending
on the considered structure of inputs. In particular, not all rewrite
rules can be applied to a specific structure.
Formally, a \emph{$\mtf$-structural set} $S$ is a set of $\mtf$-rewrite
rules such that $\forall\, l \to r, l' \to r' \in S$,
there exist two substitutions $\sigma, \sigma'$ such that
$l \sigma = \mtf(\overline v)$, $l' \sigma' = \mtf(\overline{v'})$,
and $\struct{\overline v} = \struct{\overline{v'}}$.

Compiling a function symbol $\mtf$ can be done in two ways: either
the considered rewrite rules are viewed as a single unitary gate, or each
rewrite rule is compiled separately as a controlled statement.
We say that $\mtf$ \emph{encodes a unitary}, if any $\mtf$-rewrite rule $l \to
r \in \mcR$ satisfies $r \in \trsetval$, allowing us to compile $\mtf$ as in
the former case.
To properly compile the latter case, it is required that all the considered
rules treat the control qubits identically and leave them untouched.
This property is ensured by the definition below.

\begin{definition}
  Let $\mfR$ be a QTRS, and let $\mtf \in \trsfunc$.
  We say that $\mtf$ \emph{quantum controls}, if for any $\mtf$-structural set
  $S$, there exist a $n$-hole classical context $C$, and a $n+m$-hole
  classical context $C'$ with no function symbol occurrence, such that for
  all $l_j \to r_j \in S$:
  \[
    l_j = C[\ket{i_1}, \dots, \ket{i_n}]
    \quad \wedge \quad
    r_j = C'[\ket{i_1}, \dots, \ket{i_n}, t_1^j, \dots, t_m^j]
  \]
\end{definition}

Finally, we restrict constructor symbols to the set $\mcC_Q \triangleq \{\zket,
\oket, \natz, \nats, \nil, \consht, \consp\}$, so that values have a meaning
circuit-wise.
Altogether, define $\qtrscirc$ as the intersection of all the above
restrictions.

\begin{definition}\label{def:qtrs-circ}
  Define $\qtrscirc$ as the set of QTRS $\mfR$,
  where $\trscons \subseteq \mcC_Q$, and for all $\mtf \in \trsfunc$,
  $\mtf$ is structure preserving, and
  either $\mtf$ encodes a unitary, or $\mtf$ quantum controls.
\end{definition}

\begin{example}
  The QTRS from Example~\ref{ex:cliffordt} belongs to $\qtrscirc$.
  $\mathtt X, \mathtt T, \mathtt H$ all output values, thus encode a unitary,
  while $\mathtt{CNOT}$ leaves the control qubit untouched, thus quantum
  controls.
  All symbols also are structure preserving, as they respectively output a
  qubit and a pair of qubits no matter their inputs.
  Similarly, each function symbol from Example~\ref{ex:qft} quantum controls
  and is structure preserving, as it outputs a qubit list of same size as its
  input, thus it belongs to $\qtrscirc$.
\end{example}

Let us now discuss the relation between $\qtrscirc$ and quantum circuits.
Recall that a family of quantum circuits $(C_n)_{n \in \mbN}$ is said to be
\emph{uniform}, if there exists a Turing machine which, on input $n$, outputs
the circuit representation of $C_n$.

As term rewrite systems are known to be Turing complete~\cite{HL78}, any
uniform family of circuits can be expressed as a QTRS, with the notation $t_0 =
\natz$ and $t_{n+1} = \nats(t_n)$ for all $n \in \mbN$, and by encoding any
$n$-qubit state $\ket \phi = \sum_{x_1\dots x_n \in \set{0,1}^n} \alpha_{x_1
\dots x_n} \ket{x_1 \dots x_n}$ into the value $t_{\ket \phi}$ defined below:
\[
  t_{\ket \phi} \triangleq \sum_{x_1\dots x_n \in \set{0,1}^n}
  \mta_{x_1 \dots x_n} \cdot \ket{x_1}::\dots::\ket{x_n} :: \nil
  ,\qquad \text{with }\inter{\mta_{x_1 \dots x_n}} \triangleq
  \alpha_{x_1 \dots x_n}.
\]

\begin{restatable}[Universality]{theorem}{universality}\label{thm:universality}
  Let $(C_n)_{n \in \mbN}$ be a uniform family of circuits.
  Then, there exists a QTRS $\mfR \in \qtrscirc$ of main symbol $\mtf$ such
  that, for any $n \in \mbN$ and any $n$-qubit state $\ket \phi$, $\mtf(t_n,
  t_{\ket \phi}) \toqs t_{C_n \ket \phi}$.
\end{restatable}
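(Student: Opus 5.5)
The plan is to build $\mfR$ in two layers: a classical layer that simulates the Turing machine generating $C_n$, and a quantum layer that interprets a circuit description on a qubit list.

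\textbf{Gate set.} Since $C_n$ is uniform, its description uses gates from some fixed, effectively given set. Without loss of generality we take it to be the Clifford+T set of Example~\ref{ex:cliffordt}, acting on qubit positions. If exact gates beyond this set are needed, each allowed gate $U$ on $k$ qubits becomes one function symbol with one rule per basis input. Each such rule has the form $\mathtt{U}(\ket{x_1},\dots,\ket{x_k}) \to \sum_y \mta_{U,x,y}\cdot(\ket{y_1},\dots,\ket{y_k})$, where the amplitude symbols are interpreted as the entries of $U$. Every such symbol encodes a unitary. It is structure preserving, since it maps $k$-tuples of qubits to $k$-tuples of qubits. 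It is an isometry because its columns are orthonormal.

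\textbf{Classical layer.} Classical TRS are Turing complete~\cite{HL78}. So we obtain classical function symbols $\mathtt{desc}$ such that $\mathtt{desc}(t_n)$ reduces to a classical list of gate descriptors, each a pair consisting of a gate tag and the qubit indices in unary ($\nat$). We use only constructors from $\mcC_Q$: tags are encoded as $\nat$ values and descriptors are built with $\consp$ and $\consht$. All these symbols act on classical data only. Each rule therefore trivially quantum controls with $n=0$ control holes, since $C$ has no qubit holes and the condition is vacuous. Structure preservation also holds, because the output is classical and determined by the input. Isometry holds since two distinct classical values are never orthogonal in the $\delta_\perp$ sense; that is, the premise of isometry is false on purely classical inputs. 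Confluence (Lemma~\ref{lem:confluence}) makes the simulated outputs well defined.

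\textbf{Quantum layer.} We define $\mtf(n,l) \to \mathtt{run}(\mathtt{desc}(n),l)$ and let $\mathtt{run}$ apply the gates in sequence. For a single gate, $\mathtt{apply}(g,l)$ walks down the list with index counters, using rules like $\mathtt{get}(\nats(i), h::t)$ that peel off the $i$-th qubit. It rebuilds the list with the gate symbol placed at the targeted positions; for two-qubit gates we first extract both qubits into a pair, apply the gate, and reinsert them. Each of these symbols leaves qubits untouched except by passing them to the gate symbol. Consequently each rule has the shape $C[\ket{i_1},\dots] \to C'[\ket{i_1},\dots,\dots]$ with the qubits merely moved, or else qubits appear only as variables. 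So we check quantum control structural-set by structural-set, and structure preservation holds because list lengths are unchanged. Correctness then follows by induction on the gate list. By linearity (rule (Q) with the equivalence $\mtb(\dots,t_1+t_2,\dots)\equiv\dots$), applying the program to $t_{\ket\phi}$ reduces basis-wise to $t_{C_n\ket\phi}$, using the fact that each gate symbol realises its matrix on basis states.

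\textbf{Main obstacle.} The hardest part is discharging the QTRS side conditions of Definition~\ref{def:qtrs}, in particular typing the right-hand sides and proving isometry of the composed symbols. Those symbols manipulate quantum lists in superposition, so quantum control has to be verified. I would handle this by designing rules in which qubit variables are used linearly and qubit constructors in patterns reappear unchanged, which makes typing easy (Theorem~\ref{thm:type-inference}-style). I would then show isometry compositionally: a composition of isometries, together with classical routing that only permutes list positions, preserves orthogonality. This argument relies on termination and on subject reduction (Lemma~\ref{lem:subred}).
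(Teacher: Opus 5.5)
Your proposal is correct and follows essentially the same route as the paper. The paper likewise uses a classical TRS, obtained from Turing completeness, that outputs a description of $C_n$ using only constructors of $\mcC_Q$ and quantum-controls trivially with $0$ control qubits, gate symbols that encode unitaries with one variable-free rule per basis input, and a top rule $\mtf(\nats_n,x)\to\mth(\mtg(\nats_n),x)$ matching your $\mtf(n,l)\to\mathtt{run}(\mathtt{desc}(n),l)$. The differences are cosmetic: you bring gates to their target wires by index-driven extraction and reinsertion rather than by chains of swaps, and you make the interpreter's recursion over the gate list explicit where the paper only sketches it.
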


We now prove that any QTRS in $\qtrscirc$ can be approximated by a uniform
family of circuits.
Towards that end, we encode naturally any value $v$ with constructor symbols in
$\mcC_Q$ to a quantum state $\ket v$, by discarding natural bumbers.
This is extended to tuples $\overline{v} =(v_1, \dots, v_n)$ by
$\ket{\overline{v}} \triangleq \ket{v_1} \otimes \dots \otimes \ket{v_n}$.
Given any tuple $\overline{v}$ and any value $v'$, a circuit $C$ is
said to \emph{approximate} $v'$ with probability $p \in [0,1]$ on
input $\overline{v}$, if $C\ket{\overline{v}}$ evaluates to $\ket \phi$, and
$\lvert \langle v' | \phi \rangle \rvert^2 \geq p$.

As discussed before, compilation of the QTRS will be done with respect to
an input structure.
Therefore, compilation yields a family of circuits indexed by the
structure of the possible inputs of $\mtf$, i.e., by the set
$\structset \mtf \triangleq \set{\struct{\overline{v}} \gmid
\overline{v} \in \funcin}$.
Note that the proof is constructive, thus actually generates the family
of circuits $\famcirc$.

\begin{restatable}[Circuit compilation]{theorem}{compile}\label{thm:compile}
  Let $\mfR \in \qtrscirc$ be a terminating QTRS of main symbol $\mtf$.
  Then, we can generate a family of circuits $\famcirc \triangleq (C_w)_{w \in
  \structset \mtf}$ s.t., for any input $\overline{v} \in \funcin$,
  if $\mtf(\overline{v}) \toqs v'$ then $C_{\struct{\overline{v}}}$
  approximates $v'$ with probability $\frac 2 3$ on input $\overline{v}$.
\end{restatable}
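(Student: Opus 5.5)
The plan is to build $C_w$ by induction on the reduction of $\mtf(\overline v)$, compiling one function symbol at a time, for a fixed input structure $w \in \structset \mtf$. Structure preservation carries all the weight here. Any two inputs of structure $w$ reduce to values of one common structure. So every subterm $\mtg(\overline u)$ met during the reduction of $\mtf(\overline v)$ has an input structure determined by $w$ alone, not by the qubit contents. Consequently the whole skeleton of the computation can be produced classically from $w$: which function symbols are called, with which structures, and which structural set $S$ of rules is applicable. By termination this skeleton is finite. The Turing machine behind the uniform family simulates the reduction on the classical structure. It uses the wire layout given by the encoding $\ket{\cdot}$, with one wire per qubit constructor, and with natural numbers and list shapes fixed by the structure.

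At each node the circuit is built according to the two cases of Definition~\ref{def:qtrs-circ}.

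If $\mtg$ \emph{encodes a unitary}, its rules in the structural set $S$ map basis inputs, i.e.\ qubit patterns of the fixed structure, to values of a fixed output structure. Because $\mtg$ is an isometry, the resulting linear map sends orthogonal basis states to orthogonal states and is therefore an isometry between finite-dimensional spaces. It can be extended to a unitary, padding with ancillas if the output has more qubits. By Solovay–Kitaev or the universality of Clifford+T it can then be approximated by a finite gate sequence.

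If $\mtg$ \emph{quantum controls}, every rule in $S$ has the shape $C[\ket{i_1},\dots,\ket{i_n}] \to C'[\ket{i_1},\dots,\ket{i_n},t_1^j,\dots,t_m^j]$. We compile each right-hand side recursively, treating the $t_k^j$ by induction on the reduction. Each compiled body is wrapped as a multi-controlled gate on the control qubits with values $i_1\dots i_n$. Since the control pattern differs between rules, these controlled blocks act on orthogonal subspaces, so composing them realises the superposed rule application that rule (Q) performs in parallel. Structure preservation guarantees that all the $t^j$ land on the same output wires, so the branches are wire-compatible.

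Correctness follows from linearity. The equivalence $\equiv$ of Figure~\ref{tab:equiv} matches the linearity of circuits, and Lemma~\ref{lem:confluence} together with Lemma~\ref{lem:canonical-typed} lets us identify the normal form $v'$ with the unique canonical form reached. An induction on the number of reduction steps then shows that the exact unitary maps $\ket{\overline v}$ to $\ket{v'}$, tensored with ancillas. Subject reduction (Lemma~\ref{lem:subred}) keeps all intermediate terms typed, so the intermediate states remain normalised.

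The final step bounds the error. The approximation errors of the individual gates add up, and we want total error $\varepsilon$ with $1-\varepsilon \ge \sqrt{2/3}$, so that the fidelity satisfies $\lvert\langle v'|\phi\rangle\rvert^2 \ge 2/3$. The number of gates $N_w$ is finite for each $w$, so we approximate each gate to precision $\varepsilon/N_w$.

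The main obstacle is making the quantum-control case rigorous. The recursive calls inside the $t_k^j$ may differ across rules. Their circuits must be synthesised under control, and because they may have different depths they need to be padded to a common wire and ancilla layout. Showing this is consistent, and that the ancillas are returned clean so that the interference in the (Q) step is preserved, is where the work lies. I would first try to prove a lemma that each compiled symbol yields a unitary of the form $U \otimes \mathrm{id}_{\text{anc}}$ on clean ancillas, by simultaneous induction on termination.
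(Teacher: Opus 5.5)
Your proposal follows essentially the same route as the paper, which proves this theorem by reusing the construction of Theorem~\ref{thm:compile-bound} without the merging step: compile per input structure; realise a symbol that encodes a unitary as an isometry extended to a unitary; compile each rule of a quantum-controlling structural set under control of its qubit pattern, with structure preservation aligning the wires; and finish with a Solovay--Kitaev gate approximation. Your per-gate precision $\varepsilon/N_w$ is more careful than the paper's $\epsilon = \frac{2}{3K}$. The ancilla issue you flag, which the paper dispatches with a one-line wire-merging remark, closes with the invariant that each compiled symbol maps $\ket{\overline u}\otimes\ket{0\cdots 0}$ exactly to (the encoding of) its normal form, with all scratch wires returned to $\ket 0$. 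That invariant is needed rather than literally $U\otimes\mathrm{id}_{\text{anc}}$, since a dimension-increasing isometry must absorb some ancillas into its output.
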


We are now interested in obtaining bounds on the size of the generated circuit.
Towards that end, an ordering $\ordfunceq$ on function symbols is introduced:
$\mtf \ordfunceq \mtg$ holds if $\mtf$ calls $\mtg$ in its reduction.
Formally, $\ordfunceq$ is defined as the smallest transitive relation
satisfying $\mtf \ordfunceq \mtg$ when there exists a context $C$ such that
$\mtf(\overline s) \to C[\mtg(\overline t)] \in \trsrule$.
We also define its induced strict order and equivalence relation respectively
by $\ordfunc$ and $\approx_\mfR$.
The \emph{rank} of a function symbol is defined inductively as follows,
with the convention that $\max(\emptyset) = 0$:
\[
  \rank \triangleq \max_{\mtf \ordfunc \mtg} (\rank[\mtg] + 1)
\]
\newcommand{\callee}{\mathrm{Callee}(\mtf, S)}
Function symbols are restricted by prohibiting mutal recursion
and enforcing that recursive calls are always performed on the same input.
This is defined formally below using the notation $\callee \triangleq
\{\mtf(\overline s) \gmid l \to C[\mtf(\overline s)] \in S\}$, for
$\mtf \in \mcF$ and for a $\mtf$-structural set $S$.

\begin{definition}
  A QTRS $\mfR$ is said to be \emph{simply recursive} if, for all function
  symbols $\mtf,\mtg$, $\mtf \approx_{\mfR} \mtg$ implies $\mtf = \mtg$ and,
  for any $\mtf$-structural set $S$, $\# \callee \leq 1$.
  Define $\qtrsrec$ as the set of simply recursive QTRS.
\end{definition}

These conditions are akin to~\cite{HPS25}, allowing to merge recursive
calls together and avoid an exponential blow-up in the size of the circuit;
they are verified for the QTRS of Example~\ref{ex:qft}.
We define the following syntactic sugar for a tuple
$\overline{v} =(v_1,\ldots,v_n)$:
$\size{\overline v} \triangleq \sum_{i=1}^n \size{v_i}$.

\begin{definition}
  Let $\tau : \mbN \to \mbN$ be a non-decreasing function.
  Define $\timeset$ as the set of QTRS of main symbol $\mtf$, where for any
  $\overline v \in \funcin$, $\mtf(\overline v)$ terminates in at most
  $\tau(\size{\overline v})$ steps.
\end{definition}

On the fragment of QTRS defined below, the size of the circuits
$\famcirc$, generated by the proof of Theorem~\ref{thm:compile}, can
be upper-bounded and related with the runtime-complexity of the QTRS:
\[
  \qtrstime \triangleq \qtrscirc \cap \qtrsrec \cap \timeset .
\]

\begin{restatable}{theorem}{compilebound}\label{thm:compile-bound}
  Take $\mfR \in \qtrstime$ of main symbol $\mtf$.
  Then, for any $\overline v \in \funcin$,
  $\size{C_{\struct{\overline v}}} = \mcO(\tau(\size{\overline v}
  )^{\rank + 1})$, where $C_{\struct{\overline v}} \in \famcirc$ is
  the compiled circuit of corresponding structure.
\end{restatable}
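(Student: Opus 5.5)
We argue by induction on $\rank$, following the structure of the compilation from the proof of Theorem~\ref{thm:compile}. The compiled circuit $C_{\struct{\overline v}}$ is built by unfolding the rewrite rules of $\mtf$ along the input structure. The gates it contains are of two kinds. The first kind are unitary blocks for symbols that encode a unitary; each has size bounded by a constant depending only on $\mfR$. The second kind are controlled copies of the sub-circuits compiled for the right-hand sides of the rules in a $\mtf$-structural set $S$, for symbols that quantum control. The aim is to show that the number of such blocks produced at the top level is $\mcO(\tau(\size{\overline v}))$, and that each block calling a symbol $\mtg$ with $\mtf \ordfunc \mtg$ contributes $\mcO(\tau(\size{\overline v})^{\rank[\mtg]+1})$. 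Since $\rank[\mtg] + 1 \leq \rank$, this gives the $\mcO(\tau^{\rank+1})$ total.

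\textbf{Base case ($\rank = 0$).} Here $\mtf$ calls no other symbol. Since $\mfR$ is simply recursive, each right-hand side in $S$ contains at most one call, to $\mtf$ itself, and all these calls are the same term. The rules in $S$ therefore act, up to controls on the qubits left untouched by the context $C'$, as a constant-size gate followed by one merged recursive call. This call is on inputs whose structure is determined by $\struct{\overline v}$, by structure preservation. Every recursive unfolding corresponds to at least one (Q)/(C) reduction step in the reduction of $\mtf(\overline v)$: by quantum parallelism, a step in superposition is counted once. The depth of unfolding is therefore at most $\tau(\size{\overline v})$, and the circuit has size $\mcO(\tau(\size{\overline v}))$.

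\textbf{Inductive step.} The same unfolding argument bounds the number of recursive layers of $\mtf$ by $\tau(\size{\overline v})$. Each layer contains a constant number, depending only on $\mfR$, of calls to symbols $\mtg$ with $\mtf \ordfunc \mtg$. Each such call is applied to intermediate values $\overline w$ reachable during the reduction. Since every step of the reduction enlarges terms by at most a constant amount, $\size{\overline w} = \mcO(\size{\overline v} + \tau(\size{\overline v}))$. We then apply the induction hypothesis to $\mtg$, viewed as the main symbol of the restricted QTRS (which lies in $\qtrstime$ with a bound $\tau'$). Because this call is a subreduction of the reduction of $\mtf(\overline v)$, its number of steps is bounded by $\tau(\size{\overline v})$ itself, so we may take $\tau'$ to be $\tau(\size{\overline v})$, avoiding a composed bound $\tau(\tau(\cdot))$. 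Each callee block thus has size $\mcO(\tau(\size{\overline v})^{\rank})$. Multiplying by the $\tau(\size{\overline v})$ layers gives the bound. Adding the control wiring for quantum-controlling symbols increases each block's gate count by at most a constant factor, using a standard decomposition of multiply-controlled gates with ancillas, at cost linear in the constant number of control qubits.

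\textbf{Main obstacle.} The delicate step is justifying that a call $\mtg(\overline w)$ inside the compilation of $\mtf$ costs at most $\tau(\size{\overline v})$ steps, rather than $\tau$ of its own input size. Here $\overline w$ ranges over all branches of a superposition, and the circuit must handle the worst branch of a given structure. I would first handle this by pairing each classical basis branch of the reduction of $\mtf(\overline v)$ with the sub-chain executing $\mtg$. Termination in at most $\tau(\size{\overline v})$ steps is required for every chain, so every branch of $\mtg$ finishes within that many steps. Structure preservation then guarantees that all branches share one structure, so a single compiled $C_{\struct{\overline w}}$ serves them all.
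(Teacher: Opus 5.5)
Your count bounds the \emph{exact} compiled circuit, in which each block for a unitary-encoding symbol is one gate of constant size. The statement, however, is about $C_{\struct{\overline v}} \in \famcirc$. These circuits only approximate $v'$ with probability $\frac 2 3$ because, in the construction behind Theorem~\ref{thm:compile}, every exact gate is then replaced by a sequence over a fixed universal gate set. You never perform or pay for this step, and it is not free. Amplitude symbols of positive arity, such as $\mta(n)$ in Example~\ref{ex:qft}, give a different gate for each value of their argument. With $K$ exact gates, each must be approximated to precision of order $1/K$, which costs $\mcO(\log^c K)$ gates apiece by Solovay--Kitaev.

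The paper budgets the exponent for exactly this. It bounds the exact circuit by $K = (\size{\trsrule}\,\tau(\size{\overline v}))^{\rank}$, then approximates with $\epsilon = \frac{2}{3K}$ to get $\mcO(K\log^c K) = \mcO(\tau^{\rank+1})$. You spend the extra factor $\tau$ on the recursion depth at rank $0$ instead. Once the missing step is added, your argument gives only $\mcO(\tau^{\rank+1}\log^c\tau)$.

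You also cannot repair this by importing the paper's exact bound. Your rank-$0$ count is the realistic one: the paper's quantum-control case invokes $\rank[\mtg] \geq 1$, which misses a self-recursive rank-$0$ symbol that prepares a fresh superposed qubit at each of up to $\tau$ levels. So the exponent bookkeeping needs a genuine argument.

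A second problem is the inductive step. There you apply the induction hypothesis to $\mtg$ ``viewed as the main symbol of the restricted QTRS'' in $\qtrstime[\tau']$, with $\tau' = \tau(\size{\overline v})$. That membership does not hold. The time bound is known only for calls $\mtg(\overline w)$ actually reached from inputs of structure $\struct{\overline v}$. The structure-indexed circuit for $\mtg$, by contrast, is compiled for all inputs of structure $\struct{\overline w}$. Your resolution of the ``main obstacle'' covers only the reached branches.

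For instance, let $\mtf$'s rule call $\mtg(\zket, n, q)$, where $\mtg$ quantum controls on its first argument. Let its $\oket$ branch apply $\mathtt H$ to $q$ some $2^n$ times, via a classical computation of $2^n$. Then $\mtf$ runs in a constant number of steps, yet the standalone circuit for $\mtg$ contains $2^n$ controlled gates.

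The paper does not compile callees as standalone main symbols. It compiles them inside the unfolding of $\mtf$, and prunes gates controlled by wires known to be $\zket$, discarding paths that the reduction of $\mtf(\overline v)$ never takes. For your induction to go through, you would need a hypothesis restricted to reachable inputs, together with a correspondingly specialised sub-circuit.
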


We end this section by providing an implicit characterization of the
complexity class $\fbqp$.
Recall that a family of circuits $(C_n)_{n \in \mbN}$ is said to be
\emph{uniform polynomially-sized} if there exists $P \in \mbN[X]$ such that
$\size{C_n} \leq P(n)$ and there is a polynomial-time Turing machine, which
takes $n$ as input and outputs a representation of $C_n$ for any $n \in \mbN$.

\begin{definition}[\cite{BV97}] \label{def:fbqp}
  A binary function $f : \set{0,1}^* \to \set{0,1}^*$ is said to be computed by
  a family of circuits $(C_n)_{n \in \mbN}$ if, for any $x \in \set{0,1}^*$,
  $C_{\size x}$ approximates $f(x)$ with probability $\frac 2 3$ on input $x$.
  $\fbqp$ is defined as the set of binary functions computed by a uniform
  polynomially-sized family of circuits.
\end{definition}

Define the fragment of QTRS terminating in polynomial time on lists of qubits.
\[
  \qtrspoly \triangleq \cup_{P \in \mbN[X]} \{
    \trsyntax_\mtf \in \qtrstime[P] \mid
    \sign \mtf = \typelist[\qbit] \to \typelist[\qbit]
  \}
\]

Using the characterization of $\fbqp$ from~\cite{Yam20} and
Theorem~\ref{thm:compile-bound}, $\qtrspoly$ is shown to correspond exactly to
$\fbqp$.
Let $\{\!\!\{ \qtrspoly \}\!\!\}$ be the set of binary functions that can be
computed by $\famcirc$ for $\mfR \in \qtrspoly$.

\begin{restatable}[$\fbqp$ characterization]{theorem}{fbqpcharac}
  \label{thm:fbqp}
  $\{\!\!\{ \qtrspoly \}\!\!\} = \fbqp$.
\end{restatable}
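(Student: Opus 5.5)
The plan is to prove the two inclusions separately.

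For $\{\!\!\{ \qtrspoly \}\!\!\} \subseteq \fbqp$, take $\mfR \in \qtrstime[P]$ with $\sign \mtf = \typelist[\qbit] \to \typelist[\qbit]$. Inputs of $\mtf$ are qubit lists, so a structure in $\structset \mtf$ is a list of units. It is therefore determined by its length $n$, and an input $x \in \set{0,1}^n$ encoded as $\ket{x_1}::\dots::\ket{x_n}::\nil$ has size $2n+1$.

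By Theorem~\ref{thm:compile}, the circuit $C_{\struct{x}}$ approximates the normal form of $\mtf(x)$ with probability $\frac 2 3$. By Theorem~\ref{thm:compile-bound}, its size is $\mcO(P(2n+1)^{\rank+1})$. Since $\rank$ is a constant of the fixed system $\mfR$, this bound is polynomial in $n$.

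It remains to check uniformity, i.e.\ that $C_n$ is produced in polynomial time from $n$. I would argue this from the constructive proof of Theorem~\ref{thm:compile}. The generator unfolds the rules of $\mfR$ along the structure, and under simple recursion the recursive calls are merged. The number of unfolding steps is therefore bounded by the reduction length, so the generation time is polynomial in the circuit size. This uniformity step is the main obstacle: it requires reopening the compilation procedure rather than using only its statement.

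For $\fbqp \subseteq \{\!\!\{ \qtrspoly \}\!\!\}$, I would use the characterization of \cite{Yam20}. It presents $\fbqp$ as the closure of a set of basic quantum functions on qubit strings under composition, branching on a qubit, and a restricted multi-qubit recursion scheme. I would encode each basic function as a $\qtrscirc$ function symbol, in the style of Example~\ref{ex:cliffordt}.

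Branching is encoded by rules that match on $\zket$ or $\oket$ and leave the control qubit in place. This gives symbols that quantum control and preserve structure, as for $\mathtt{CNOT}$. Composition uses a fresh symbol with a new rank. The recursion scheme is encoded in the style of the $\rec$ and $\rot$ symbols of Example~\ref{ex:qft}, with one recursive call per structural set, so the resulting system lies in $\qtrsrec$ and mutual recursion never arises.

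By induction on the derivation in \cite{Yam20}, I would then show that each encoded function reduces in polynomially many steps in the list length. This step needs care because the strategy is call-by-basis and parallel: one (Q) step reduces all branches simultaneously, so the step count is governed by the longest branch rather than by their sum. Superposed intermediate states stay typable by Lemma~\ref{lem:subred}, since every encoded function terminates.

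This places the encoding in $\qtrspoly$. Its compiled family then computes the same function, because by Theorem~\ref{thm:compile} the circuits reproduce the normal forms, which equal the function outputs.
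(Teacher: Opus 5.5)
Your proposal follows the paper's proof essentially step for step. Soundness comes from the size bound of Theorem~\ref{thm:compile-bound}, with the rank a constant of $\mfR$ and uniformity read off the constructive compilation; completeness comes from encoding Yamakami's algebra $\square_1^{\mathrm{QP}}$ \cite{Yam20} (basic gates, composition, branching, $k$-qubit recursion) into $\qtrspoly$ and bounding the reduction length by induction. On one point you are more careful than the paper: you require the recursion scheme to be encoded without mutual recursion (e.g.\ via an extra control argument, as $\mathtt{rec}$ does in Example~\ref{ex:qft}), whereas the paper's two symbols $\mtf$ and $\mathtt{rec}$ call each other, so $\mtf \approx_\mfR \mathtt{rec}$ and the simple-recursion condition of $\qtrsrec$ is not literally met.
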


\section{Complexity Analysis}
\label{sec:complexity}
The previous section emphasized that termination of a QTRS is a crucial
hypothesis used for many results to hold, e.g., subject reduction
(Lemma~\ref{lem:subred}), type inference (Theorem~\ref{thm:type-inference}), or
circuit compilation (Theorem~\ref{thm:compile}).
In this section, we discuss how termination of a STRS can be proved, by
adapting existing techniques from standard TRS.
This is done by extending orders on classical terms to orders on any term with
superposition (\emph{worst path orderings}, Section~\ref{sec:wpo}).
We show how  these orders can be used to adapt well-known existing termination
techniques --- e.g., polynomial interpretations and
dependency pairs --- to the QTRS setting.
Finally, we also show how we can obtain complexity results for polynomial
time, yielding a decidable criterion to check whether a given QTRS
computes a function in $\mathtt{FBQP}$.

\subsection{Worst Path Orderings}\label{sec:wpo}

Recall first some definitions.
A \emph{quasi-order} $\ordeq$ is a reflexive and transitive relation.
A \emph{strict order} $\succ$ is an irreflexive and transitive relation.
For each quasi-order, we can associate the strict order $s \succ t \iff s
\ordeq t \wedge \neg(t \ordeq s)$.
If no infinite chain $t_1 \succ t_2 \dots$ can be built for a strict order,
it is called \emph{well-founded}.
A STRS $\mfR$ is said to be \emph{compatible} with a strict order $\succ$, if
for all rules $l \to r \in \trsrule$, $l \succ r$.
An order $\ord$ is said to be \emph{monotonic}, if $s \succ t$ implies
$\mtb(\dots, s, \dots) \ord \mtb(\dots, t, \dots)$ for all terms $s, t$ and for
all $\mtb \in \mcC \biguplus \mcF$.
An order $\ord$ is said to be \emph{closed under substitutions}, if
$s \succ t$ implies $s \sigma \ord t \sigma$, for all terms $s,
t$ and substitutions $\sigma$.
A \emph{rewrite order} is a monotonic order closed under substitutions.

On standard TRS, proofs of termination often include finding a
well-founded rewrite order $\ord$, such that if $s$ reduces to $t$, then $s
\ord t$.
However, existing orders are defined on classical terms.
To extend these orders to any term of a STRS, remark the following lemma.

\begin{restatable}[]{lemma}{chainsum}\label{lem:chain-sum}
  Let $\mfR$ be a STRS, and let $t = \sum_{i=1}^n \alpha_i \cdot t_i$ be a
  well-typed term. If $t$ starts an infinite chain on the order induced by
  $\toq$, then so does $t_i$ for some $1 \leq i \leq n$.
\end{restatable}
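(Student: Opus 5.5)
We argue by contraposition: if every summand $t_i$ terminates (no infinite $\toq$-chain starts from it), then so does $t$. First we reduce to a canonical form. By Lemma~\ref{lem:canonical-typed}, the well-typed term $t$ has a canonical form $\sum_{j=1}^m \beta_j \cdot u_j$ with pairwise distinct classical $u_j$. Each $u_j$ is obtained from the summands $t_i$ by distributing constructors and function symbols over sums and scalars and merging equal classical terms, so each $u_j$ is a classical summand of some $t_i$. We therefore first show that termination of $t_i$ implies termination of each classical term occurring in its canonical form. This follows because any rule-(Q) reduction of $t_i$ reduces each of its canonical summands by at most one step and at least one summand strictly. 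Termination of $t_i$ thus bounds the length of any chain from its summands, and by K\"onig-style reasoning it suffices to consider the finitely many classical summands. Since rule (C) reductions of a classical term are deterministic (orthogonality plus the fixed left-to-right evaluation contexts), "terminates" for a classical term just means its unique reduction sequence is finite, of some length $k_j$.

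Next we bound chains from $t$. Let $K = \max_j k_j$. We claim that any $\toq$-chain from $t$ has length at most $K$. To prove this, we track a measure along the chain: for a term $s$ whose canonical form is $\sum_l \gamma_l \cdot s_l$, let $\mu(s) = \max_l \ell(s_l)$, where $\ell(s_l)$ is the length of the deterministic classical reduction of $s_l$. Rule (E) does not change the canonical form up to reordering, merging and removal of zero-amplitude summands, so it does not increase $\mu$. This uses the fact that merging or cancelling summands only removes classical terms, and that the classical terms reached are reducts of the original ones. Rule (Q) applies one (C)-step to some summands, leaving normal forms fixed and reducing at least one summand. Each summand $s_l \toq r_l$ yields $r_l$ whose canonical summands are classical reducts lying one step further along deterministic paths. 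Their $\ell$ therefore drops by one, while summands already in normal form have $\ell=0$ and stay put. Hence $\mu$ strictly decreases on every non-trivial step, so no infinite chain exists.

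The main obstacle is making the (Q) step rigorous when a reduct $r_l = E[r\sigma]$ of a classical summand is itself a superposition. In that case its canonical summands are not literally "one step further" on a single deterministic path; they are the classical terms produced after linearly expanding the right-hand side. Our plan is to replace $\ell$ by the height of the (finitely branching, well-founded by the termination hypothesis) tree of all classical terms reachable by alternating (C)-steps and canonical decomposition. We then argue by K\"onig's lemma that this height is finite for each $t_i$, and show that both (E) and (Q) decrease the multiset of heights of canonical summands in the multiset extension of $>$ on $\mbN$. Well-foundedness of the multiset order then yields termination of $t$, contradicting the assumed infinite chain.
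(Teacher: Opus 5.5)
\paragraph{Gap: termination of $t_i$ does not control its classical summands.}
The gap is in your first step: the claim that termination of $t_i$ implies termination of every classical term in the canonical form of $t_i$ (and, in your last paragraph, a finite reduction-tree height for them). Your justification tracks one (Q)-step at a time. But the canonical summands of the reduct of $t_i$ need not include the reducts of its original summands, because these are recombined modulo $\equiv$. Destructive interference can erase exactly the diverging branch, so termination of $t_i$ says nothing about chains from its summands. This is the converse of the lemma, and it is false.

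Take the rules $\mathtt{f}(\zket)\to\stwo\cdot\zket+\stwo\cdot\mathtt{loop}$, $\mathtt{f}(\oket)\to\stwo\cdot\zket+\mta_{-1}\cdot\stwo\cdot\mathtt{loop}$ and $\mathtt{loop}\to\mathtt{loop}$. Let $t=\mta_1\cdot t_1$ with $t_1=\stwo\cdot\mathtt{f}(\zket)+\stwo\cdot\mathtt{f}(\oket)$. Then $t$ is well-typed, being equivalent to $\mathtt{f}(\stwo\cdot\zket+\stwo\cdot\oket)$ with $\mathtt{f}:\qbit\to\qbit$. Moreover $t_1$ terminates: every step from it leads to a term equivalent to the normal form $\zket$, since the amplitudes of $\mathtt{loop}$ sum to $0$. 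Yet both canonical summands $\mathtt{f}(\zket)$ and $\mathtt{f}(\oket)$ diverge.

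Hence $\mu(t)$ and your tree heights are infinite although the hypothesis of your contrapositive holds (and $t$ does terminate). K\"onig's lemma cannot help, since the tree has an infinite branch. Adding a rule $\mathtt{h}(\zket)\to t_1$ shows the same phenomenon for a single terminating classical term. So no measure that takes the worst case over classical summands can be derived from termination: it ignores interference, and can only serve as a sufficient condition for termination, as with the worst path orderings of Section~\ref{sec:wpo}.

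\paragraph{Missing idea: project each step of $t$ onto the given summands $t_i$.}
You need to keep the given decomposition $t=\sum_i\alpha_i\cdot t_i$ and project each step of $t$ onto the $t_i$ themselves, not onto the classical summands of $t$.

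\begin{enumerate}
\item Write $t\equiv\sum_j\beta_j\cdot s_j$ in canonical form, and $t_i\equiv\sum_j\gamma_{ij}\cdot s_j$ over a common family of classical terms (padding with zero amplitudes), so that $\beta_j=\sum_i\alpha_i\gamma_{ij}$.
\item By Lemma~\ref{lem:red-can}, a step $t\toq t'$ gives $t'\equiv\sum_j\beta_j\cdot s_j'$ with $s_j\toqif s_j'$. Extend the choice of $s_j'$ to the $s_j$ with $\beta_j=0$; they contribute nothing.
\item Set $t_i'=\sum_j\gamma_{ij}\cdot s_j'$. Then $t'\equiv\sum_i\alpha_i\cdot t_i'$ and $t_i\toqif t_i'$ up to $\equiv$.
\item At least one of these is a proper step: some properly reducing $s_j$ has $\beta_j\neq0$, hence $\gamma_{ij}\neq0$ for some $i$, and that $t_i$ reduces.
\end{enumerate}

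Iterating along the infinite chain, the pigeonhole principle gives an index $i$ whose projected chain makes infinitely many proper steps. This is the paper's argument. Interference is harmless there because it only uses the equivalence $t'\equiv\sum_i\alpha_i\cdot t_i'$, which holds whatever cancels.
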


We also have the contrapositive result: if all $t_i$ terminate, then
so does $t$.
Therefore, the idea is to allow the order to always take the worst path, i.e.,
the term of a superposition taking the most time to reduce; this can also be
seen as taking the maximum over multiple paths.
This is formalized below.

\begingroup
\ebproofset{
  template=\small$\inserttext$,
}
\begin{definition}\label{def:wpo}
  Let $\ordeq$ be a quasi-order on classical terms. We define the \emph{worst
  path extension} of $\ordeq$ as the relation $\ordexteq$ defined
  inductively as follows:
  \[
    \begin{array}{c}
      \begin{prooftree}[]
        \hypo{s, t \in \trsetclassical\vphantom{\candef}}
        \hypo{s \ordeq t}
        \infer2{s \ordexteq t}
      \end{prooftree}
      \quad
      \begin{prooftree}
        \hypo{s \equiv \candef[s]}
        \hypo{\exists\,i,\,s_i \ordexteq t}
        \infer2{s \ordexteq t}
      \end{prooftree}
      \quad
      \begin{prooftree}
        \hypo{t \equiv \candefb t}
        \hypo{\forall\,j,\,s \ordexteq t_j}
        \infer2{s \ordexteq t}
      \end{prooftree}
    \end{array}
  \]
\end{definition}
\endgroup

This relation enjoys the following properties. In particular, it is actually a
(quasi-)order, hence the name worst path ordering.

\begin{restatable}{lemma}{wpoprop}\label{lem:wpo}
  Given a quasi-order $\ordexteq$,
  the following properties are satisfied,
  for any $s \equiv \candef[s]$ and $t \equiv \candefb t$:
  \begin{itemize}
    \item $s \ordexteq t \iff \forall j, \exists i, s_i \ordeq t_j$;
    \item If $s, t$ are classical, $s \ordexteq t \iff s \ordeq t$;
    \item $\ordexteq$ is a quasi-order.
    \item If $\ord$ is a rewrite order, then
      $\ordext$ is a rewrite order too.
  \end{itemize}
\end{restatable}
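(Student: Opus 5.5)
The plan is to prove the first item, the characterization $s \ordexteq t \iff \forall j, \exists i, s_i \ordeq t_j$, and derive the others from it. Note that the statement should read "given a quasi-order $\ordeq$" on classical terms, with $\ordexteq$ its worst path extension.

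For the direction ($\Leftarrow$), fix $j$ and pick $i$ with $s_i \ordeq t_j$. Since $s_i, t_j$ are classical, the first rule gives $s_i \ordexteq t_j$. The second rule, using the canonical form of $s$, gives $s \ordexteq t_j$. As this holds for all $j$, the third rule gives $s \ordexteq t$.

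For ($\Rightarrow$), I induct on the derivation of $s \ordexteq t$, strengthening the claim to arbitrary canonical forms of $s$ and $t$. The base case is two classical terms. By Lemma~\ref{lem:canonical-typed}, or rather by the uniqueness of canonical forms up to reordering and amplitude equivalence, a classical term is its own canonical form, so $n = m = 1$ and the claim is immediate.

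The second rule uses some canonical form $\sum_k \gamma_k \cdot s'_k$ of $s$, with $s'_k \ordexteq t$ for some $k$. By the induction hypothesis applied to $s'_k$ (classical, canonical form itself) and $t$, we get $\forall j, s'_k \ordeq t_j$. Uniqueness of canonical forms then identifies $s'_k$ with some $s_i$.

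The third rule is symmetric. For each $j'$ of the chosen canonical form of $t$, the induction hypothesis gives some $i$ with $s_i \ordeq t'_{j'}$, and uniqueness matches the $t'_{j'}$ with the $t_j$.

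The main obstacle is precisely this uniqueness of canonical forms for \emph{arbitrary} (not necessarily well-typed) terms. Lemma~\ref{lem:canonical-typed} is stated only for well-typed terms. I would first try to prove directly that $\equiv$ is sound for the free-vector-space interpretation of terms: map each term to a formal $\mbC$-linear combination of classical terms by multilinear expansion, and check that each rule of Figure~\ref{tab:equiv} preserves this map. Two canonical forms of the same term then have the same support, which is exactly what is needed. If amplitudes with variables cause trouble, I would restrict to ground terms as in the rest of the section.

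The second item follows from the first with $n = m = 1$.

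For the quasi-order property, reflexivity follows by taking $i = j$ and using reflexivity of $\ordeq$. For transitivity, suppose $s \ordexteq t \ordexteq u$. For each $k$ there is $j$ with $t_j \ordeq u_k$, and then some $i$ with $s_i \ordeq t_j$. Transitivity of $\ordeq$ gives $s_i \ordeq u_k$.

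The rewrite-order item needs the associated strict order. The characterization gives $s \ordext t$ iff $\forall j \exists i, s_i \ordeq t_j$ and $\neg(\forall i \exists j, t_j \ordeq s_i)$.

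\textbf{Closure under substitution.} Since values substituted into classical terms are superpositions, $t\sigma$ expands multilinearly into a sum over classical instances $t_j\tau$, with $\tau$ ranging over classical choices of the substituted values. Closure of $\ordeq$ and $\ord$ under classical substitution then transfers pointwise. The delicate point is cancellation when grouping into a canonical form: summands can disappear, so the "$\exists$" witnesses may vanish. I would handle this by showing the relation is insensitive to which surviving summands are kept, or else restrict to the case where the considered terms are canonical after substitution.

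\textbf{Monotonicity.} This is similar: $\mtb(\dots, s, \dots)$ expands to $\sum_i \alpha_i \cdot \mtb(\dots, s_i, \dots)$ with pairwise distinct summands. Monotonicity of $\ord$ and $\ordeq$ in each classical summand then yields the result via the characterization.
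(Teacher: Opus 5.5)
Your treatment of the first three items follows the paper's argument. For ($\Leftarrow$) you apply the three rules of Definition~\ref{def:wpo} in reverse order; for ($\Rightarrow$) you induct on the derivation, using that every premise has a classical side; the other two items follow. The uniqueness you single out as the main obstacle is already Lemma~\ref{lem:canonical-unicity}. It is proved for arbitrary (untyped) terms via the quantity $\theta_s$, which is essentially your free-vector-space interpretation.

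\textbf{The gap is in the fourth item, and it cannot be closed as stated.} By your characterization, $s \ordext t$ contains a \emph{negative} clause: some $s_{i_0}$ has $\neg(t_j \ordeq s_{i_0})$ for every $j$. Monotonicity and closure under substitution only transport positive facts. So ``via the characterization'' you cannot get $\neg(\mtb(\dots,t_j,\dots) \ordeq \mtb(\dots,s_{i_0},\dots))$ or $\neg(t_j\sigma \ordeq s_{i_0}\sigma)$ unless $\ordeq$ is total.

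Concretely, let $\ordeq$ be the reflexive--transitive closure of the terminating ground system $\{\mathtt{e} \to \mathtt{d},\ \mtb(\mathtt{e}) \to \mtb(\mathtt{c})\}$. Its strict part $\to^+$ is a well-founded rewrite order, and $\ordeq$ itself is weakly monotonic and closed under substitution, so the extra assumptions you place on $\ordeq$ do not help. Take $s = \stwo\cdot\mathtt{e} + \stwo\cdot\mathtt{c}$ and $t = \stwo\cdot\mathtt{d} + \stwo\cdot\mathtt{e}$. Then $s \ordext t$, since $\mathtt{e}$ dominates both $\mathtt{d}$ and $\mathtt{e}$, while nothing in $t$ dominates $\mathtt{c}$. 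Yet $\mtb(t) \ordexteq \mtb(s)$ because $\mtb(\mathtt{e}) \to \mtb(\mathtt{c})$, so $\mtb(s) \ordext \mtb(t)$ fails. Similarly, replace the second rule by $\mathtt{e} \to \mtf(\mathtt{c})$ and use $\mtf(\mtx)$ instead of $\mathtt{c}$ in $s$. This breaks closure under the classical substitution $\mtx \mapsto \mathtt{c}$ in the same way.

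Your substitution step has two further problems.
\begin{itemize}
\item For superposed values, the multilinear expansion is not a sum of instances $t_j\tau$ when a variable is repeated. For example, $\mtf(\mtx,\mtx)$ with $\mtx\sigma = \stwo\cdot\zket + \stwo\cdot\oket$ produces the summand $\mtf(\zket,\oket)$. For $\to^+$ of $\mtf(\mtx,\mtx) \to \mtg(\mtx,\mtx)$, closure indeed fails.
\item Your first fix for cancellation cannot work, since a witnessing summand can genuinely vanish.
\end{itemize}
The paper's one-line proof restricts to classical $\sigma$, which is all rule (C) uses, but it passes over these same points.

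What is true, and all that Theorem~\ref{thm:wpo} needs, is the case of a classical left side. For classical $l$ and $r \equiv \candefb{r}$, one has $l \ordext r$ iff $l \ord r_j$ for all $j$. Then, for a classical context $C$ and classical $\sigma$, monotonicity and closure of $\ord$ give $C[l\sigma] \ord C[r_j\sigma]$ for every $j$. Hence $C[l\sigma] \ordext C[r\sigma]$, as long as $C[r\sigma]$ still has a canonical form.
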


Furthermore, it can be used to obtain termination of STRS compatible
with the ordering.

\begin{restatable}{theorem}{wpotermination}\label{thm:wpo}
  Let $\ordeq$ be a quasi-order, where its induced strict order is a
  well-founded rewrite order.
  Then, any STRS compatible with the worst path extension $\ordext$ terminates.
\end{restatable}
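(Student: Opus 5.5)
We argue by contradiction, following the standard proof for classical TRS: we show that each reduction step $s \toq t$ on ground terms yields $s \ordext t$. An infinite reduction chain then becomes an infinite descending chain for $\ordext$. Finally, we reduce such a chain to a descending chain of classical terms for $\ord$, which is impossible by well-foundedness.

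\textbf{Step 1: one step strictly decreases.} We prove by induction on the derivation of $s \toq t$ (Figure~\ref{tab:toq}) that $s \ordext t$.

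Rule (C) gives $E[l\sigma] \toq E[r\sigma]$ with $l \to r \in \trsrule$. Compatibility gives $l \ordext r$. By Lemma~\ref{lem:wpo}, $\ordext$ is a rewrite order, since the induced strict order of $\ordeq$ is one. Closure under substitution yields $l\sigma \ordext r\sigma$, and monotonicity, applied along the classical evaluation context $E$, yields $E[l\sigma] \ordext E[r\sigma]$.

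Rule (E) needs $\ordext$ to be invariant under $\equiv$. This holds because the inductive definition of the extension only inspects canonical forms. By Lemma~\ref{lem:wpo}, the extension is characterised as $\forall j, \exists i,\ s_i \ordeq t_j$, and such a statement is invariant under replacing $s$ or $t$ by an equivalent term. The same argument applies to the strict version.

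Rule (Q) gives $\sumdef[s] \toq \sumdef[t]$ with each $s_i \toqif t_i$, and $s_i \toq t_i$ for some $i$. For every summand $t_j$ of the canonical form of the target, we need some $s_i$ that dominates it. Each $t_i$ arises from some $s_i$, either by $s_i \ordext t_i$ (induction hypothesis) or by equality. Through the characterisation in Lemma~\ref{lem:wpo}, this gives $\forall j \exists i$ dominance for the quasi-order.

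Strictness is the delicate part. We must also exclude the reverse relation. A normal-form summand $s_i = t_i$ could make the reverse inequality hold, because $t$ contains the unchanged $s_i$ while $s$ contains it too. So at the level of whole sums, Rule (Q) really only gives $s \ordexteq t$, with strict decrease on the reducing component.

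\textbf{Step 2: avoiding strictness on sums via worst paths.} To bypass the issue in Rule (Q), we follow the idea of Lemma~\ref{lem:chain-sum}. Suppose $t^0 \toq t^1 \toq \cdots$ is infinite. We extract an infinite sequence of classical terms $u^0, u^1, \dots$, where $u^k$ is a summand of the canonical form of $t^{k}$ and $u^{k} \toq u^{k+1}$ (or $u^{k+1}$ is a summand of the reduct of $u^k$) infinitely often strictly.

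Concretely, we build a finitely branching tree. Its nodes at depth $k$ are the classical summands of $t^k$. The children of a node are the summands its reduct contributes to $t^{k+1}$; a node that is in normal form has itself as its only child. Every branch is a $\ordexteq$-chain. By Step 1 applied to Rule (C), each genuine classical step along a branch is a strict $\ord$-decrease.

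Infinitely many steps occur in the tree, each at some node, and the tree is finitely branching. By K\"onig's lemma, there is a branch containing infinitely many genuine steps. This needs care: we apply K\"onig to the subtree of nodes below which some genuine step still happens, and this subtree is infinite. Along that branch, removing the stuttering equalities produces an infinite strictly decreasing $\ord$-chain of classical terms. This contradicts well-foundedness; the second item of Lemma~\ref{lem:wpo} ensures the extension agrees with $\ordeq$ on classical terms.

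\textbf{Main obstacle.} The main obstacle is Step 2. Two points need attention. First, we must justify that summands of the reduct come from the summands of $s$: cancellations under $\equiv$ only remove summands, so they never create new paths. Second, we must show that the K\"onig argument yields infinitely many genuine steps on a single branch. If that is too fiddly, we would instead try to strengthen Step 1 to a multiset-style comparison, noting that Rule (Q) strictly decreases the multiset of summands under the multiset extension of $\ord$. That multiset extension is well-founded and would finish the proof directly.
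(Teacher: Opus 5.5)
Your proof is correct and follows essentially the paper's route. You track one classical summand along the infinite reduction, get a strict decrease at each Rule~(C) step from compatibility and the rewrite-order property of $\ordext$ (Lemma~\ref{lem:wpo}), and conclude from the agreement of $\ordext$ with $\ord$ on classical terms together with well-foundedness.

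The one difference is how the path is extracted. You apply K\"onig's lemma to the tree of summands of the given chain, and you rightly abandon the whole-term strict decrease of Step~1, which does fail for Rule~(Q). The paper instead restarts from the Rule~(C) reduct at each stage and invokes Lemma~\ref{lem:chain-sum} to pick a summand that still starts an infinite chain.
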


Theorem~\ref{thm:wpo} thus allows us to generate a termination technique, for
each well-founded rewrite order $\ord$ on classical terms.
Such orderings have been deeply-studied in the literature and we now
show a few adaptations of some of these orderings to the quantum setting.

\subsection{Polynomial Interpretations}\label{sec:inter}

Polynomial interpretations were introduced in \cite{Lan79} as a way to show
termination of a TRS.
The main idea is to associate each symbol with a monotonic function
$\assign -$, called \emph{interpretation}, such that it decreases strictly over
any rule $l \to r$, i.e., $\assign l > \assign r$.
This is extended by defining interpretations over superpositions as the
maximum of the interpretations.

Equip $\mbN$ with the natural strict ordering $\ordnat$.
Given two polynomials $P, Q \in \polset n$,
define $P \ordnat Q$ as follows:
\[
  P \ordnat Q \iff \forall x_1, \dots, x_n \in \mbN, P(x_1, \dots, x_n) \ordnat
  Q(x_1, \dots, x_n)
\]

An \emph{assignment} $\assign -$ is a function mapping each symbol in $\mcC
\biguplus \mcF$ to a  multi-variate polynomial $\assign \mtb \in
\polset{\arity}$.
Define $\polvar$ as the countable set of indeterminates that can
be used in polynomials,
and fix a map $\rho : \mcX \to \polvar$ for the rest of the paper.
Assignments are extended to any term of a STRS inductively as follows:
\begin{align*}
  \assign \mtx &\triangleq \rho(\mtx) \\
  \assign{\mtb(s_1, \dots, s_n)} &\triangleq
  \assign{\mtb}(\assign{s_1}, \dots,
  \assign{s_n}) \\
  \assign{t} &\triangleq \max_{1 \leq i \leq n} \assign{t_i}, \qquad
  \text{provided }t \equiv {\sum}_{i=1}^n \alpha_i \cdot t_i \in \can
\end{align*}

where $\max_{1 \leq i \leq n} \assign{t_i}$ is a polynomial $P$
satisfying $P \ordnateq \assign{t_i}$.
By unicity of the canonical form (Lemma~\ref{lem:canonical-typed}),
the assignment of a term is properly defined.

Given an assignment $\assign -$, define the order $\ordassigneq$ as $s
\ordassigneq t \iff \assign s \ordnateq \assign t$.
An assignment $\assign -$ is \emph{monotonic} if $\ordassign$ is monotonic.
A STRS $\mfR$ is \emph{compatible} with an assignment $\assign -$,
if $\mfR$ is compatible with its ordering $\ordassign$.
As $\ordassigneq$ can be showed to satisfy the properties of a worst path
ordering, finding a compatible assignment provides a criterion for termination.

\begin{restatable}{theorem}{thminter}\label{thm:inter}
  Any STRS compatible with a monotonic assignment terminates.
\end{restatable}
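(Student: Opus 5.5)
The plan is to reduce Theorem~\ref{thm:inter} to Theorem~\ref{thm:wpo}. We need a quasi-order $\ordeq$ on classical terms whose strict part is a well-founded rewrite order, and whose worst path extension $\ordext$ is contained in (or coincides with) the order $\ordassign$ induced by the assignment on arbitrary terms. Once this is in place, a STRS compatible with $\ordassign$ is also compatible with $\ordext$, and Theorem~\ref{thm:wpo} gives termination.

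The first step is to take $\ordeq$ to be $\ordassigneq$ restricted to $\trsetclassical$, so that $s \ordeq t \iff \assign s \ordnateq \assign t$. On classical terms the assignment is just the usual polynomial interpretation. Three facts then follow as in the classical TRS setting.
\begin{itemize}
  \item The strict part is well-founded: an infinite chain of classical ground terms would give an infinite strictly decreasing chain of naturals. For terms with variables, evaluate at any fixed point of $\mbN^k$; strict decrease of polynomials pointwise gives an infinite descending chain in $\mbN$.
  \item It is closed under substitutions, since $\assign{s\sigma}$ is $\assign s$ composed with the polynomials $\assign{\mtx\sigma}$. Pointwise strict inequality over all of $\mbN^k$ survives composition with $\mbN$-valued polynomials.
  \item It is monotonic by the hypothesis that the assignment is monotonic.
\end{itemize}
Hence the induced strict order is a well-founded rewrite order.

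The second step is to relate $\ordext$ with $\ordassign$ on superposed terms. By Lemma~\ref{lem:wpo}, for $s \equiv \candef[s]$ and $t \equiv \candefb t$, we have $s \ordexteq t \iff \forall j, \exists i, s_i \ordeq t_j$. By definition, $\assign s = \max_i \assign{s_i}$ and $\assign t = \max_j \assign{t_j}$. Compatibility with the assignment gives, for each rule, $\max_i \assign{l_i} \ordnat \max_j \assign{r_j}$. Here the left-hand side $l$ is classical, so $l$ is its own canonical form, and the condition reads $\assign l \ordnat \assign{r_j}$ for every $j$. This is exactly $l \ordext r_j$ for all $j$, hence $l \ordext r$ by the third rule of Definition~\ref{def:wpo}, strictly. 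Thus every rule $l \to r$ satisfies $l \ordext r$, and Theorem~\ref{thm:wpo} concludes.

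The main obstacle is the pointwise meaning of ``$\max$'' on polynomials. The paper only asks that $P \ordnateq \assign{t_i}$ for all $i$, and such a $P$ need not be a pointwise maximum. From $\assign l \ordnat P$ we would like $\assign l \ordnat \assign{r_j}$. This direction is fine: $P \ordnateq \assign{r_j}$, and transitivity of pointwise comparison on $\mbN$ gives the strict inequality. The converse direction, extracting a strict decrease between maxima from the worst path relation, is not needed. I would also check carefully that strictness is preserved in the strict part of $\ordext$ and not merely in the quasi-order. This uses $\assign l > P \geq \assign{r_j}$, so no $r_j \ordeq l$ can hold.
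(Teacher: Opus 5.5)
Your proof is correct in substance and uses the same ingredients as the paper: the restriction of $\ordassigneq$ to classical terms, a check that this gives a well-founded rewrite order, and Theorem~\ref{thm:wpo}. The difference is that the bridging step goes in the opposite direction.

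\textbf{The paper's route.} Lemma~\ref{lem:inter-wpo} proves $\ordexteq \subseteq \ordassigneq$, i.e.\ the assignment order dominates the worst path extension on all terms. That inclusion does not by itself turn compatibility with $\ordassign$ into compatibility with $\ordext$. This is why the paper cannot apply Theorem~\ref{thm:wpo} directly and only says its proof ``can be adapted''.

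\textbf{Your route.} You prove the converse, but only on rules. Because left-hand sides are classical, $\assign l \ordnat P \ordnateq \assign{r_j}$ gives $l \ordext r_j$ for every $j$, and hence $l \ordext r$. This lets you use Theorem~\ref{thm:wpo} as a black box, and it only needs the harmless direction of the under-specified $\max$.

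\textbf{Trade-off.} The paper's inclusion is a statement about arbitrary superposed terms, close in spirit to the fact $\assign{t_i} \leq \assign t$ used in Lemma~\ref{lem:inter-bound}. For this theorem, however, your direction is the one actually needed.

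\textbf{A point to tighten, shared with the paper.} Your well-foundedness and substitution-closure arguments are stated for the pointwise-strict relation $\assign s \ordnat \assign t$. Theorem~\ref{thm:wpo}, however, is phrased with the strict order induced by the quasi-order. That induced order is not closed under substitutions in general. For example, with $\assign{\natz} = 0$, $\mtx$ lies strictly above $\natz$ in the induced order, but the instance $\natz$ versus $\natz$ does not.

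This does not break your argument. The proof of Theorem~\ref{thm:wpo} only needs $C[l\sigma] \ordext C[r\sigma]$ for rule instances, and all terms in a reduction chain are ground. So pointwise strict decrease on the rules gives numeric strict decrease on every instance. You should say this explicitly rather than claim the induced strict order is a rewrite order.
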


\begin{example}
  Define the STRS $\mfR$ made with the following rules:
  \[
    \mcR = \left\{
      \begin{aligned}
        \mathtt{X}(\ket i) &\to \ket{1 - i} \\
        \mtf(\mtx, \nil) &\to \mtx :: \nil \\
        \mtf(\zket, h::t) &\to \mathtt{sgn}^+ \cdot \zket :: \mtf(h, t) +
        \mathtt{sgn}^+ \cdot \oket : \mtf(\mathtt{X}(h), t) \\
        \mtf(\oket, h::t) &\to \mathtt{sgn}^+ \cdot \zket :: \mtf(h, t) +
        \mathtt{sgn}^- \cdot \oket : \mtf(\mathtt{X}(h), t)
      \end{aligned}
    \right\}
  \]
  Then, given the following monotonic assignment:
  \[
    \assign{\ket i} = \assign \nil = 1 \qquad
    \assign{\consht}(X, Y) = X + Y + 1 \qquad
    \assign{\mathtt{X}} = X + 1\qquad
    \assign \mtf(X, Y) = X + 3Y
  \]
  Checking that $\mfR$ is compatible with this assignment is direct for the
  $\mathtt{X}$-rewrite rules and the first rule of $\mtf$; the check for the
  last two rules is done below:
  \[
    \assign{f(\ket i, h ::t)} =
    4 + 3\assign h + 3 \assign t \ordnat 3 + \assign h + 3 \assign t =
    \assign{\oket :: \mtf(\mathtt{X}(h), t)} \ordnat
    \assign{\oket :: \mtf(h, t)}
  \]
  Therefore, $\mfR$ terminates.
\end{example}

\subsection{Dependency Pairs}\label{sec:dpair}

Dependency pairs~\cite{AG00} is another termination technique, which has the
perk of dropping the strict monotonicity condition for rewrite rules.
The idea is to define, for any rewrite rule $l \to r$ and any subterm $s$ of
$r$ whose outermost symbol is a function symbol, a \emph{dependency pair}
$\dpair l s$.
Proving termination is equivalent to finding a quasi-ordering $\ordeq$
such that it decreases weakly on rewrite rules but strictly on
dependency pairs, i.e., $l \ordeq r$ and $l \ord s$.

This definition is adapted by requiring $t$ to be a subterm of the canonical
form of $r$, i.e., $r \equiv \candef[r]$ and $r_i = C[t]$ for some $i$, denoted
$C[t] \subcan r$ for conciseness.
As for worst path orderings, this definition will be sound, but not complete.

For each function symbol $\mtf \in \mcF$, we associate a corresponding
\emph{tuple symbol} $F$, of same signature.
For clarity, upper case symbols correspond to tuple symbols in the following.

\begin{definition}\label{def:dpair}
  Let $\mfR$ be a STRS. If $ \mtf(s_1, \dots, s_n) \to r \in \trsrule \wedge
  C[\mtg(t_1, \dots, t_m)] \subcan r, $ for $\mtg \in \trsfunc$, then
  $\dpair{F(s_1, \dots, s_n)}{G(t_1, \dots, t_m)}$ is called a dependency pair
  of $\mfR$.
\end{definition}

Chains are then defined as successive dependency pairs.

\begin{definition}\label{def:rchain}
  Let $\mfR$ be a STRS. A sequence of dependency pairs $\dpairdef 1,
  \dpairdef 2 \dots$ is a \emph{$\mfR$-chain}, if there exists a
  substitution $\sigma$, such that $t_j \sigma \toqs r_j$ and $s_{j+1}\sigma
  \subcan r_j$ for all $j$.
\end{definition}

The absence of infinite $\mfR$-chain implies termination, as in~\cite{AG00}.

\begin{restatable}[]{theorem}{rchain}\label{thm:rchain}
  A STRS $\mfR$ is terminating if no infinite $\mfR$-chain exists.
\end{restatable}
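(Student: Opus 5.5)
We argue by contraposition, following the classical proof of Arts and Giesl: assume $\mfR$ is not terminating and build an infinite $\mfR$-chain. Call a ground term \emph{minimal non-terminating} if it does not terminate but all its proper subterms (in canonical form) do. The first step is to reduce to classical terms. By Lemma~\ref{lem:chain-sum}, if a ground superposition $\sumdef$ in canonical form starts an infinite reduction, then some summand $t_i$ does too. Descending through subterms of a non-terminating classical term (and applying Lemma~\ref{lem:chain-sum} again whenever a subterm is a superposition), we obtain a classical non-terminating term $\mtf(\overline u)$ whose arguments $\overline u$ all terminate. Termination of the arguments uses the left-to-right evaluation strategy: a term $\mtf(u_1,\dots,u_n)$ can only be rewritten at the root once all $u_i$ are values, and values are normal forms.

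The second step is to extract one chain link. Take a minimal non-terminating $\mtf(\overline u)$. By the evaluation-context rule (C), its reductions first normalize the $u_i$ to terms $\overline{v}$; by confluence (Lemma~\ref{lem:confluence}) and Lemma~\ref{lem:chain-sum}, we may pick a classical summand of the normal form, so we may assume the $v_i$ are classical values. Since the $u_i$ terminate, an infinite reduction must eventually perform a root step $\mtf(\overline v) = l\sigma \toq r\sigma$ with $l = \mtf(\overline p) \to r \in \trsrule$, where $\sigma$ maps variables to classical values because the $p_i$ are patterns. Then $r\sigma$ is non-terminating. Write $r \equiv \candef[r]$. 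By Lemma~\ref{lem:chain-sum} some $r_i\sigma$ is non-terminating. Inside $r_i$, choose a minimal non-terminating subterm of $r_i\sigma$. It cannot lie in the substituted values, which are normal forms, so it has the form $\mtg(\overline t)\sigma$ with $C[\mtg(\overline t)] \subcan r$. This gives the dependency pair $\dpair{F(\overline p)}{G(\overline t)}$, and $\mtg(\overline t\sigma)$ is again minimal non-terminating.

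The third step is to iterate the construction. Starting from $\mtg(\overline t\sigma)$, we normalize its arguments $\overline t\sigma \toqs r_j$ and select a classical summand matching the next left-hand side $s_{j+1}\sigma \subcan r_j$. This is exactly the shape demanded in Definition~\ref{def:rchain}. Orthogonality, via left-linearity and distinct variables across renamed rules, lets us combine the successive substitutions into a single $\sigma$. Repeating yields an infinite $\mfR$-chain, a contradiction.

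The main obstacle is the interaction of superpositions with the parallel rule (Q). An infinite reduction of $r\sigma$ proceeds on all summands at once, after passing through equivalence, so "the path followed" is not literally a single classical reduction. I would handle this by proving an auxiliary lemma: $s$ is non-terminating iff some classical summand of (the canonical form of) $s$ admits an infinite classical-summand path. The proof would go by König's lemma on the finitely branching tree of summands, using Lemma~\ref{lem:chain-sum} at each level to choose a non-terminating summand. With this lemma, the argument reduces to the classical Arts--Giesl construction.
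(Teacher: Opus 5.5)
Your proposal is correct and follows essentially the same route as the paper. Both use the Arts--Giesl minimal non-terminating subterm construction, apply Lemma~\ref{lem:chain-sum} to pass to a non-terminating classical summand (once after normalizing the arguments and developing $\mtf(\overline z)$ by linearity, and once inside the canonical form of $r\sigma$), and then glue the successive substitutions after renaming the rules apart. Your K\"onig-style auxiliary lemma is not needed, since iterating Lemma~\ref{lem:chain-sum} already yields the infinite summand path, and you should claim only its ``only if'' direction: the converse fails under destructive interference. For example, take $\mathtt{loop} \to \mathtt{loop}$, $\mathtt{K}(\ket 0) \to \stwo \cdot \ket 0 + \stwo \cdot \mathtt{loop}$ and $\mathtt{K}(\ket 1) \to \stwo \cdot \ket 0 + \mta_{-1} \cdot \stwo \cdot \mathtt{loop}$, with $\mathtt{H}$ as in Example~\ref{ex:cliffordt}; then $\mathtt{K}(\mathtt{H}(\ket 0))$ reaches a term equivalent to $\ket 0$ in two steps, yet it has an infinite summand path through $\mathtt{loop}$.
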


Equivalently, termination can be proven by finding orderings as defined below,
which implies the absence of infinite $\mfR$-chain.

\begin{restatable}[]{theorem}{dpairorder}\label{thm:dpair}
  A STRS $\mfR$ is terminating if there exists a well-founded weakly monotonic
  quasi-ordering $\ordeq$ on classical terms, where $\ord$ and $\ordeq$
  are closed under substitution, such that:
  \begin{itemize}
    \item $l \ordeq r_i$ for all rules $l \to \sumdef[r] \in \trsrule$ and for
      all $i$, with $\candef[r]$;
    \item $s \ord t$ for all dependency pairs $\dpair s t$.
  \end{itemize}
\end{restatable}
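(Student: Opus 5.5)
We reduce Theorem~\ref{thm:dpair} to Theorem~\ref{thm:rchain}: assuming an infinite $\mfR$-chain $\dpairdef 1, \dpairdef 2, \dots$ with substitution $\sigma$, we derive an infinite strictly decreasing sequence for $\ord$. This contradicts well-foundedness. The chain condition says $t_j\sigma \toqs r_j$ and $s_{j+1}\sigma \subcan r_j$. It therefore suffices to establish the two inequalities $s_j\sigma \ord t_j\sigma$ and $t_j\sigma \ordeq s_{j+1}\sigma$ for every $j$; chaining them gives $s_1\sigma \ord s_2\sigma \ord \cdots$.

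The first inequality follows directly from the second hypothesis, since $s_j \ord t_j$ and $\ord$ is closed under substitution. Here the quasi-order is extended to tuple symbols in the standard way. For the second inequality, we first show a key lemma. \emph{If $u$ is classical and $u \toq u'$ with $u' \equiv \sum_k \gamma_k \cdot u'_k \in \can$, then $u \ordeq u'_k$ for every $k$.} More generally, the same holds along $\toqs$ when started from a canonical superposition: for each summand $w$ of the canonical form of the target, some summand of the source dominates $w$ via $\ordeq$.

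We prove the lemma by induction on the derivation of $\toq$ (Figure~\ref{tab:toq}).
\begin{itemize}
\item \textbf{Rule (C).} Here $u = E[l\rho]$ and $u' = E[r\rho]$, with $r \equiv \sum_i \alpha_i\cdot r_i$ canonical. By the linearity rules of $\equiv$, each summand of the canonical form of $E[r\rho]$ is of the form $E[r_i\rho]$; since $\rho$ maps to values, $r_i\rho$ may itself be a superposition, which needs extra distribution. The first hypothesis gives $l \ordeq r_i$. Closure under substitution, together with weak monotonicity through the evaluation context $E$, then gives $E[l\rho] \ordeq E[r_i\rho']$ for each basis choice.
\item \textbf{Rule (Q).} This rule acts summandwise, so it is handled by the induction hypothesis on each $s_i \toqif t_i$.
\item \textbf{Rule (E).} This rule is absorbed because we compare canonical forms, which are unique up to reordering and amplitude equivalence (for well-typed terms, Lemma~\ref{lem:canonical-typed}). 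Summands cancelling through zero amplitudes only remove obligations.
\end{itemize}
Transitivity of $\ordeq$ extends the lemma to $\toqs$.

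Applying the lemma to $t_j\sigma \toqs r_j$, with $t_j\sigma$ classical (it is a tuple-symbol term with values substituted; we may take $\sigma$ basis-valued, as in the call-by-basis discipline), yields $t_j\sigma \ordeq w$ for the summand $w = C[s_{j+1}\sigma]$ of $r_j$'s canonical form. It remains to pass from $C[s_{j+1}\sigma]$ to $s_{j+1}\sigma$. Here tuple symbols matter: the dependency pair's right side $G(\dots)$ marks the root, and the chain definition matches $s_{j+1}\sigma$ at the subterm position. As in~\cite{AG00}, we compare the unmarked reduct of $t_j\sigma$'s arguments, which are reduced below the tuple root, and use that the tuple root is never rewritten. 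Hence $t_j\sigma \ordeq s_{j+1}\sigma$ follows from weak monotonicity applied argumentwise rather than through a context.

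The main obstacle is the Rule (C) case under superposed substitutions and equivalence. In this case we must check that every summand of the canonical form of $E[r\rho]$ is dominated, even though canonicalization can merge or cancel summands. Our plan is to observe that cancellation only deletes summands and merging identifies equal classical terms, so the set of canonical summands is a subset of the set of the naive expansion $\{E[r_i\rho_k]\}$. Each element of that set is dominated by the monotonicity and substitution argument above.
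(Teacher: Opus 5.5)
Your proposal is correct and follows the paper's proof. Like the paper, you reduce to Theorem~\ref{thm:rchain} and chain $s_j\sigma \ord t_j\sigma \ordeq s_{j+1}\sigma$. Your summand-domination lemma is the unfolded form, via the first item of Lemma~\ref{lem:wpo}, of the paper's step ``$t \toqs s$ implies $t \ordexteq s$''; the difference is that you prove it by induction on $\toq$, whereas the paper only asserts it.

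One small correction: the theorem is about untyped STRS, so in the (E) case you should cite uniqueness of canonical forms from Lemma~\ref{lem:canonical-unicity}, not from Lemma~\ref{lem:canonical-typed}.
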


\subsection{Complexity Results}

Sections~\ref{sec:inter} and~\ref{sec:dpair} have provided different
techniques, based on
Lemma~\ref{lem:chain-sum}, for proving the termination of a STRS.
We now are interested in obtaining complexity results on the runtime-complexity,
which rely on the following lemma.

\begin{restatable}{lemma}{redsum}\label{lem:red-sum}
  Let $\mfR$ be a STRS, and let $t = \sumdef \in \trsetground$.
  Assume each $t_i$ terminates in at most $k_i$ steps.
  Then $t$ terminates in at most $\max_{1 \leq i \leq n} k_i$ steps.
\end{restatable}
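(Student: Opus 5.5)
We prove Lemma~\ref{lem:red-sum} by induction on $K \triangleq \max_{1 \leq i \leq n} k_i$, showing that every reduction chain starting from $t$ reaches a normal form in at most $K$ steps. The first step is a normalization of the decomposition. Using the rules of Figure~\ref{tab:equiv}, merge equal classical summands, drop summands whose amplitude is $0$, and distribute any superposition lying inside a $t_i$. This yields a canonical form $t \equiv \sum_{j=1}^m \beta_j \cdot u_j \in \can$ in which each $u_j$ is classical and each $u_j$ is a (classical) summand of some $t_i$. We then need the auxiliary fact that if $t_i$ terminates in at most $k_i$ steps, then so does every classical summand of its canonical form. This fact should itself follow by induction on $k_i$, from the shape of rule (Q) together with the observation that any one-step reduction of a summand lifts to a reduction of $t_i$. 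After this step we may assume without loss of generality that $t = \sum_i \alpha_i \cdot t_i$ is already canonical with classical $t_i$.

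Next we analyse a single step $t \toq t'$. Rules (E) and (Q) are the only ones applicable to a non-classical term. So, up to $\equiv$ (which preserves the set of reduction chains by rule (E)), the step has the form $t' \equiv \sum_i \alpha_i \cdot t'_i$ with $t_i \toqif t'_i$ for all $i$. Because of the $\toqif$ clause, every $t_i$ that is not a normal form must step. Orthogonality and the fixed left-to-right evaluation strategy make $\toq$ deterministic on classical terms up to $\equiv$ (Lemma~\ref{lem:confluence}). Hence each non-normal $t_i$ goes to a term $t'_i$ terminating in at most $k_i - 1$ steps, while normal $t_i$ stay put with $k_i = 0$. Since some $t_i$ actually reduces, the new maximum is at most $K - 1$. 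The term $t'$ is again a sum of the required form, though possibly not canonical, since cancellations may occur; the first step handles this. The induction hypothesis then gives that $t'$ terminates in at most $K-1$ steps. If $K = 0$, all $t_i$ are normal forms, so no instance of (Q) applies, because it requires some $s_i \toq t_i$, and $t \in \nf$.

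\textbf{Main obstacle.} The delicate point is the interaction with $\equiv$: a step from $t$ may be derived via rule (E) from an arbitrary equivalent representation, for example one in which summands are split or cancellations are hidden. We must argue that every such step is, up to $\equiv$, a (Q)-step on the canonical form. Our plan is to use uniqueness of the canonical form, adapting Lemma~\ref{lem:canonical-typed}, where well-typedness of $t$ is assumed as in Lemma~\ref{lem:chain-sum}. Combined with the fact that (C) only applies to classical terms, this shows that the only rule derivations available for $t$ are (Q) applied to its canonical form, possibly wrapped in (E). We must also check that new cancellations after a step can only remove summands and never lengthen chains. This follows because dropping zero-amplitude summands or merging identical classical summands does not increase the maximum of the remaining bounds.
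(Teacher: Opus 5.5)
There is a genuine gap in your first step. The auxiliary fact you rely on, that if $t_i$ terminates in at most $k_i$ steps then so does every classical summand of its canonical form, is false because of destructive interference. Your sketched induction breaks exactly there. A step of a summand does lift to a step of $t_i$, but the reduct of that summand may then be cancelled in the canonical form of the reduct of $t_i$, so there is nothing left to continue the induction with.

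Concretely, take the rules $\mtf(\zket) \to \stwo\cdot\zket + \stwo\cdot\mtg(\oket)$, $\mtf(\oket) \to \stwo\cdot\zket + \mta_{-1}\cdot\stwo\cdot\mtg(\oket)$ and $\mtg(\oket) \to \mtg(\oket)$. The term $t_1 = \stwo\cdot\mtf(\zket) + \stwo\cdot\mtf(\oket)$ is canonical. By (Q) it reduces in one step to a term equivalent to $\zket$, which is a normal form. So $t_1$ terminates in at most one step, while its classical summand $\mtf(\zket)$ does not terminate at all.

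Assuming typing does not help. Replace the loop by $\mtg(\oket) \to \mth(\oket)$ and $\mth(\oket) \to \oket$, with identity rules on $\zket$. This gives a genuine QTRS in which $t_1$ is well typed and still takes one step, while $\mtf(\zket)$ takes three. Hence your reduction to ``$t$ canonical with classical $t_i$'' preserves neither the bound $K$ nor even termination. The same flaw reappears each time you re-normalize $t'$ after a step.

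The fix, which is the paper's route, is never to flatten the $t_i$ away.
\begin{itemize}
\item Expand all $t_i$ over a common finite set of pairwise distinct classical terms $s_1,\dots,s_m$, padding with zero amplitudes. This gives $t_i \equiv \sum_j \beta_{ij}\cdot s_j$ and $t \equiv \sum_j \gamma_j\cdot s_j$ with $\gamma_j = \sum_i \alpha_i\beta_{ij}$.
\item Apply (Q) to the summands with $\inter{\gamma_j}\neq 0$. If none of them reduces, $t$ is already normal.
\item Use $\equiv$ to replace the zero-amplitude summands $s_j$ by their reducts $s_j'$ as well. This yields $t \toq t' \equiv \sum_i \alpha_i\cdot t_i'$ with $t_i' = \sum_j \beta_{ij}\cdot s_j'$.
\item Observe that $t_i \toq t_i'$ by (Q) on the canonical form of $t_i$, or $t_i'\equiv t_i$ when $t_i$ is normal. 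So each $t_i'$ terminates in at most $k_i-1$ steps (or $0$).
\item Apply the induction hypothesis directly to this sum of arbitrary, non-classical terms. Determinism up to $\equiv$ (Lemma~\ref{lem:confluence-equiv}) then covers all chains from $t$.
\end{itemize}
Your one-step analysis is otherwise sound. The extra typing assumption is unnecessary, since Lemma~\ref{lem:canonical} already gives uniqueness of canonical forms for arbitrary STRS terms.
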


As a consequence of Lemma~\ref{lem:red-sum}, the assignment of a term bounds
its runtime complexity.

\begin{restatable}{lemma}{interbound}\label{lem:inter-bound}
  Let $\mfR$ be a QTRS compatible with an assignment $\assign -$.
  Then, given any term $t$, it terminates in at most $\assign t$ steps.
\end{restatable}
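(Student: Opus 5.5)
We prove Lemma~\ref{lem:inter-bound} by well-founded induction on the natural number $\assign t$, which is the polynomial $\assign t$ evaluated at the ground term $t$; ground terms have constant interpretations. The key claim is that every reduction step strictly decreases the interpretation. More precisely, if $s \toq s'$ then $\assign s \ordnat \assign{s'}$. Since both values lie in $\mbN$, each step drops the value by at least $1$. Hence every reduction chain from $t$ has length at most $\assign t$, which is exactly ``terminates in at most $\assign t$ steps''.

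We prove the decrease claim by induction on the derivation of $s \toq s'$ (Figure~\ref{tab:toq}), with one case per rule.
\begin{itemize}
\item \textbf{Rule (C).} Here $s = E[l\sigma]$ and $s' = E[r\sigma]$. Compatibility gives $\assign l \ordnat \assign r$ as polynomials. Closure under substitution follows because $\assign{l\sigma}$ is $\assign l$ with the variables instantiated by $\assign{x\sigma}$; this is a routine induction on terms, using the max-definition for superposed subterms of $r$. Monotonicity of the assignment then lifts the strict decrease through the classical evaluation context $E$.
\item \textbf{Rule (E).} The assignment is invariant under $\equiv$. This holds by definition through canonical forms, uniquely for well-typed terms by Lemma~\ref{lem:canonical-typed}.
\item \textbf{Rule (Q).} Here $s \equiv \sum \alpha_i \cdot s_i$ with each $s_i \toqif t_i$ and some step strict. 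Then $\assign s = \max_i \assign{s_i}$. For the target, $\assign{s'} \le \max_i \assign{t_i}$, because the $t_i$ may merge or cancel, and dropping or merging summands cannot increase the maximum.
\end{itemize}

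The main obstacle is the Rule (Q) case: the maximum need not strictly decrease when some summands are already normal forms and stay unchanged. To avoid this, we do not prove a per-step decrease for superpositions. Instead we use Lemma~\ref{lem:red-sum} directly. Write $t \equiv \sum_i \alpha_i \cdot t_i$ in canonical form with classical $t_i$. It suffices to show that each classical $t_i$ terminates in at most $\assign{t_i}$ steps, since then $t$ terminates in at most $\max_i \assign{t_i} = \assign t$ steps.

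For a classical term $u$ we induct on $\assign u$. A step from $u$ is a (C)-step, possibly up to $\equiv$, giving $u \toq u'$ with $\assign u \ordnat \assign{u'}$ by the computation above; the reduct $u'$ may be a superposition. Writing $u' \equiv \sum_j \beta_j \cdot u_j$ in canonical form, each $u_j$ satisfies $\assign{u_j} \le \assign{u'} < \assign u$. The induction hypothesis bounds each $u_j$ by $\assign{u_j}$ steps, and Lemma~\ref{lem:red-sum} bounds $u'$ by $\max_j \assign{u_j} \le \assign u - 1$ steps. Adding the first step gives at most $\assign u$ steps, which completes the induction.

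The remaining technical checks are two: that canonical forms exist along the reduction, which follows from well-typedness and subject reduction (Lemma~\ref{lem:subred}), and that the substitution/max interaction is consistent for superposed right-hand sides.
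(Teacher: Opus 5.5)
Your final argument is essentially the paper's proof. You reduce a superposition to its classical canonical summands via Lemma~\ref{lem:red-sum}, and for classical terms you induct on the interpretation, using that a (C)-step strictly decreases it; the paper runs the same induction on $\assign t$, only in contrapositive form (no $t$ terminates in at least $\assign t + 1$ steps), picking a slowest summand via Lemma~\ref{lem:red-sum}. As in the paper, lifting the strict decrease through the evaluation context $E$ needs the assignment to be strictly monotonic, which the statement leaves implicit. Your opening per-step-decrease claim can simply be dropped, since, as you observe yourself, it fails for (Q).
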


Despite their name, polynomial interpretations are not necessarily
bounded by polynomials. This can, however, be obtained by
imposing a structure for constructor assignments.

\begin{definition}\label{def:additive}
  Let $\mfR$ be a QTRS, and let $\assign -$ be an assignment.
  We say that it is an \emph{additive assignment}, if for all $\mtc
  \in \trscons$ of arity $n > 0$, $\assign \mtc = \sum_{i=1}^n X_i +
  \alpha_\mtc$, with $\alpha_\mtc \geq 1$.
\end{definition}

Such a restriction allows bounding any assignment by a polynomial,
and thus to obtain polynomial time.

\begin{restatable}{theorem}{polytime}\label{thm:polytime}
  Let $\mfR$ be a QTRS compatible with an additive assignment
  and let $\mtf \in \mcF$.
  Then, there is a polynomial $P$ such that for any
  $\overline v \in \funcin$,
  $\mtf(\overline v)$ terminates in time $P(\size v)$.
\end{restatable}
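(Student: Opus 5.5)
The plan is to combine Lemma~\ref{lem:inter-bound} with a polynomial bound on the assignment of values. By Lemma~\ref{lem:inter-bound}, $\mtf(\overline v)$ terminates in at most $\assign{\mtf(\overline v)} = \assign{\mtf}(\assign{v_1}, \dots, \assign{v_n})$ steps. Since $\assign{\mtf}$ is a fixed polynomial with natural coefficients, hence non-decreasing in each argument, it suffices to find a constant $K$, depending only on $\mfR$ and the assignment, such that $\assign{v} \leq K \size{v}$ for every well-typed value $v$. Then $P(X) \triangleq \assign{\mtf}(KX, \dots, KX)$ is a polynomial with $\assign{\mtf(\overline v)} \leq P(\size{\overline v})$, which proves the claim.

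For the linear bound, set $K \triangleq \max(\{\alpha_\mtc \gmid \mtc \in \mcC, \arity[\mtc] > 0\} \cup \{\assign{\mtc} \gmid \arity[\mtc] = 0\})$. This is finite as long as we restrict to the finitely many constructors occurring in the relevant types. If not, the rules mention only finitely many constructors, and the input types only involve finitely many as well, so I would restrict to those. First take a classical value $v = \mtc(w_1, \dots, w_k)$. Additivity gives $\assign{v} = \sum_i \assign{w_i} + \alpha_\mtc$, and by induction this is at most $K\sum_i \size{w_i} + K = K\size{v}$. Constants have $\assign{\mtc} \leq K = K\size{\mtc}$. In both cases the bound is immediate.

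Superposed values are the main obstacle. Here $\assign{v}$ is defined as the maximum over the summands $t_i$ of the canonical form $\sum_i \alpha_i \cdot t_i$, which is unique by Lemma~\ref{lem:canonical-typed}. We must therefore compare $\size{t_i}$ with $\size{v}$, but distributing a constructor over a sum via $\equiv$ can change sizes. My plan is to show by induction on $v$ that every classical summand $t_i$ in the canonical form of a value $v$ satisfies $\size{t_i} \leq \size{v}$, ignoring the amplitude parts, which the assignment does not count.

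The induction has three cases. For $v = v_1 + v_2$, each summand of the canonical form is a summand of $v_1$ or of $v_2$, since merging equal terms and dropping zero amplitudes only removes summands. For $v = \alpha \cdot v'$, the summands are those of $v'$. For $v = \mtc(v_1, \dots, v_k)$, the summands have the form $\mtc(u_1, \dots, u_k)$ with each $u_j$ a summand of $v_j$. In every case the size is bounded by $\size{v}$.

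If instead $\size{v}$ is meant to count only classical structure, the same argument goes through with that notion of size. Combining this with the classical case gives $\assign{v} = \max_i \assign{t_i} \leq K\size{v}$, which completes the argument.
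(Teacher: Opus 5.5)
Your proposal is correct and follows essentially the same route as the paper: Lemma~\ref{lem:inter-bound}, then a linear bound $\assign{v} \leq K\size{v}$ on values (the paper's Lemma~\ref{lem:additive}), then monotonicity of $\assign{\mtf}$ to obtain $P$. Your extra argument bounding the sizes of canonical-form summands of superposed values, together with your explicit handling of nullary constructors, is a genuine refinement: Lemma~\ref{lem:additive} is stated only for classical values, whereas $\funcin$ also contains superpositions.
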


As a consequence of Theorem~\ref{thm:fbqp}, this yields a decidable way of
checking whether a compilable QTRS computes a binary function in the
quantum complexity class $\fbqp$.

\bibliography{refs}

\appendix
\section{Additional Material}\label{app:additional}

\begingroup
\newcommand{\sgn}{\mathtt{sgn}}
\begin{example}
  Let us come back on the STRS defined in Example~\ref{ex:had-gen},
  and prove that it is a QTRS.
  The function symbol $\mtf$ comes with the signature
  $\sign \mtf = \nat \to \qbit \to \qbit$.
  First, prove the typing condition; and suppose that
  $\typing \Gamma \Delta {\mtf(n, \zket)} T$.
  By Lemma~\ref{lem:tree-canonical}, as the term terminates,
  its typing tree contains only classical rules.
  The only applyable rule is the symbol rule. Going upwards
  gives the following typing tree:
  \[
    \begin{prooftree}
      \infer0{\typing{\Gamma, n : \nat}{}{n}{\nat}}
      \infer0{\typing{\Gamma, n : \nat} {} \zket \qbit}
      \infer2{\typing{\Gamma, n : \nat} {}{\mtf(n, \zket)}  \qbit}
    \end{prooftree}
  \]
  Thus, imposing $T = \qbit$ and $\Delta = \emptyset$.
  The right term can be typed identically as follows,
  \[
    \begin{prooftree}[template = \tiny$\inserttext$, separation = 0.8em]
      \infer0{\typingamp{\Gamma, n : \nat}{\sgn^+}}
      \infer0{\typing{\Gamma, n : \nat}{}{n}{\nat}}
      \infer1{\typingamp{\Gamma, n : \nat}{\mta^+(n)}}
      \infer0{\typing{\Gamma, n : \nat} {}{\ket i} \qbit}
      \hypo{
        \zket \orthored \oket \\
        \forall \sigma \in \subcont[{\Gamma, n : \nat}],
        \size{\inter{\sgn^+}}^2 + \size{\inter{\mta^+(\underline \iota)}}^2 = 1
      }
      \infer4{\typing{\Gamma, n : \nat} {}{\mtf(n, \zket)}  \qbit}
    \end{prooftree}
  \]
  The amplitude condition can be verified straightforwardly, for any
  substitution of $n$ by a ground natural number $\underline \iota$.
  This process can be done for the other rewrite rule, and validates
  the second condition of Definition~\ref{def:qtrs}.
  It now remains to show that $\mtf$ is an isometry.
  Consider two inputs $v, w \in \funcin$.
  By typing, they are of the shape $v = (\alpha \cdot \zket + \beta \cdot
  \oket, \underline \iota)$ and $w = (\gamma \cdot \zket + \delta \cdot \oket,
  \underline{\iota'})$.
  if $v \orthored w$, this implies the following:
  \[
    \begin{aligned}
      S &= \inter{\alpha}\inter{\gamma}^*
      \kronweird{(\zket, \underline \iota)}{(\zket, \underline{\iota'})} +
      \inter{\alpha}\inter{\delta}^*
      \kronweird{(\zket, \underline \iota)}{(\oket, \underline{\iota'})} \\
      + &\inter{\beta}\inter{\gamma}^*
      \kronweird{(\oket, \underline \iota)}{(\zket, \underline{\iota'})} +
      \inter{\beta}\inter{\delta}^*
      \kronweird{(\oket, \underline \iota)}{(\oket, \underline{\iota'})}
      = 0
    \end{aligned}
  \]
  For $\delta_\perp$ to be defined, it imposes $\underline \iota =
  \underline{\iota'}$; and the equality implies $\inter{\alpha}\inter{\gamma}^*
  + \inter{\beta}\inter{\delta}^* = 0$.
  Now, it remains to prove that $\mtf(\alpha \cdot \zket + \beta \cdot \oket,
  \underline \iota) \orthored \mtf(\alpha \cdot \zket + \beta \cdot \oket,
  \underline \iota)$.
  Developping each term through $\equiv$, and then reducing them via (Q)
  yields the two reduced:
  \[
    (\alpha \cdot \sgn^+ + \beta \cdot \mta^-(\underline \iota)) \cdot \zket +
    (\alpha \cdot \mta^+(\underline \iota) + \beta \cdot \sgn^-) \cdot \oket
  \]
  and
  \[
    (\gamma \cdot \sgn^+ + \delta \cdot \mta^-(\underline \iota)) \cdot \zket +
    (\gamma \cdot \mta^+(\underline \iota) + \delta \cdot \sgn^-) \cdot \oket
  \]
  These terms are also expressed as canonical forms.
  Again, expressing the orthogonality yields the following,
  as $\kronweird{\ket i}{\ket{1 - i}} = 0$:
  \[
    \begin{aligned}
      S &= \inter{\alpha \cdot \sgn^+ + \beta \cdot \mta^-(\underline \iota)}
      \inter{\gamma \cdot \sgn^+ + \delta \cdot \mta^-(\underline \iota)}^*
      + \inter{\alpha \cdot \mta^+(\underline \iota) + \beta \cdot \sgn^-}
      \inter{\gamma \cdot \mta^+(\underline \iota) + \delta \cdot \sgn^-}^*
    \end{aligned}
  \]
  By computing the amplitudes, we obtain
  $S =\inter{\alpha}\inter{\gamma}^* + \inter{\beta}\inter{\delta}^* = 0$.
  Therefore, $\mtf$ is an isometry, and $\mfR$ is a QTRS.
\end{example}
\endgroup

\section{Proofs of Section~\ref{sec:results}}

\subsection{Proofs of Section~\ref{sec:properties}}

We first introduce an operator $\qt s t$, called the \emph{quantity}, which
aims to measure the amplitude of a classical term s in a term $t$.
This will be in particular useful to show Lemma~\ref{lem:canonical-unicity}.
Suppose $\mcA$ contains two symbols $\mta_0, \mta_1$, where
$\inter{\mta_0} = 0$ and $\inter{\mta_1} = 1$.
Given a term $t$, denote $\subcont[\fv t]$ as the set of substitutions
such that $t \sigma$ is a ground term.

\begin{definition}\label{def:quantity}
  Let $s$ be a classical term.
  We define the following map $\theta_s$ from terms to amplitudes as follows:
  \begingroup
  \renewcommand{\arraystretch}{1.5}
  \[
    \begin{aligned}
      \qt s \mtx &\triangleq \kron s \mtx \\
      \qt s {\mtb(t_1, \dots, t_n)} &\triangleq
      \begin{cases}
        \prod_{i=1}^n \qt{s_i}{t_i} \text{ if } s = \mtb(s_1, \dots, t_n) \\
        \mta_0 \text{ else }
      \end{cases} \\
      \qt s {t_1 + t_2} &\triangleq \qt s {t_1} + \qt s {t_2} \\
      \qt s {\alpha \cdot t} &\triangleq \alpha \cdot \qt s t
    \end{aligned}
  \]
  \endgroup
  where the notation $\kron s t$ outputs $\mta_1$ if $s = t$, and $\mta_0$ else.
\end{definition}

\begin{lemma}\label{lem:quantity-one}
  Let $s$ be a classical term.
  Then $\qt s s = \mta_1$.
\end{lemma}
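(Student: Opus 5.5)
The proof goes by structural induction on the classical term $s$, unfolding Definition~\ref{def:quantity} on the pair $(s,s)$. A classical term contains neither superposition nor amplitude, so only the first two clauses of the definition can apply. Throughout, ``$=$'' on amplitudes is read up to amplitude equivalence, i.e.\ equality of interpretations, which is the sense in which the result is later used inside $\equiv$.

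\emph{Base case.} If $s = \mtx$ is a variable, then $\qt s s = \kron \mtx \mtx = \mta_1$ directly, since the Kronecker notation returns $\mta_1$ on equal terms.

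\emph{Inductive case.} If $s = \mtb(s_1, \dots, s_n)$ with $\mtb \in \mcC \biguplus \mcF$, each $s_i$ is again classical. Matching $s$ against itself, the head symbols coincide and the arguments are pairwise identical, so the first branch of the second clause applies:
\[
  \qt s s = \prod_{i=1}^n \qt{s_i}{s_i}.
\]
By the induction hypothesis each factor equals $\mta_1$. Hence the product is a product of copies of $\mta_1$, and its interpretation is $1^n = 1$. When $n = 0$, i.e.\ $s$ is a constant, the empty product should be read as $\mta_1$ by convention.

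\emph{Main obstacle.} The only delicate point is notational, not mathematical. The product $\prod_{i=1}^n$ is a syntactic amplitude $\mta_1 \cdot (\dots \cdot \mta_1)$, not literally the symbol $\mta_1$. So I would state explicitly that equality here means equivalence of amplitudes, $\inter{\qt s s} = \inter{\mta_1} = 1$, and fix the empty-product convention. The evident typo $s = \mtb(s_1,\dots,t_n)$ in the definition should be read as $s = \mtb(s_1,\dots,s_n)$. With these readings the induction is immediate.
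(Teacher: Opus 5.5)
Your proposal is correct and is the same structural induction on $s$ that the paper uses (the paper simply says it is direct from the definition of the quantity). Your extra care in reading ``$=$'' as equality of interpretations, fixing the empty-product convention for constants, and correcting the typo $\mtb(s_1,\dots,t_n)$ to $\mtb(s_1,\dots,s_n)$ makes explicit what the paper leaves implicit, and this reading matches how the lemma is later used.
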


\begin{proof}
  Proven by induction on $s$, direct by definition of the quantity. \qedhere
\end{proof}

\begin{lemma}\label{lem:quantity-equiv}
  Let $s$ be a classical term and $t, t'$ be two terms.
  If $t \equiv t'$, then for any substitution $\sigma \in \subcont[\fv t] \cap
  \subcont[\fv{t'}]$, $\inter{\qt s t \sigma} = \inter{\qt s {t'} \sigma}$.
\end{lemma}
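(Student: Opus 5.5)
The proof goes by induction on the derivation of $t \equiv t'$. Since $\equiv$ is the smallest equivalence relation closed under the rules of Figure~\ref{tab:equiv}, it suffices to handle four kinds of step: reflexivity, symmetry and transitivity; each axiom of Figure~\ref{tab:equiv} as a single step; and the context rule $C[s'] \equiv C[t'']$. Reflexivity, symmetry and transitivity are immediate because equality of interpretations is itself an equivalence relation. For transitivity there is one subtlety: the intermediate term may contain free variables not in $t$ or $t'$. The rules only ever erase terms, never create variables, except possibly through the rule $s + \alpha\cdot t \equiv s$ read backwards. We handle this by extending $\sigma$ arbitrarily, with ground values, on the extra variables. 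This is harmless because the final values only depend on $\fv t \cup \fv{t'}$. Throughout, we use that substitution commutes with $\theta_s$, i.e.\ $\qt s t \sigma$ is obtained by computing $\qt s {t\sigma}$ after replacing variables. More simply, we interpret $\inter{\qt s t\sigma}$ directly via the defining clauses, so every clause becomes an identity on complex numbers.

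For the context rule, we do a secondary induction on $C$. Each clause of Definition~\ref{def:quantity} is compositional: $\theta_s$ of $\mtb(\dots)$ is a product of the $\theta_{s_i}$ of the arguments (or $\mta_0$, independently of the arguments), $\theta_s$ of a sum is a sum, and $\theta_s$ of $\alpha\cdot u$ is $\alpha\cdot\theta_s(u)$. So if the interpretations of all quantities agree on the hole filler for every classical $s$, they agree on the whole context. It is therefore important to state the induction hypothesis for all classical $s$ simultaneously.

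For the axioms, we check each line by computing both sides under $\inter{-}$ and $\sigma$:
\begin{itemize}
\item Commutativity and associativity of $+$ hold because $+$ in $\mbC$ is commutative and associative.
\item For $s' + \alpha\cdot u \equiv s'$ with $\inter\alpha = 0$, the extra summand $\inter\alpha \cdot \inter{\qt s u\sigma}$ vanishes.
\item For $\alpha \cdot u \equiv u$ with $\inter\alpha = 1$, we use $1 \cdot x = x$.
\item Associativity of scalars, distributivity of $\alpha$ over $+$, and the rule $\alpha\cdot u + \beta \cdot u \equiv (\alpha+\beta)\cdot u$ are field identities.
\end{itemize}
Here we rely on $\inter{\alpha\sigma} = \inter\alpha$, which follows from the definition of the interpretation of non-ground amplitudes.

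The only nontrivial axioms are the two linearity rules for symbols: $\mtb(\dots,\alpha\cdot u,\dots) \equiv \alpha\cdot\mtb(\dots,u,\dots)$ and $\mtb(\dots,u_1+u_2,\dots) \equiv \mtb(\dots,u_1,\dots)+\mtb(\dots,u_2,\dots)$. We split on whether $s$ has head symbol $\mtb$.
\begin{itemize}
\item If $s$ does not have head $\mtb$, both sides give $\mta_0$: on the right-hand side this is $\alpha\cdot\mta_0$ or $\mta_0+\mta_0$, which interpret to $0$.
\item If $s = \mtb(s_1,\dots,s_n)$, the left side is a product $\prod_j \qt{s_j}{\cdot}$ in which the $k$-th factor is $\alpha\cdot\qt{s_k}{u}$ (respectively $\qt{s_k}{u_1}+\qt{s_k}{u_2}$). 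Pulling $\alpha$ out, respectively distributing the product over the sum, gives exactly the interpretation of the right-hand side.
\end{itemize}

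I expect the main obstacle to be bookkeeping rather than mathematics. First, the rule $s+\alpha\cdot t\equiv s$ can introduce free variables, so the substitution domain must be handled carefully as described above. Second, variables inside $t$ are treated by $\kron s \mtx$, which is not substitution-invariant. To avoid this, I would first prove the auxiliary fact $\inter{\qt s t\,\sigma} = \inter{\qt s {t\sigma}}$ whenever $t\sigma$ is ground; then, if needed, run the whole induction on ground instances $t\sigma \equiv t'\sigma$, using that $\equiv$ is closed under substitution. Everything else is routine field arithmetic.
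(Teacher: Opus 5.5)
Your main argument is exactly the paper's proof, and it is sound: induction on the derivation of $t \equiv t'$, a direct computation for each axiom of Figure~\ref{tab:equiv} (splitting on the head symbol of $s$ for the two linearity rules), and a secondary induction on the context with the hypothesis quantified over all classical $s$.

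You should, however, drop two things: the remark that substitution commutes with $\theta_s$, and the proposed fallback through ground instances. The auxiliary fact $\inter{\qt s t\,\sigma} = \inter{\qt s {t\sigma}}$ is false. Take $s = \zket$, $t = \mtx$ and $\sigma(\mtx) = \zket$. The left side is $\inter{\kron{\zket}{\mtx}} = \inter{\mta_0} = 0$, while the right side is $\inter{\qt{\zket}{\zket}} = 1$.

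Without that bridge, an induction on $t\sigma \equiv t'\sigma$ only yields $\inter{\qt s {t\sigma}} = \inter{\qt s {t'\sigma}}$. That is not the statement. The paper really needs the version in which variables of $t$ are treated as atoms, for instance in Lemma~\ref{lem:canonical-unicity}, where $s$ ranges over possibly non-ground summands of canonical forms.

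Nothing is lost by dropping it. $\qt s \mtx$ is already the ground amplitude $\kron s \mtx$, so $\sigma$ only acts on the natural-number variables inside the amplitudes that $\theta_s$ copies from $t$. Your direct clause-by-clause computation therefore goes through as written.
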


\begin{proof}
  By induction on each rule defining $\equiv$ in Figure~\ref{tab:equiv};
  the transitive and reflexive case can be derived from this.
  The first two lines of rules are verified directly as $\mbC$
  is a vector space;
  while the linearity rules follow directly the case of summation
  in Definition~\ref{def:quantity}.
  The last rule with the equivalence context can be proven by induction, on the
  construction of the context, the base case being verified above. \qedhere
\end{proof}

\begin{lemma}\label{lem:quantity-classical}
  Let $s, t$ be two classical terms.
  Then $\qt s t = \kron s t$.
\end{lemma}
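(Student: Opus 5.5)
We prove Lemma~\ref{lem:quantity-classical} by structural induction on the classical term $t$, keeping $s$ arbitrary. Since $t$ is classical, only two clauses of Definition~\ref{def:quantity} can apply: the variable clause and the symbol clause. The sum and scalar clauses never occur. The equality $\qt s t = \kron s t$ is understood up to amplitude equivalence, i.e.\ $\inter{\cdot}$ agrees; this is the only sense in which it is used, for example in Lemma~\ref{lem:quantity-equiv}.

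\emph{Base case.} If $t = \mtx$ is a variable, then $\qt s \mtx = \kron s \mtx$ holds by definition.

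\emph{Inductive case.} Let $t = \mtb(t_1, \dots, t_n)$ with each $t_i$ classical. We split on the shape of $s$.
\begin{itemize}
\item If $s$ is not of the form $\mtb(s_1, \dots, s_n)$, then $\qt s t = \mta_0$. Also $s \neq t$, since the head symbols differ or $s$ is a variable, so $\kron s t = \mta_0$.
\item If $s = \mtb(s_1, \dots, s_n)$, then $\qt s t = \prod_{i=1}^n \qt{s_i}{t_i}$. By the induction hypothesis this equals $\prod_{i=1}^n \kron{s_i}{t_i}$.
  \begin{itemize}
  \item If $s_i = t_i$ for all $i$, every factor is $\mta_1$. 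The product is therefore equivalent to $\mta_1$, and $s = t$ gives $\kron s t = \mta_1$.
  \item Otherwise some factor is $\mta_0$, so the product interprets to $0$. Then $s \neq t$, since terms with the same head are equal iff their arguments are pairwise equal, and so $\kron s t = \mta_0$.
  \end{itemize}
\end{itemize}
The case $n = 0$ is the empty product, which we read as $\mta_1$, matching $s = t = \mtb$.

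The only delicate point is the equality convention. A product such as $\mta_1 \cdot \mta_1$ is not syntactically $\mta_1$, only equivalent to it. The statement must therefore be read modulo $\inter{\cdot}$, consistent with how the quantity is used together with Lemmas~\ref{lem:quantity-one} and~\ref{lem:quantity-equiv}.
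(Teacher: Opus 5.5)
Your proof is correct and follows the same approach as the paper: structural induction on the classical term $t$, where only the variable and symbol clauses of Definition~\ref{def:quantity} can apply. Your point that the equality must be read up to amplitude equivalence (for instance, $\mta_1 \cdot \mta_1$ is only equivalent to $\mta_1$, not syntactically equal), and your handling of the empty product for constants, make explicit details that the paper's one-line proof leaves implicit.
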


\begin{proof}
  Direct by induction on $t$; as it is classical, only the first two
  cases of Definition~\ref{def:quantity} are considered. \qedhere
\end{proof}

As a consequence, if $t = \sumdef$ with $t_i$ being classical terms,
$\qt s t = \sum_{i=1}^n \alpha_i \cdot \kron s {t_i}$.

\begin{lemma}[Unicity]\label{lem:canonical-unicity}
  If $t_0$ has a canonical form, then it is unique, up to reordering
  and amplitude equivalence.
\end{lemma}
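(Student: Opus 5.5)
The plan is to use the quantity operator $\theta$ as an invariant that reads off coefficients. Suppose $t_0 \equiv \sumdef$ and $t_0 \equiv \sumdefb{s}$ are two canonical forms, with the $t_i$ pairwise distinct classical terms, the $s_j$ pairwise distinct classical terms, and $\inter{\alpha_i} \neq 0$, $\inter{\beta_j} \neq 0$. By transitivity of $\equiv$ the two sums are equivalent. Lemma~\ref{lem:quantity-equiv} then gives, for every classical $s$ and every suitable substitution $\sigma$,
\[
  \inter{\qt s {\textstyle\sum_i \alpha_i \cdot t_i}\sigma} = \inter{\qt s {\textstyle\sum_j \beta_j \cdot s_j}\sigma}.
\]
By the consequence of Lemma~\ref{lem:quantity-classical} stated just before the statement, the left-hand side is $\inter{\sum_i \alpha_i \cdot \kron s {t_i}}$ (under $\sigma$), and similarly for the right-hand side.

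Next, instantiate $s := t_k$ for a fixed $k$. Since the $t_i$ are pairwise distinct, Lemma~\ref{lem:quantity-one} and Lemma~\ref{lem:quantity-classical} show that the left-hand side collapses to $\inter{\alpha_k}$, which is nonzero. The right-hand side is $\inter{\beta_j}$ if $t_k = s_j$ for some (necessarily unique) $j$, and $0$ otherwise. The nonzero value on the left therefore forces $t_k = s_{\pi(k)}$ for some $j = \pi(k)$, together with $\inter{\alpha_k} = \inter{\beta_{\pi(k)}}$. The map $\pi$ is injective because the $t_i$ are pairwise distinct. The symmetric argument, instantiating $s := s_j$, shows every $s_j$ is hit, so $\pi$ is a bijection and $n = m$. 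This is exactly uniqueness up to reordering and amplitude equivalence.

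The main obstacle is the handling of variables. The terms $t_i$ may be open, and amplitudes are only compared through $\inter{\cdot}$, which is defined as a constant value over all ground instances. Lemma~\ref{lem:quantity-equiv} is phrased under substitutions, so I would argue pointwise for each ground $\sigma$ and conclude $\inter{\alpha_k\sigma} = \inter{\beta_{\pi(k)}\sigma}$ for all $\sigma$. Since $\kron{t_k}{s_j}$ is a syntactic test independent of $\sigma$, the matching $\pi$ does not depend on $\sigma$, which gives amplitude equivalence in the paper's sense.

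A second, minor point is that $\theta$ requires $\mta_0, \mta_1 \in \mcA$. I would assume this as in the paper's setup, noting that adding these symbols does not affect $\equiv$ on terms that do not mention them.
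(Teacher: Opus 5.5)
Your proposal is correct and follows essentially the same route as the paper. Both proofs compare the quantities $\theta_{t_k}$ of the two equivalent sums using Lemma~\ref{lem:quantity-equiv} and Lemma~\ref{lem:quantity-classical}, use the nonzero coefficients and pairwise distinctness to build a coefficient-preserving injective matching, and then apply the symmetric argument to get a bijection; the paper phrases that last step as $n \le m$ and $m \le n$, and your remarks about arguing pointwise in $\sigma$ and assuming $\mta_0,\mta_1$ make explicit what the paper leaves implicit.
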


\begin{proof}
  Suppose $t_0$ has two canonical forms $t = \sum_{i=1}^n \alpha_i
  \cdot t_i$ and $s = \sum_{j=1}^m \beta_j \cdot s_j$, thus $s \equiv t$.
  Take any substitution $\sigma \in \subcont[\fv s] \cap \subcont[\fv
  t]$, and any
  $1 \leq i_0 \leq n$.
  By definition of the quantity and by Lemma~\ref{lem:quantity-classical},
  $\inter{\qt{t_{i_0}} t \sigma} = \inter{\alpha_{i_0} \sigma}$, which by
  definition of a canonical form, is different from $0$.
  By Lemma~\ref{lem:quantity-equiv}, $\inter{\qt{t_{i_0}} t \sigma} =
  \inter{\qt{t_{i_0}} s \sigma} = \sum_{j=1}^m \inter{\beta_j} \cdot \kron
  {t_{i_0}}{s_j}$.
  If $\kron {t_{i_0}}{s_j} = \mta_0$ for all $j$, then $\inter{\qt{t_{i_0}} s
  \sigma} = 0$, which will contradict Lemma~\ref{lem:quantity-equiv}; and if
  there are two $j$ where it is $\mta_1$, then we will have $s_j = t_{i_0} =
  s_j'$ which contradicts the fact that $s$ is a canonical form.
  Therefore, there exists $j_0$ such that $t_{i_0} = s_{j_0}$,
  and $\alpha_{i_0}$ is equivalent to $\beta_{j_0}$.
  Thus, any $t_i$ has a unique match in $(s_j)$, therefore $n \leq m$.
  The same process can be done the other way around, so $m \leq n$,
  thus $m = n$.
  So each sum contains the same number of elements, such that there
  is a map from $1, \dots, n$ to $1, \dots, n$ that associates $i$
  with $j$ such that $t_i = s_j$ and $\alpha_i = \beta_j$.
  Therefore, both sums are equal up to permutation of the $t_i$.
\end{proof}

\begin{restatable}[]{lemma}{canonical}\label{lem:canonical}
  Let $\mfR$ be a STRS, and $t$ be a term. Then, either $\equivz$,
  or $t \equiv \candef$, and this canonical
  form is unique, up to sum reordering and amplitude equivalence.
\end{restatable}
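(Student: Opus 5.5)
The plan is to show existence by structural induction on $t$, and then get uniqueness directly from Lemma~\ref{lem:canonical-unicity}. The quantity $\theta$ is not needed for existence. For existence I prove a stronger intermediate claim: every term $t$ is equivalent to a finite sum $\sum_{k=1}^p \gamma_k \cdot u_k$ with $p \geq 1$ and each $u_k$ classical. Call this a \emph{linear form}. The cleanup step then converts a linear form into either $\mta_0 \cdot t'$ or a canonical form.

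For the induction on $t$, the cases go as follows.
\begin{itemize}
\item If $t = \mtx$, then $t \equiv \mta_1 \cdot \mtx$ by the rule for $\inter{\alpha}=1$.
\item If $t = \alpha \cdot t''$, take the linear form of $t''$ from the induction hypothesis and push $\alpha$ inside. This uses distributivity and $\alpha\cdot(\beta\cdot u) \equiv (\alpha\cdot\beta)\cdot u$, together with the context rule.
\item If $t = t_1 + t_2$, concatenate the two linear forms, using associativity.
\item If $t = \mtb(t_1,\dots,t_n)$, write each argument as $t_i \equiv \sum_k \gamma^i_k \cdot u^i_k$. Applying the linearity rules $\mtb(\dots,t_1+t_2,\dots)\equiv \dots$ and $\mtb(\dots,\alpha\cdot t,\dots)\equiv \alpha\cdot\mtb(\dots,t,\dots)$ argument by argument, under the context rule, expands $t$ into $\sum_{k_1,\dots,k_n} (\gamma^1_{k_1}\cdots\gamma^n_{k_n}) \cdot \mtb(u^1_{k_1},\dots,u^n_{k_n})$. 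The summands are classical. This step needs an inner induction on the number of arguments, and within each argument on the length of its sum. The arity-$0$ case is just $\mta_1 \cdot \mtb$.
\end{itemize}

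For the cleanup, start from a linear form $\sum_k \gamma_k\cdot u_k$. First merge equal classical terms: using commutativity and associativity, group identical $u_k$, then apply $\alpha\cdot u + \beta \cdot u \equiv (\alpha+\beta)\cdot u$. This gives pairwise distinct classical $u_k$.

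Next, remove every summand whose amplitude interprets to $0$, using $s + \alpha\cdot t \equiv s$ (after commuting it to the right). There are two outcomes.
\begin{itemize}
\item If at least one summand survives, the result is a canonical form in the sense of Definition~\ref{def:canonical}.
\item If every summand has null amplitude, the deletion rule cannot remove the last one, since it needs a left summand $s$. Then $t \equiv \gamma_1\cdot u_1 + \dots$, and deleting all but the first gives $\gamma_1\cdot u_1$ with $\inter{\gamma_1}=0$. I then need $\gamma_1 \cdot u_1 \equiv \mta_0 \cdot u_1$.
\end{itemize}

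This last point, together with amplitude equivalence in general, is the main obstacle. There is no explicit rule in Figure~\ref{tab:equiv} replacing an amplitude by an equivalent one. I would derive it as follows. From $\mta_0\cdot u_1 + \gamma_1\cdot u_1 \equiv \mta_0 \cdot u_1$ (deletion rule), and reading the same sum the other way via commutativity, $\gamma_1\cdot u_1 + \mta_0 \cdot u_1 \equiv \gamma_1\cdot u_1$. By transitivity, $\gamma_1 \cdot u_1 \equiv \mta_0\cdot u_1$. The same trick shows $\alpha\cdot u\equiv\beta\cdot u$ whenever $\inter\alpha=\inter\beta$: write it as a sum and cancel via an amplitude interpreting to $0$. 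If that derivation fails, I would instead read "up to amplitude equivalence" as permitting this identification directly.

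Uniqueness of the canonical form then follows from Lemma~\ref{lem:canonical-unicity}. Strictly, the statement's "either/or" is inclusive: no exclusivity claim is made, so I will not prove that the two cases are disjoint.
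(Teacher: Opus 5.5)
Your proof is correct and follows essentially the paper's route: a structural induction that expands $t$ by the linearity rules into an amplitude-weighted sum of classical terms, merges the coefficients of equal classical terms and discards the null ones, with uniqueness taken from Lemma~\ref{lem:canonical-unicity}. The only difference is that you defer all cleanup to one final pass. This spares you the paper's arguments, in the $\mtb$ case, that products of nonzero amplitudes stay nonzero and that the expanded terms are pairwise distinct. You also explicitly derive $\gamma\cdot u \equiv \mta_0\cdot u$ from the deletion rule and commutativity, a step the paper leaves implicit.
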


\begin{proof}
  Unicity being direct from Lemma~\ref{lem:canonical-unicity}, let us prove
  either the existence of a canonical form or $\equivz$, by
  induction on the syntax of $t$.
  Note that if $\equivz$, then $t$ has no canonical form.
  \begin{itemize}
    \item If $t = \mtx$, then $t$ has $\mta_1 \cdot \mtx$ as a canonical form,
      which is equivalent to $t$ directly by $\equiv$.
    \item Suppose $t = \mtb(t_1, \dots, t_n)$. By induction, either $t_i \equiv
      \mta_0 \cdot t_i$ is verified for some $i$, in which case $t \equiv
      \mta_0 \cdot \mtb(t_1, \dots, t_n)$; else, all $t_i$ possess a canonical
      form.
      Therefore, we have $t_i \equiv \sum_{j_i = 1}^{m_i} \alpha_{j_i} \cdot
      t_{j_i}$.
      By developing $t$ under $\equiv$, we have:
      \[
        t \equiv \sum_{j_1=1}^{m_1} \dots \sum_{j_n=1}^{m_n} \alpha_{j_1} \dots
        \alpha_{j_n} \cdot \mtb(t_{j_1}, \dots, t_{j_n})
      \]
      This can be seen as the following sum, where $\alpha_k =
      \alpha_{j_1} \dots \alpha_{j_n}$:
      \[
        t \equiv \sum_{k=1}^{m_1 \times \dots \times m_n} \alpha_k
        \cdot \mtb(t_1^k, \dots, t_n^k)
      \]
      In particular, $\inter{\alpha_k} \neq 0$ as $\inter{\alpha_{j_i}} \neq 0$;
      and as, for fixed $i$, all $t_{j_i}$ are pairwise distinct, so are the
      $\mtb(t_1^k, \dots, t_n^k)$. Therefore, this is a canonical form.
    \item Suppose $t = \sumdef$.
      By induction hypothesis, either $t_i \equiv \mta_0 \cdot t_i$, in which
      case it can be removed through $\equiv$, or it possesses a canonical
      form.
      One can thus rewrite $t \equiv \sum_{j=1}^m \alpha_j' \cdot t_j'$,
      where $t_j'$ are the $t_i$ such that $\nequivz[t_i']$.
      Let us denote $p_k$ as the set of all classical terms present in
      the canonical form of at least one $t_j'$. Up to adding them with
      $\equiv$ ($t \equiv t + \mta_0 \cdot p_k$), we can write each $t_j'$ as
      $\sum_{k=1}^l \beta_{jk} \cdot p_k$, and thus write $t$ as follows:
      \[
        t \equiv \sum_{j=1}^m \alpha_j' \cdot (\sum_{k=1}^l
        \beta_{jk} \cdot p_k)
        \equiv \sum_{k=1}^l \gamma_k \cdot p_k
      \]
      with $\gamma_k = \sum_{j=1}^m \alpha_j' \beta_{jk}$.
      By definition, all $p_k$ are classical and pairwise distinct.
      Now, either $\inter{\gamma_k} = 0$ for all $k$, and $t \equiv \mta_0 \cdot
      t'$; or $\inter{\gamma_k} \neq 0$ holds for some $k$, and the obtained
      term is a canonical form, up to removing the other terms. \qedhere
  \end{itemize}
\end{proof}


\begin{restatable}[Weakening]{lemma}{weakening}\label{lem:weakening}
  Let $\mfR$ be a QTRS, and let $\typingdef$ be a well-typed term.
  Then, for any $\Gamma' \supseteq \Gamma$ such that $\Gamma' \cap \Delta =
  \varnothing$, $\typing{\Gamma'} \Delta t T$.
\end{restatable}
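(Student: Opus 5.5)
We argue by induction on the typing derivation of $\typingdef$, showing for each rule of Figure~\ref{tab:typing} that the same rule applies with $\Gamma'$ in place of $\Gamma$. Throughout, the disjointness hypothesis $\Gamma' \cap \Delta = \varnothing$ is what keeps every judgment $\Gamma';\Delta$ well formed.

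\textbf{Leaf cases.} For the classical variable rule, the conclusion $\Gamma, x : C;\varnothing \vdash x : C$ keeps the form of the rule when the non-linear context is enlarged, since that rule already allows an arbitrary extra non-linear context. For the quantum variable rule, $\Gamma;x:Q \vdash x:Q$ also holds for any non-linear context not containing $x$, and $x \notin \typedom{\Gamma'}$ by disjointness. Amplitude judgments $\typingamp \Gamma \alpha$ are treated by a simultaneous induction. Their only leaves are the judgments $\Gamma;\varnothing \vdash \iota_i : \nat$ in the amplitude-symbol rule, and these are term judgments already covered.

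\textbf{Structural cases.} The symbol rule and the two amplitude rules (sum and product) follow directly from the induction hypotheses, because the non-linear context is shared by all premises. For the equivalence rule, the side condition $\fv t \subseteq \typedom{\Gamma \cup \Delta}$ only gets weaker when $\Gamma$ grows to $\Gamma'$.

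\textbf{Superposition rule (the main obstacle).} The premises $\typing{\Gamma'}{\Delta}{t_i}{Q}$ and $\typingamp{\Gamma'}{\alpha_i}$ follow by induction. Two semantic side conditions quantify over context substitutions and must be rechecked.

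For the normalization condition, take any $\sigma \in \subcont[\Gamma' \cup \Delta]$. The amplitudes $\alpha_i$ only mention variables of $\Gamma$, since they are typed under $\Gamma$. Hence $\alpha_i\sigma = \alpha_i\sigma'$, where $\sigma'$ is the restriction of $\sigma$ to $\typedom{\Gamma \cup \Delta}$, and $\sigma'$ belongs to the smaller substitution set. The sum of squared moduli is therefore still $1$.

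For orthogonality $t_i \orthored t_j$, Definition~\ref{def:orthogonality} is relative to the contexts typing $t_i$ and $t_j$, now $\Gamma' \cup \Delta$. Given $\sigma$ over $\Gamma'\cup\Delta$, we have $t_i\sigma = t_i\sigma'$ for the same restriction $\sigma'$. The reason is that the variables of $t_i$ lie in $\typedom{\Gamma\cup\Delta}$: a typing derivation only types terms whose free variables lie in the context, which is checked along the same induction.

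Consequently the reductions, canonical forms and $\delta_\perp$ sum coincide with those obtained for $\sigma'$, and the orthogonality condition is inherited. The points that need care here are the free-variable invariant and that restricting a $(\Gamma'\cup\Delta)$-context substitution yields a $(\Gamma\cup\Delta)$-context substitution; both are immediate from the definitions.
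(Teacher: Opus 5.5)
Your proof is correct and follows the paper's approach, which is induction on the typing derivation (the paper calls it direct). In the superposition case you do not need the restriction or the free-variable invariant: enlarging a context only shrinks the set of context substitutions ($\mathrm{Sub_0}(\Gamma') \subseteq \mathrm{Sub_0}(\Gamma)$ and $\mathrm{Sub_0}(\Gamma'\cup\Delta) \subseteq \mathrm{Sub_0}(\Gamma\cup\Delta)$), so both side conditions are inherited verbatim. Also, the normalization condition ranges over $\mathrm{Sub_0}(\Gamma')$ rather than $\mathrm{Sub_0}(\Gamma'\cup\Delta)$, but your argument covers that case equally well.
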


\begin{proof}
  Direct by induction on typing.
\end{proof}


\begin{lemma}\label{lem:confluence-equiv}
  Let $\mfR$ be a STRS, and let $s, t$ be two terms.
  Suppose $s \equiv t$, $s \toq s'$ and $t \toq t'$.
  Then, $t \equiv t'$.
\end{lemma}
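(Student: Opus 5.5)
The plan starts by reducing to a single source term. Since $s \equiv t$ and $s \toq s'$, rule (E) of Figure~\ref{tab:toq} (with $t \equiv s$, $s \toq s'$, $s' \equiv s'$) gives $t \toq s'$. So $t$ has two one-step reducts, $s'$ and $t'$, and the statement becomes a claim about two reducts of the same term $t$.

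Next I would prove, by induction on the derivation of $t \toq t'$, that one-step reduction is deterministic up to $\equiv$.
\begin{itemize}
  \item \emph{Classical case.} In a derivation using (C), the redex is fixed by the left-to-right evaluation contexts. Because $\trsrule$ is orthogonal, at most one rule matches that redex, and the matching substitution is unique. Hence the classical reduct is unique.
  \item \emph{Superposition case.} In a derivation using (Q), the canonical form is unique by Lemma~\ref{lem:canonical}, up to reordering and amplitude equivalence. Each summand $s_i$ reduces deterministically by the induction hypothesis. A normal-form summand stays fixed, by the definition of $\toqif$.
  \item \emph{Equivalence case.} An (E) step is absorbed by transitivity of $\equiv$.
\end{itemize}
The delicate point is that an (E) derivation may use a non-canonical intermediate $s'' \equiv t$. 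To handle it, I would normalise every derivation into a single (Q) step over the canonical form, or into a single (C) step when that form is classical. This uses the fact that a classical $(C)$-redex $E[l\sigma]$ is its own canonical form, with amplitude $\mta_1$. The outcome of this step is $s' \equiv t'$.

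The main obstacle is the last step: passing from two reducts of $t$ to the stated conclusion $t \equiv t'$. That conclusion relates $t$ to its \emph{own} reduct, and nothing in the rules of Figure~\ref{tab:equiv} lets a reduct collapse back onto its source. Read literally, the claim fails. For example, in Example~\ref{ex:cliffordt}, take $s = t = \mathtt X(\zket)$ and $t' = \oket$. By Lemma~\ref{lem:canonical-unicity} (the $\theta$ quantity), $\oket \not\equiv \mathtt X(\zket)$. Before attempting anything further, I would therefore check with the authors whether the intended conclusion is $s' \equiv t'$. That is exactly what the argument above establishes, and it is the form needed for Lemma~\ref{lem:confluence}. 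I would then present the proof of that corrected statement.
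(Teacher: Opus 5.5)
You are right that the printed conclusion is a typo for $s' \equiv t'$. The paper's own proof ends with $t' \equiv s'$, and Lemma~\ref{lem:confluence} uses the lemma in exactly that form: two one-step reducts of the same term are equivalent. Your counterexample is sound. For instance, by Lemma~\ref{lem:quantity-equiv}, $\qt{\oket}{\oket} = \mta_1$ while $\qt{\oket}{\mathtt X(\zket)} = \mta_0$, so $\oket \not\equiv \mathtt X(\zket)$.

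For the corrected statement your argument is valid, but it is organised differently from the paper's. The paper keeps the two sources $s \equiv t$ and inducts on both derivations at once. That generalisation is exactly what absorbs (E): from $s \equiv s_1 \toq s_1' \equiv s'$ one gets $s_1 \equiv t$, and the induction hypothesis applies to the pair $(s_1, t)$. This leaves only the (C)/(C), (Q)/(Q) and mixed cases. Each is settled by the quantity, uniqueness of canonical forms, orthogonality of the rules, and the induction hypothesis on summands.

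By first collapsing to a single source via (E), you give up this generalisation, so the real work moves into your normalisation lemma. It is essentially a sharpened Lemma~\ref{lem:red-can}: if $u \toq u'$, then $u$ has a canonical form $\sum_i \alpha_i \cdot u_i$ and $u' \equiv \sum_i \alpha_i \cdot R(u_i)$. Here $R(u_i)$ is the unique (C)-reduct of $u_i$ if it exists, and $u_i$ otherwise. It is proved by induction on a single derivation, using two facts:
\begin{itemize}
  \item Lemma~\ref{lem:canonical-unicity}, which shows that a classical summand has canonical form $\mta_1 \cdot u_i$;
  \item a summand may stay fixed under $\toqif$ only if it lies in $\nf$, and then it has no (C)-reduct.
\end{itemize}
Once this lemma holds, determinism up to $\equiv$ follows by comparing the two normalised derivations, and your outer induction on $t \toq t'$ becomes superfluous.

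Your route yields an explicit one-step reduct map; the paper's is a single self-contained induction with no auxiliary lemma. Both need, at the end, to replace an amplitude by an equivalent one under $\equiv$. The paper does this with an amplitude symbol $\mta_{min}$ of value $-1$; you leave this step implicit.
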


\begin{proof}
  This is proven by induction on the size of the derivation of both
  $\toq$ relations.

  If either $s$ or $t$ reduces by the rule (E), assuming $s$ without
  loss of generality, then $s \equiv s_1$, $s_1 \toq s_1'$ and $s_1' \equiv s'$.
  Thus, $t \equiv s_1$, by induction hypothesis, $s_1' \equiv t'$, and
  by transitivity of $\equiv$, $s' \equiv t'$.
  We now suppose that both $s$ and $t$ do not reduce via (E).

  If $s, t$ both reduce via the rule (C), then they are both classical.
  By quantity, $1 = \inter{\qt s s} = \inter{\qt s t} = \inter{\kron s t}$,
  thus $s = t$, with $s = E[l \sigma]$ and $t = E'[l'\sigma']$.
  As $E$ reduces from left to right and $s = t$ then $E = E'$; and as $\trsrule$
  is orthogonal, there is a unique choice for the rewrite rule,
  thus $l = l'$, and $\sigma = \sigma'$.
  Therefore, by the rule (C), $s' = t'$.

  Suppose $s, t$ both reduce via (Q), thus $t = \sum_{i=1}^n \alpha_i \cdot
  t_i$, and $s = \sum_{j=1}^m \beta_j \cdot s_j$ (same for $s', t'$ by adding
  $'$ on $t_i$ and $s_j$).
  By Lemma~\ref{lem:canonical-unicity}, their canonical form are equal, up to
  reordering and amplitude equivalence; in particular, they have the same number
  of elements.
  Therefore, for any $t_i$, there exists $s_j$ such that $t_i = s_j$; either
  they are normal forms, in which case $t_i' = s_j'$; or they both reduce, in
  which case by induction hypothesis $t_i' \equiv s_j'$.
  Furthermore, for these terms, $\alpha_i$ is equivalent to $\beta_i$; thus
  one can rewrite $t' \equiv t' + (\beta_j + \mta_{min} \cdot \alpha_i) \cdot
  t_i$, with $\inter{\mta_{min}} = -1$, as $\inter{\beta_j + \mta_{min} \cdot
  \alpha_i} = 0$; this thus transforms $\alpha_i \cdot t_i$ in $\beta_j \cdot
  t_i$.
  Therefore, starting from $t'$, one can reorder all terms as in $s'$, replace
  $\alpha_i$ by $\beta_j$, and replace $t_i$ by $s_j$; all of this can be done
  via $\equiv$, and we reach $s'$. Therefore, $t' \equiv s'$.

  Finally, suppose $s$ reduces via (C) and $t$ via (Q), thus $t =
  \sumdef$.
  By unicity of the canonical form, as $\mta_1 \cdot s$ is a canonical form,
  $t = \alpha \cdot t_1$, where $t' = s$ and $\inter \alpha =1$.
  By definition, $t_1$ must reduce to some $t_1'$;
  by induction hypothesis, $s' \equiv t_1'$.
  As $\inter \alpha = 1$, $t' = \alpha \cdot t_1' \equiv t_1' = s'$,
  thus we conclude.

\end{proof}

\confluence*

\begin{proof}
  The fact that the canonical form is unique is direct by
  Lemma~\ref{lem:canonical-unicity}.
  Let us prove semi-confluence, which implies confluence.
  Let $t$ be a term, suppose $t \toq t_1$ and $t \toqs t_2$,
  and prove that there exist $t_3, t_4$ such that $t_1 \toqs t_3$,
  $t_2 \toqs t_4$, and $t_3 \equiv t_4$.
  This is done by induction on the number of rewrite steps of $t \toqs t_2$.
  If it is done in $0$ steps, then take $t_4 = t_3 = t_1$.
  If it is done in one step, then by Lemma~\ref{lem:confluence-equiv},
  $t_1 \equiv t_2$, thus take $t_3 = t_1$ and $t_4 = t_2$.
  If it is done in two steps or more, i.e.,
  $t \toq t' \toq t'' \toqs t_2$, by Lemma~\ref{lem:confluence-equiv},
  $t' \equiv t_1$; via (E) rule of Figure~\ref{tab:toq},
  $t_1 \toq t''$. Therefore, take $t_3 = t_4 = t_2$.
\end{proof}

\begin{lemma}\label{lem:red-nequivz}
  Let $\mfR$ be a QTRS, and let $t$ be a term.
  Suppose $t \toq t_1$.
  Then, $\nequivz$.
\end{lemma}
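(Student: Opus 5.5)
We show by induction on the derivation of $t \toq t_1$ (Figure~\ref{tab:toq}) that $t$ cannot be equivalent to $\mta_0 \cdot t'$. The intended meaning of $\equivz$ is that $t$ is semantically the null vector, i.e.\ it has no canonical form (Lemma~\ref{lem:canonical}); so the claim is that any reducing term possesses a canonical form. The main tool is the quantity $\theta$ together with Lemma~\ref{lem:quantity-equiv}. If $t \equiv \mta_0 \cdot t'$, then for every classical $s$ and every suitable substitution $\sigma$ we get $\inter{\qt s t \sigma} = \inter{\qt s {\mta_0\cdot t'}\sigma} = 0$. Hence it suffices to exhibit, in each case, one classical $s$ with $\inter{\qt s t} \neq 0$.

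\emph{Case (C).} Here $t = E[l\sigma]$ is classical. Take $s = t$. By Lemma~\ref{lem:quantity-one}, $\qt t t = \mta_1$, whose interpretation is $1 \neq 0$. This contradicts $\equivz$.

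\emph{Case (Q).} Here $t = \sumdef[s]$ is literally a canonical form, with pairwise distinct classical $s_i$ and $\inter{\alpha_i}\neq 0$. Pick $s = s_1$. By Lemma~\ref{lem:quantity-classical}, $\qt{s_1}{t} = \sum_i \alpha_i \cdot \kron{s_1}{s_i}$. Since the $s_i$ are pairwise distinct, only the $i=1$ term survives, so $\inter{\qt{s_1} t} = \inter{\alpha_1} \neq 0$. Again this contradicts $\equivz$. (Note that $n \geq 1$ is forced, since some $s_i$ must reduce.)

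\emph{Case (E).} Here $t \equiv s'$, $s' \toq t'$ and $t' \equiv t_1$, where the derivation of $s' \toq t'$ is smaller. By the induction hypothesis $s'$ is not equivalent to any $\mta_0 \cdot t''$. If we had $t \equiv \mta_0\cdot t'''$, transitivity of $\equiv$ would give $s' \equiv \mta_0 \cdot t'''$, a contradiction.

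We expect the only delicate point to be making the quantity argument rigorous when terms may contain variables. Since $\toq$ is defined on ground terms, all terms involved are ground, so the substitution in Lemma~\ref{lem:quantity-equiv} is trivial. A second subtlety is that the $\mta_0\cdot t'$ on the right-hand side may itself be non-ground. We handle this by taking any $\sigma$ grounding $t'$, which the lemma permits. Alternatively, we can argue directly via Lemma~\ref{lem:canonical}: in cases (C) and (Q), $t$ has a canonical form (namely $\mta_1 \cdot t$, respectively $t$ itself). Since Lemma~\ref{lem:canonical} presents possessing a canonical form and $\equivz$ as exclusive alternatives, which is exactly what the quantity computation above justifies, this gives the same conclusion.
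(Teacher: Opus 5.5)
Your proof is correct and follows the paper's route: induction on the derivation of $t \toq t_1$. The classical case (C) is settled by comparing $\qt t t = \mta_1$ with $\qt t {\mta_0 \cdot t'}$ via Lemma~\ref{lem:quantity-equiv}, case (E) follows by transitivity of $\equiv$, and case (Q) uses the same quantity computation on the canonical form, a step the paper leaves implicit under ``direct by induction''.
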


\begin{proof}
  Direct by induction on $\toq$; in the case where $t$ is classical,
  conclude by Lemma~\ref{lem:quantity-equiv} by checking
  $\qt t t$ and $\qt t {\mta_0 \cdot t}$.
\end{proof}

\begin{lemma}\label{lem:red-can}
  Let $\mfR$ be a QTRS, and let $s$ be a term. Suppose $s \toq t$.
  Then $s \equiv \candef[s]$, $t \equiv \sumdef[t]$, and $s_i \toqif t_i$.
\end{lemma}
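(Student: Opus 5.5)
The proof goes by induction on the derivation of $s \toq t$, with one case per rule of Figure~\ref{tab:toq}. Throughout we use Lemma~\ref{lem:red-nequivz}: since $s$ reduces, $s \not\equiv \mta_0 \cdot s'$. By Lemma~\ref{lem:canonical}, $s$ therefore has a canonical form $\candef[s]$, unique up to reordering and amplitude equivalence. The remaining task is to produce a matching decomposition of $t$ with the same amplitudes such that each $s_i \toqif t_i$.

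\emph{Case (C).} Here $s = E[l\sigma]$ is classical and $t = E[r\sigma]$. Take the canonical form $\mta_1 \cdot s$ of $s$ and write $t \equiv \mta_1 \cdot t$. Since $s \toq t$, we have $s \toqif t$, and the claim holds with $n=1$.

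\emph{Case (Q).} The premises give $s$ literally of the form $\candef[s]$, with $t = \sumdef[t]$ and $s_i \toqif t_i$ for every $i$. So there is nothing to do beyond reading off the premises.

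\emph{Case (E).} Here $s \equiv s'$, $s' \toq t'$ and $t' \equiv t$. The induction hypothesis applied to $s' \toq t'$ gives $s' \equiv \sum_i \alpha_i \cdot s_i \in \can$ and $t' \equiv \sum_i \alpha_i \cdot t_i$ with $s_i \toqif t_i$. By transitivity of $\equiv$, $s \equiv \sum_i \alpha_i \cdot s_i$, which is then also a canonical form of $s$, and $t \equiv \sum_i \alpha_i\cdot t_i$. This gives the statement.

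The main subtlety is whether the statement is meant for a fixed canonical form of $s$ (for instance, the one chosen later in some other argument) rather than for some canonical form. If it is, the decomposition produced by the induction has to be transported to the chosen form. Lemma~\ref{lem:canonical-unicity} gives a permutation $\pi$ with $s_i = s'_{\pi(i)}$ and $\inter{\alpha_i} = \inter{\alpha'_{\pi(i)}}$. Reindexing the $t_i$ along $\pi$, and replacing each amplitude by an equivalent one via $\equiv$ (adding $(\alpha' + \mta_{-1}\cdot\alpha)\cdot t_i$, whose interpretation is $0$, as in the proof of Lemma~\ref{lem:confluence-equiv}), then yields the required equivalence for $t$. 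The relations $s_i \toqif t_i$ are untouched because the classical summands are literally equal.
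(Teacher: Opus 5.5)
Your proof is correct and matches the paper's argument: induction on the derivation of $s \toq t$, with the (C) case handled by the canonical form $\mta_1 \cdot s$, the (Q) case read directly off the premises, and the (E) case closed by transitivity of $\equiv$. Your last paragraph, which transports the decomposition to an arbitrary fixed canonical form via Lemma~\ref{lem:canonical-unicity}, is sound and adds something: the paper leaves this step implicit but relies on it later, for instance in Lemma~\ref{lem:split-red}.
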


\begin{proof}
  First, Lemma~\ref{lem:red-nequivz} states that $\nequivz[s]$,
  and thus Lemma~\ref{lem:canonical} implies that $s$ has a canonical
  form $\candef[s]$.
  Now, by induction on the derivation of $s \toq t$:
  \begin{itemize}
    \item If $s = C[\sigma l]$, as it is a classical term,
      $s \equiv \mta_1 \cdot s \in \can$, and $t \equiv \mta_1 \cdot t$,
      thus we conclude directly;
    \item If $s$ is equivalent to some $s'$, $s'$ has the same canonical
      form as $s$, thus by induction hypothesis we can conclude.
    \item If $s$ is directly a canonical form, we conclude. \qedhere
  \end{itemize}
\end{proof}

\confluence*


To ease the notations, the rules of Figure~\ref{tab:typing} typing
term equivalence, a symbol $\mtb$, and a superposition, are respectively
denoted with labels (equiv), (symb), and (sup) in the following proofs.

A typing derivation is said to be \emph{classical}, if it contains
neither (sup) nor (equiv) typing rules.

\begin{lemma}\label{lem:tree-classical}
  Let $\mfR$ be a QTRS, and let $\typingdef$ be a well-typed classical term.
  Then, it can be typed by a classical typing derivation.
\end{lemma}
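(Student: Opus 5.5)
We proceed by induction on the typing derivation of $\typingdef$, and transform it into a classical derivation of the same judgment. The statement is slightly more general than needed: we show that for every well-typed classical term $t$ and every derivation of $\typingdef$, there is a classical derivation of $\typingdef$ with the same $\Gamma$, $\Delta$, and $T$. The rules for variables are already classical. For the rule (symb) with $t = \mtb(t_1, \dots, t_n)$, each $t_i$ is a subterm of a classical term and is therefore classical. The induction hypothesis then replaces each premise derivation by a classical one, and reapplying (symb) gives the result.

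The hard cases are the last rules (sup) and (equiv). A (sup) rule cannot conclude a judgment on a syntactically classical term: its conclusion is a syntactic sum $\sumdef$, which is never classical. The exception is if one reads $n = 1$ as allowed with $\sum_{i=1}^1 \alpha_1 \cdot t_1$ denoting $\alpha_1 \cdot t_1$; that term is still not classical, since it contains an amplitude. So (sup) never occurs as the last rule, and the only real case is (equiv). Here the derivation has a premise $\typing \Gamma \Delta s T$ with $s \equiv t$, where $s$ may be arbitrary, for example $\stwo \cdot \ket 0 + \stwo \cdot \ket 0$ with $\inter{\stwo + \stwo}$ adjusted, or $\mta_1 \cdot t$.

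To handle (equiv), I would strengthen the induction to the following claim. If $\typing \Gamma \Delta s T$ and $s \equiv t$ with $t$ classical, then $t$ has a classical derivation of $\typing \Gamma \Delta t T$. The induction is on the derivation for $s$, and in each case the key is the canonical form $\mta_1 \cdot t$ of $s$ given by Lemma~\ref{lem:canonical-unicity}. In case (equiv), we compose the equivalences and apply the induction hypothesis. In case (sup), $s = \sumdef$ with each $t_i$ typed at $Q$ under $\Gamma;\Delta$. By uniqueness of canonical forms, the canonical form of $s$ is $\mta_1 \cdot t$, so the summands of $s$ combine so that only $t$ survives. Concretely, each $t_i$ either has a canonical form containing $t$ or contributes only terms that cancel out. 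At least one $t_i$ has $t$ in its canonical form. The obstacle is to extract a typing of $t$ from that summand alone. If $t_i \equiv t$ we are done by induction. Otherwise $t_i$ is a nontrivial superposition that contains $t$, and we recurse into its (sup) derivation. This descends strictly into subderivations, and it bottoms out at a summand typed by a derivation whose last rule is not (sup). That summand is a variable or a symbol application containing $t$ in its canonical form. In case (symb), $s = \mtb(s_1, \dots, s_n) \equiv t$, so by the quantity (Lemma~\ref{lem:quantity-classical}) $t = \mtb(t_1', \dots, t_n')$ with $s_k$ having $t_k'$ in its canonical form. We then apply the same extraction to each premise and reassemble with (symb). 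The linear contexts $\Delta_k$ are preserved since they come from the premises.

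The main difficulty I expect is making this ``extraction of a basis component'' precise. We need a lemma that if $\typing \Gamma \Delta s T$ and the classical term $p$ appears with nonzero amplitude in the canonical form of $s$ (that is, $\inter{\qt p s} \neq 0$ for the relevant substitutions), then $p$ admits a classical derivation of $\typing \Gamma \Delta p T$. This lemma is proven by induction on the derivation of $s$, using Lemma~\ref{lem:quantity-equiv} for the (equiv) case and the product and sum clauses of Definition~\ref{def:quantity} for (symb) and (sup). In the (sup) case, a nonzero sum forces some $\qt p {t_i}$ to be nonzero. In the (symb) case, a nonzero product forces every factor to be nonzero and fixes the head symbol of $p$. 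Free variables need care: we check nonzeroness under some ground substitution in $\subcont[\Gamma\cup\Delta]$, which the definition of $\inter{\cdot}$ on non-ground amplitudes permits. The statement then follows by taking $p = t$ and $s = t$, using Lemma~\ref{lem:quantity-one}.
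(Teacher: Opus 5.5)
Your proposal is correct and, once the exploratory detour through the (equiv) case is set aside, it is the paper's proof. The paper proves the auxiliary claim that every classical $p$ with nonzero quantity $\qt p s$ in a well-typed $s$ has a classical derivation with the same $\Gamma;\Delta$ and type, by induction on the typing of $s$ (product clause for (symb), sum clause for (sup), Lemma~\ref{lem:quantity-equiv} for (equiv)), then instantiates it at $p = s = t$ via Lemma~\ref{lem:quantity-one}. Your idea of testing nonzeroness under a fixed grounding substitution is more careful than the paper's condition $\inter{\qt s t} \neq 0$, which is undefined when the quantity depends on variables (e.g.\ $\mta^+(n)$ in Example~\ref{ex:had-gen}); just fix an arbitrary grounding substitution, say every variable to $\natz$, rather than one from $\subcont[\Gamma\cup\Delta]$, which may be empty.
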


\begin{proof}
  First show that for any $\typingdef$ well-typed term, and any classical
  term $s$, if $\inter{\qt s t} \neq 0$, then $\typingdef[s]$ holds with a
  classical typing derivation.
  This is proven by induction on the typing of $t$.
  \begin{itemize}
    \item If $t$ is typed as a variable, then it is classical, thus
      $\inter{\qt s \mtx} \neq 0$ implies that $s = \mtx$, thus we conclude.
    \item If $t = \mtb(t_1, \dots, t_n)$, then quantity implies that
      $s = \mtb(s_1, \dots, s_n)$ with $\inter{\qt{s_i}{t_i}} \neq 0$.
      By induction hypothesis, we obtain typing of $s_i$ with same type
      and context as $t_i$ via a classical derivation, thus we conclude.
    \item If $t \equiv t'$, Lemma~\ref{lem:quantity-equiv} implies that
      $\inter{\qt s {t'}} \neq 0$, thus we apply the induction hypothesis and
      conclude, as $t'$ and $t$ have the same type and context.
    \item Finally, suppose $t = \sumdef$. By quantity, if $\inter{\qt s {t_i}}=
      0$ for all $i$, then $\inter{\qt s t} \neq 0$ fails. Thus, apply the
      induction hypothesis on $\qt s {t_i}$ where $\inter{\qt s {t_i}} \neq 0$
      holds, and conclude as $t_i$ has same type and context as $t$.
  \end{itemize}
  We conclude by applying this result for $t = s$, as $\qt s s = \mta_1$ by
  Lemma~\ref{lem:quantity-one}.
\end{proof}

\begin{lemma}\label{lem:tree-canonical}
  Let $\mfR$ be a QTRS, and let $\typingdef$ be a well-typed terminating term.
  Then, it can be typed as follows:
  \[
    \begin{prooftree}
      \infer0{\typing{\Gamma}{\Delta}{s_0} T}
      \infer1[(sup)]{\typing{\Gamma}{\Delta} {s_1} T}
      \ellipsis{}{}
      \infer1[(sup)]{\typing{\Gamma}{\Delta}{s_n} T}
      \hypo{s_n \equiv t}
      \infer2[(equiv)]{\typingdef}
    \end{prooftree}
  \]
  where $t$ types $s_0$ via a classical typing derivation, and then is followed
  by $n \geq 0$ (sup) typing rules, and one (equiv) typing rule.
\end{lemma}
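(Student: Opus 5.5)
We argue by induction on the typing derivation of $\typingdef$, proving the stronger normal-form statement: every derivation of a well-typed terminating term can be rearranged into the displayed shape. Concretely, the derivation consists of a classical derivation of some $s_0$, followed by a (possibly empty) tower of (sup) rules building $s_1, \dots, s_n$, and closed by exactly one (equiv) step $s_n \equiv t$. Since (equiv) is transitive and reflexive, several consecutive (equiv) rules can always be merged into one, and a missing (equiv) can be inserted using $t \equiv t$. The work is therefore to push (equiv) rules to the root and to lift (sup) rules above symbol rules.

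The cases are handled as follows.
\begin{itemize}
\item \textbf{Variable rules.} These are already classical.
\item \textbf{(equiv) at the root.} We apply the induction hypothesis to the premise and compose the two equivalences. The premise term need not itself be terminating, but it is equivalent to $t$. By Lemma~\ref{lem:confluence-equiv} together with rule (E), equivalent terms have the same reducts, so termination transfers.
\item \textbf{(sup) at the root.} Here $t = \sumdef$ and each $t_i$ has the same type and context. Each $t_i$ terminates: by the contrapositive of Lemma~\ref{lem:chain-sum}, or more directly from the orthogonality premise, which forces every $t_i\sigma$ to reach a normal form. By the induction hypothesis, each $t_i$ has a derivation of the required shape ending with $t_i \equiv s^i_{n_i}$.
\item \textbf{(symb) at the root.} For $t = \mtb(t_1, \dots, t_n)$, the induction hypothesis gives shaped derivations for each argument, with $t_i \equiv s^i$ where each $s^i$ is built by (sup) over a classical core. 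Using the linearity rules of Figure~\ref{tab:equiv}, we distribute $\mtb$ over the sums, as in the proof of Lemma~\ref{lem:canonical}. This yields $t \equiv \sum_k \gamma_k \cdot \mtb(c^k_1, \dots, c^k_n)$ with classical summands, each typed classically by (symb) on classical derivations of the $c^k_i$. These classical derivations exist by Lemma~\ref{lem:tree-classical} applied to the classical summands appearing under the (sup) rules.
\end{itemize}

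The main obstacle is the (symb) case: the distributed sum must be retyped with a single (sup) rule. Two premises of that rule need checking.

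For the normalization premise, we must show $\sum_k |\inter{\gamma_k\sigma}|^2 = 1$. Since each $\gamma_k$ is a product of the argument amplitudes, the sum factors as the product of the argument norms, each equal to $1$.

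For the orthogonality premise, we must show $\mtb(\overline{c}) \orthored \mtb(\overline{c'})$ whenever $\overline c \neq \overline{c'}$. Here I would invoke isometry: if $\mtb$ is a function symbol, orthogonality of some coordinate gives orthogonality of the images. If $\mtb$ is a constructor, orthogonality follows directly from the definition of $\delta_\perp$, using that $\mtb(\ldots)$ provides a shared context. Termination of the summands, needed for the reductions in Definition~\ref{def:orthogonality} to exist, again comes from Lemma~\ref{lem:chain-sum} applied to $t$.

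A final technical point arises when the argument derivations have (sup) towers of different heights. To handle it, I would first flatten each argument's tower into one canonical sum, using Lemma~\ref{lem:canonical-typed}, before distributing.
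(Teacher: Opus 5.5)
\textbf{Gap in the (symb) case.} You retype the fully distributed sum $\sum_k \gamma_k\cdot\mtb(c^k_1,\dots,c^k_n)$ with a \emph{single} (sup) rule, and for nested towers you first flatten each argument into its canonical form using Lemma~\ref{lem:canonical-typed}. That lemma only says each summand of the canonical form is typed. It says nothing about pairwise orthogonality of the summands, or about the squared amplitudes summing to $1$, and both fail once summands contain function symbols.

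Here is a counterexample using only the two $\mathtt{X}$ rules of Example~\ref{ex:cliffordt}. Let $A=\stwo\cdot\zket+\stwo\cdot\oket$, $B=\stwo\cdot\zket+\mta_{-1}\cdot\stwo\cdot\mathtt{X}(\zket)$, and $t=\stwo\cdot A+\stwo\cdot B$.
\begin{itemize}
\item $B$ is typed by (sup), since $\mathtt{X}(\zket)$ reduces to $\oket$.
\item $A\orthored B$, since they evaluate to $\ket{+}$ and $\ket{-}$.
\item So $t$ is a well-typed terminating qubit, typed by a tower of height $2$.
\end{itemize}
Its canonical form is $\zket+\alpha\cdot\oket+\beta\cdot\mathtt{X}(\zket)$ with $\inter{\alpha}=\frac12$ and $\inter{\beta}=-\frac12$. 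The squared amplitudes sum to $\frac32$. Moreover $\oket$ and $\mathtt{X}(\zket)$ both reduce to $\oket$, so they are not orthogonal.

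By uniqueness of canonical forms, $t$ admits no derivation of the form ``one (sup) over classical terms, then (equiv)''. Hence your (symb) case cannot be closed already for $\mathtt{X}(t)$ or $t::\nil$. The sum you would build, $\mathtt{X}(\zket)+\alpha\cdot\mathtt{X}(\oket)+\beta\cdot\mathtt{X}(\mathtt{X}(\zket))$, violates both premises of (sup). Your norm-factorization and orthogonality arguments are valid only when each argument is a single-level sum of pairwise orthogonal classical terms, and that shape cannot be enforced inductively.

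\textbf{The missing idea.} Keep the nesting instead of flattening. Commute (sup) past (symb) one rule at a time:
\[
\mtb(\dots,\textstyle\sum_i\alpha_i\cdot u_i,\dots)\equiv\sum_i\alpha_i\cdot\mtb(\dots,u_i,\dots),
\]
where the $u_i$ may themselves be sums.
\begin{itemize}
\item The normalization premise of the new (sup) is literally the old one.
\item $\mtb(\dots,u_i,\dots)\orthored\mtb(\dots,u_j,\dots)$ follows from $u_i\orthored u_j$, by isometry for function symbols or by the shared constructor context.
\end{itemize}
Iterating over all (sup) rules and all arguments yields a tower, which is exactly what the statement allows and what the paper's proof does.

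\textbf{Secondary point.} The contrapositive of Lemma~\ref{lem:chain-sum} gives ``all summands terminate $\Rightarrow$ the sum terminates''. You need the converse, both in the (sup) case and in the (symb) case. That direction must come from the orthogonality premises (for at least two summands) and from isometry, which force each summand to reach a normal form.
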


\begin{proof}
  The derivation is built by modifying the original typing derivation of $t$,
  with the following remarks.
  Note that below, when we say that a typing rule (a) commutes with (b), we
  only mean that a derivation with root (b) and direct premise (a) can be
  rewritten as a derivation with a root (a) taking as direct premise (b).
  \begin{itemize}
    \item First, an (equiv) typing rule commutes with any other rule, thanks to
      the last rule of Figure~\ref{tab:equiv}; and by transitivity of $\equiv$,
      two (equiv) rules can be merged together.
      Any typing derivation for the following case can be considered
      with only one (equiv) rule at the end.
    \item We also need to prove that (sup) commutes with (symb) for terminating
      terms, up to adding an (equiv) rule.
      Assuming that (sup) happens in the first slot without loss of generality,
      if $\mtb(\sumdef, \dots, s_m)$ is typable through (symb) and (sup), we
      want to prove that $\sum_{i=1}^n \alpha_i \cdot \mtb(t_i, \dots, s_m)$ is
      typable through (sup) and (symb), thus the initial term is typable
      through (equiv), (sup) and (symb).
      The contexts and types will be correct for the typing to happen, and the
      phases are normalized, but the difficult part is to show that if $t_i
      \orthored t_j$, then $\mtb(t_i, \dots, s_m) \orthored \mtb(t_j, \dots,
      s_m)$.
      However, if $\mtb$ is a constructor, then the terms will reduce to
      $\mtb(v_i, \dots, w_m)$, where $v_i \orthored v_j$ as $t_i \orthored
      t_j$, thus orthogonality is recovered;
      and if $\mtb$ is a function symbol, it must be an isometry by definition
      of a QTRS, and thus orthogonality is also preserved (up to reducing the
        $t_i$ to their normal forms $v_i$, which can be done as they are
      orthogonal, thus terminate).
      Furthermore, considered terms always terminate. Indeed, either it is part
      of a construct from the original terminating term $t$; else, it must have
      been introduced through $\equiv$, but thus introduced a summation,
      which, if typed, implies that it terminates.
  \end{itemize}
  Using these remarks, one is able to commute the rules accordingly to obtain
  the expected structure.
\end{proof}

\begin{lemma}\label{lem:sub-lemma-one}
  Let $\mfR$ be a QTRS, and let $\typingclosed s T$ be a well-typed classical
  term.
  Suppose $s = t \sigma$, for $t$ a left-linear classical term.
  Then, there exist $\Gamma, \Delta$ such that $\typing \Gamma \Delta t T$, and
  $\sigma \in \subcont[\Gamma \cup \Delta]$.
\end{lemma}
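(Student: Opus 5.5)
We argue by induction on the structure of the left-linear classical term $t$, producing the contexts $\Gamma,\Delta$ together with the fact that $\sigma$ restricted to $\var t$ is a $(\Gamma\cup\Delta)$-context substitution. By Lemma~\ref{lem:tree-classical}, $\typingclosed s T$ admits a classical typing derivation, i.e., one using only the variable and (symb) rules. Since $s$ is ground, the derivation is built from (symb) rules alone. Each node of this derivation corresponds to a subterm of $s$ and assigns it a basic type. We will read off the types of the variables of $t$ from the subterms of $s$ that they are mapped to.

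If $t = \mtx$ is a variable, then $s = \mtx\sigma$ is closed and well-typed of type $T$. We take $\Gamma = \{\mtx : T\}, \Delta = \varnothing$ if $T$ is classical, and $\Gamma = \varnothing, \Delta = \{\mtx : T\}$ if $T \in \mtQ$. In either case the appropriate variable rule gives $\typing \Gamma \Delta t T$. The substitution $\sigma$ sends $\mtx$ to a well-typed value of type $T$, since $s$ is ground classical, hence a value when the pattern positions contain no function symbols. Note that in the paper's use, $\sigma$ ranges over values anyway.

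If $t = \mtb(t_1,\dots,t_n)$, then $s = \mtb(t_1\sigma,\dots,t_n\sigma)$. The last rule of the classical derivation is (symb) with $\signdef$ and premises $\typingclosed{t_i\sigma}{T_i}$. By the induction hypothesis we obtain $\typing{\Gamma_i}{\Delta_i}{t_i}{T_i}$ with $\sigma \in \subcont[\Gamma_i\cup\Delta_i]$. Left-linearity ensures that the sets $\var{t_i}$ are pairwise disjoint, so the $\Delta_i$ are disjoint and the $\Gamma_i$ can be merged into $\Gamma = \bigcup_i \Gamma_i$. This merged context is well-defined because the domains are disjoint, and it does not clash with $\Delta = \Delta_1,\dots,\Delta_n$. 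By Lemma~\ref{lem:weakening}, each judgement becomes $\typing{\Gamma}{\Delta_i}{t_i}{T_i}$, and the (symb) rule then gives $\typing \Gamma \Delta t T$. Since a context substitution condition is checked pointwise on variables, $\sigma \in \subcont[\Gamma\cup\Delta]$.

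The main obstacle is the base case. We must make sure that the type assigned to $\mtx$ is a single well-defined basic type, and that $\mtx\sigma$ is a well-typed value of that type, as required by the definition of $\subcont$. Uniqueness of types is not an issue, because left-linearity means each variable occurs once. The value condition is the delicate point: if $\sigma$ is an arbitrary substitution, it is a map into $\trsetval$ by the paper's convention, so $\mtx\sigma$ is a value. Its well-typedness is exactly the premise extracted from the derivation of $s$.
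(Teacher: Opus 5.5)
Your proof is correct and follows the paper's argument. It uses induction on the syntax of the left-linear term $t$, with Lemma~\ref{lem:tree-classical} supplying a classical, (symb)-only derivation of $s$, the appropriate variable rule in the base case ($\mtx\sigma$ being a value by the convention that substitutions map into $\trsetval$), and left-linearity to merge the contexts of the $t_i$ in the (symb) case. Your explicit appeal to Lemma~\ref{lem:weakening} to bring the premises to a common non-linear context just spells out the step the paper summarizes as ``conciliating'' the cases.
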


\begin{proof}
  Recall that $\sigma$ maps variables to classical values.
  By induction on the syntax of $t$, either $t = \mtx$, in which case $\typing
  \Gamma \Delta \mtx T$ is valid through one of the two variable typing rules,
  taking $\typedom{\Gamma} \cup \typedom{\Delta} = \set{\mtx}$;
  and $\sigma$ maps $\mtx$ to $s$, which is a classical and closed value of
  good type.
  Else, $t = \mtb(t_1, \dots, t_n)$, which implies that $s = \mtb(s_1, \dots,
  s_n)$ by action of a substitution, with $s_i = t_i \sigma$.
  As $s$ is classical, Lemma~\ref{lem:tree-classical} states that it has a
  classical typing derivation, thus we have $\signdef$, and $\typing \Gamma
  {\Delta_i} {t_i} T_i$.
  By induction on each $t_i$, $\typing \Gamma {\Delta_i}{s_i} T_i$,
  and $\sigma \in \subcont[\Gamma \cup \Delta_i]$.
  Typing with (symb) yields the expected typing result; and as $t$ is
  left-linear, variables in the $t_i$ are disjoint, and thus we can conciliate
  all possible cases to obtain $\sigma \in \subcont[\Gamma \cup \Delta]$, and
  conclude.
\end{proof}

\begin{lemma}\label{lem:sub-lemma-two}
  Let $\mfR$ be a QTRS, let $\typing \Gamma \Delta t T$ be a
  well-typed term, and let $\typingamp \Gamma \alpha$
  be a well-typed amplitude.
  Then, for any substitution $\sigma \in \subcont[\Gamma \cup \Delta]$,
  $\typingclosed{t \sigma} T$ and $\typingamp{}{\alpha \sigma}$.
\end{lemma}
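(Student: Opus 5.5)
We prove both claims simultaneously by induction on the typing derivations of $\typing \Gamma \Delta t T$ and $\typingamp \Gamma \alpha$, with a case for each rule of Figure~\ref{tab:typing}. Fix $\sigma \in \subcont[\Gamma \cup \Delta]$. Each variable of $\Gamma \cup \Delta$ is mapped to a well-typed closed value of the declared type. We also use that $\sigma$ acts homomorphically on the syntax, so $\mtb(t_1,\dots,t_n)\sigma = \mtb(t_1\sigma,\dots,t_n\sigma)$, $(\alpha\cdot t)\sigma = \alpha\sigma\cdot t\sigma$, and so on.

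\textbf{Amplitudes.} For $\mta(\iota_1,\dots,\iota_n)$, each premise $\typing \Gamma {} {\iota_i} \nat$ gives $\typingclosed{\iota_i\sigma}\nat$ by the term part of the induction hypothesis. Reapplying the rule yields $\typingamp{}{\mta(\iota_1,\dots,\iota_n)\sigma}$. The sum and product cases follow directly from the induction hypothesis.

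\textbf{Terms.} In the variable rules, $x\sigma$ is a closed value of type $C$ (resp.\ $Q$) by definition of $\subcont$. The (symb) case reapplies (symb) to the induction hypotheses on the $t_i$. This uses $\sigma \in \subcont[\Gamma\cup\Delta_i]$, which holds since $\Delta_i \subseteq \Delta$. For (equiv), from $s \equiv t$ we get $s\sigma \equiv t\sigma$, because every rule of Figure~\ref{tab:equiv} is stable under substitution. The side conditions $\inter\alpha = 0$ and $\inter\alpha = 1$ are preserved, since $\inter{\alpha}$ was defined uniformly over all grounding substitutions. The free-variable condition is trivial for ground terms, so (equiv) applies again with empty contexts.

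\textbf{The (sup) case.} This is the main obstacle. By the induction hypothesis we have $\typingclosed{t_i\sigma} Q$ and $\typingamp{}{\alpha_i\sigma}$. Two conditions remain. First, normalization: we must show $\sum_i \size{\inter{\alpha_i\sigma\sigma'}}^2 = 1$ for every $\sigma' \in \subcont[\varnothing]$. This holds because $\alpha_i\sigma$ is ground, so $\sigma'$ acts trivially on it and the sum equals the original one instantiated at $\sigma$. Second, we need $t_i\sigma \orthored t_j\sigma$ in the empty context. Unfolding Definition~\ref{def:orthogonality}, the only substitution to consider is the identity on the ground terms $t_i\sigma, t_j\sigma$. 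The original hypothesis $t_i \orthored t_j$ quantifies over all substitutions in $\subcont[\Gamma\cup\Delta]$, in particular $\sigma$. It therefore provides exactly the reductions $t_i\sigma \toqs \sum \alpha_k\cdot v_k$ and $t_j\sigma\toqs\sum\beta_l\cdot w_l$ together with the vanishing of the $\delta_\perp$-weighted sum. So the orthogonality condition transfers verbatim. The care needed is to check that the definition of $\orthored$ for closed terms ranges only over $\subcont[\varnothing]$, with the substituted typing judgements supplied by the induction hypothesis. Reapplying (sup) then gives $\typingclosed{(\sumdef)\sigma} Q$.
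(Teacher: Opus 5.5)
Your proof is correct and matches the paper's argument. Both proceed by simultaneous induction on the term and amplitude typing derivations, reapply (symb) using $\sigma \in \subcont[\Gamma\cup\Delta_i]$, and settle the (sup) case by instantiating the universally quantified normalization and orthogonality conditions at $\sigma$. Your treatment of (equiv) and of the typing precondition in the definition of orthogonality is somewhat more explicit than the paper's, but the approach is the same.
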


\begin{proof}
  By induction on typing of both amplitudes and terms.
  \begin{itemize}
    \item If $t = \mtx$, then $\mtx : T \in \Gamma \cup \Delta$, thus
      $\typingclosed{s\sigma} T$ is direct by definition of a
      context substitution.
    \item Suppose $t \equiv s$, with $\typing \Gamma \Delta s T$.
      By induction hypothesis, $\typingclosed{s \sigma} T$,
      and it is easy to verify that $s \sigma \equiv t \sigma$, as
      $\sigma$ maps to classical values, thus the structure stays
      identical. Therefore, $\typingclosed{t \sigma} T$.
    \item If $t = \mtb(t_1, \dots, t_n)$, typing implies that $\Delta =
      \Delta_1, \dots, \Delta_n$, with $\typing \Gamma {\Delta_i}{t_i}{T_i}$.
      Now, $\sigma$ is, a fortiori, a $\Gamma \cup \Delta_i$-context
      substitution.
      Therefore, by induction hypothesis, $\typingclosed{t_i \sigma}{T_i}$,
      thus $\typingclosed{\mtb(t_1\sigma, \dots, t_n \sigma)}{T}$, and conclude
      as $\mtb(t_1\sigma, \dots, t_n \sigma) = t \sigma$.
    \item Suppose $t = \sumdef$.
      By typing, $\typing \Gamma \Delta {t_i} T$, thus $\typingclosed{t_i
      \sigma} T$ by induction hypothesis.
      By typing, $\typingamp \Gamma \alpha_i$, thus $\typingamp{} \alpha_i
      \sigma$ by induction hypothesis;
      By typing, $\sum_{i=1}^n \size{\inter{\alpha_i \delta}}^2 = 1$ is true
      for any substitution $\delta \in \subcont$, thus as $\sigma$ is, a
      fortiori, a $\Gamma$-context substitution, it is true for $\sigma$, and
      now $\alpha_i \sigma$ are closed, thus no need to verify it for another
      substitution.
      Finally, $t_i \perp t_j$ implies that for any substitution $\delta \in
      \subcont[\Gamma \cup \Delta]$, $t_i$, $t_j$ reduce to some value which
      verifies some properties. Again, this is true in particular for $\sigma$,
      and now $t_i \sigma$ is closed, thus orthogonality is verified between
      $t_i \sigma$ and $t_j \sigma$.
      Therefore, one is able to type $\typingclosed{\sum_{i=1}^n (\alpha_i
      \sigma) \cdot t_i \sigma}{T}$, which, by definition of $\sigma$, is equal
      to $t\sigma$.
    \item If $\alpha = \mta(\alpha_1^c, \dots, \alpha_n^c)$, this is direct
      by induction as for $\mtb(t_1, \dots, t_n)$.
    \item The case for $\alpha_1 + \alpha_2$ and $\alpha_1 \cdot \alpha_2$ are
      also direct by induction and by definition of a substitution. \qedhere
  \end{itemize}
\end{proof}

\subred*

\begin{proof}
  By induction on $\toq$.
  \begin{itemize}
    \item Suppose $s = C[l \sigma]$ and $t = C[r \sigma]$.
      As $s$ is classical, Lemma~\ref{lem:tree-classical} states that
      it can be typed by a classical typing derivation, which thus will
      coincide with each symbol of $C$.
      Therefore, $\typingclosed{l\sigma}{T'}$; as QTRS are left-linear, so is
      $l$.
      By Lemma~\ref{lem:sub-lemma-one}, there exist $\Gamma, \Delta$ such that
      $\typing \Gamma \Delta l T'$, and $\sigma \in \subcont[\Gamma \cup
      \Delta]$.
      As $\mfR$ is a QTRS, $\typing \Gamma \Delta r T'$.
      By Lemma~\ref{lem:sub-lemma-two}, $\typingclosed {r\sigma} T'$;
      and one can type $t$ by the same derivation as $s$, by replacing
      $l \sigma$ by $r \sigma$, as it has the same context and type,
      and it is a classical derivation, thus no need to derive orthogonality.
    \item Suppose that $s \equiv s'$ and $t \equiv t'$ with $s' \toq t'$.
      As $\equiv$ preserves typing, $\typing \Gamma \Delta {s'} T$;
      by induction hypothesis, $\typing \Gamma \Delta {t'} T$;
      and by equivalence again, $\typingdef$.
    \item Finally, suppose that $s = \candef[s]$ and $t = \sumdef$,
      with $s_i \toqif t_i$ for all $i$.
      As stated in Lemma~\ref{lem:tree-canonical}, consider a typing derivation
      where all equivalence relations are at the end, thus $s \equiv s'$, where
      $s'$ is well-typed, with no equivalence relation inside its derivation.
      If $s'$ is classical and $s$ is a canonical form, then $s = 1
      \cdot s'$ by quantity;
      by induction hypothesis, this implies $\typing \Gamma \Delta {t'} T$,
      and thus $\typing \Gamma \Delta t T$  by $\equiv$, with $t = 1 \cdot t'$.
      Now, suppose $s'$ is not classical, and thus $s'$ terminates.
      Lemma~\ref{lem:tree-canonical} implies the following, with $s'
      \equiv s''$:
      \[
        s'' = \sum_{i_1=1}^{m_1} \alpha_{i_1} \cdot (\dots
        \sum_{i_n=1}^{m_n} \beta_{i_1 \dots i_n} \cdot p_{i_1 \dots i_n}),
      \]
      where the $p_{ij}$ are classical terms.
      Developing all the sums yields a canonical form, up to removing potential
      terms with a phase equivalent to $0$; and this is exactly $s$, by unicity
      of the canonical form (Lemma~\ref{lem:canonical-unicity}).
      By each (sup) rule, we also have $\typing \Gamma \Delta {p_{i_1 \dots
      i_n}} T$, and such term is equal to some $s_i$.
      As $s_i \toqif t_i$, either $s_i = t_i$, and typing is direct, or $s_i
      \toq t_i$, in which case we can apply the induction hypothesis.
      One can thus replace each $p_{i_1\dots i_n}$ with its (potential) reduced
      $q_{i_1 \dots i_n}$; and as they are typed, and are reduction of
      orthogonal terms, thus still orthogonal, we are able to type back each
      (sup) rule, thus to type $t'$, and conclude. \qedhere
  \end{itemize}
\end{proof}

\begin{lemma}\label{lem:split-red}
  Let $t \equiv \sumdef[t]$. Suppose that $t \toqs v$.
  Then, there exist $s_i$ such that $t_i \toqs s_i$ and $v \equiv \sumdef[s]$.
\end{lemma}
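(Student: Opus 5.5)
We argue by induction on the number $k$ of $\toq$-steps in the derivation $t \toqs v$. The claim is stated for an arbitrary decomposition $t \equiv \sumdef[t]$, so the induction hypothesis must quantify over all such decompositions. For $k = 0$ we have $v = t$; we take $s_i \triangleq t_i$ and obtain $v \equiv \sumdef[s]$ directly.

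For the inductive step, write $t \toq t' \toqs v$. The goal is a one-step version of the claim: if $t \equiv \sumdef[t]$ and $t \toq t'$, then there exist $t_i'$ with $t_i \toqif t_i'$ (hence $t_i \toqs t_i'$) and $t' \equiv \sumdef[t']$. Given this, we apply the induction hypothesis to $t' \equiv \sumdef[t']$ and $t' \toqs v$. It yields $s_i$ with $t_i' \toqs s_i$ and $v \equiv \sumdef[s]$, and transitivity gives $t_i \toqs s_i$.

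To prove the one-step version, we use Lemma~\ref{lem:red-can}. It gives a canonical form $t \equiv \sum_{k=1}^m \beta_k \cdot p_k$ with $p_k$ pairwise distinct classical terms, together with $t' \equiv \sum_k \beta_k \cdot q_k$ where $p_k \toqif q_k$. Next we decompose each summand $t_i$. By Lemma~\ref{lem:canonical}, either $t_i \equiv \mta_0 \cdot t_i''$, or $t_i$ has a canonical form $\sum_k \gamma_{ik} \cdot p_k'$.

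In the first case we set $t_i' \triangleq t_i$. Such a summand contributes nothing up to $\equiv$, so we only need $t_i \toqif t_i$. This requires $t_i \in \nf$, which follows from Lemma~\ref{lem:red-nequivz}.

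In the second case, expanding gives $t \equiv \sum_k \big(\sum_i \alpha_i\gamma_{ik}\big)\cdot p_k$, with the $p_k$ being all the classical terms occurring, padded with zero amplitudes. By unicity of canonical forms (Lemma~\ref{lem:canonical-unicity}, via the quantity $\theta$), every classical term whose total amplitude is nonzero is one of the $p_k$ from the canonical form of $t$.

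We then define $t_i' \triangleq \sum_k \gamma_{ik}\cdot q_k$, where we put $q_k = p_k$ for terms $p_k$ whose total amplitude vanishes. Since $\toq$ is deterministic up to $\equiv$ (Lemma~\ref{lem:confluence-equiv}), each $p_k$ has a canonical reduct $q_k$. Rule (Q) then gives $t_i \toqif t_i'$: either some summand of $t_i$ reduces, or all are normal forms.

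Finally, $\sum_i \alpha_i \cdot t_i' \equiv \sum_k \big(\sum_i\alpha_i\gamma_{ik}\big)\cdot q_k$, and this is $\equiv t'$. The terms with vanishing total amplitude drop out by the $\equiv$ rule for zero amplitudes.

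\textbf{Main obstacle.} The hard part is the zero-amplitude interference case: a classical term $p_k$ that appears inside some $t_i$ but cancels in $t$. Such a $p_k$ might be able to reduce, or even fail to terminate. It still has to be reduced inside $t_i$, because (Q) forces every reducible summand to reduce. We handle this by letting $q_k$ be its actual one-step reduct when one exists (and $p_k$ itself otherwise), rather than $p_k$. Since the total coefficient of $p_k$ in $t$ is zero, the coefficient of $q_k$ coming from $p_k$ is also zero, so $\sum_i \alpha_i \cdot t_i'$ is still $\equiv t'$. Checking this bookkeeping carefully, together with the merging of identical $q_k$ in the final equivalence, is where most of the care goes.
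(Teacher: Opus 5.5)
Your proof is correct and follows the paper's argument. Both use induction on the length of $t \toqs v$, pass through the canonical form of $t$ given by Lemma~\ref{lem:red-can}, expand each non-null $t_i$ over common classical terms $p_k$, set $t_i' \triangleq \sum_k \gamma_{ik}\cdot q_k$ while leaving the $\mta_0$-summands unchanged, and close with the induction hypothesis. Your ``main obstacle'' choice of genuine reducts for the cancelling $p_k$ is what the paper leaves implicit, and it should replace your earlier sentence setting $q_k = p_k$, which would break (Q) for $t_i$. Also note that, exactly as in the paper, the step ``(Q) gives $t_i \toqif t_i'$'' silently assumes the $t_i$ are ground, since $\toq$ only relates ground terms.
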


\begin{proof}
  By induction on the length $k$ of the chain $t \toqs v$. Note that
  $\sumdef[s]$ is not a canonical form, as $s_i$ may contain superpositions.
  The case $k = 0$ being direct, suppose $t \toq t' \toq^k v$.
  As $t$ reduces, Lemma~\ref{lem:red-nequivz} implies that $\nequivz$, thus $t$
  has a canonical form, which can be built as in the proof of
  Lemma~\ref{lem:canonical}, i.e., discard any $t_i$ where $\inter{\alpha_i} =
  0$ or $\equivz[t_i]$, and for the remaining terms, write them as canonical
  forms up to complementation $t_i \equiv \sum_{j=1}^m \beta_{ij} \cdot p_j$.
  Therefore, $t \equiv \sum_{j=1}^m (\sum_{i=1}^n \alpha_i \beta_{ij}) \cdot
  p_j$.
  By Lemma~\ref{lem:red-can}, reduction of $t$ implies that $t' \equiv
  \sum_{j=1}^m (\sum_{i=1}^n \alpha_i \beta_{ij}) \cdot q_j$, with $p_j \toqif
  q_j$, again up to discarding possibly some $q_j$ where $\inter{\sum_{i=1}^n
  \alpha_i \beta_{ij}} = 0$.
  For each non-discarded $t_i$, remark that $t_i' = \sum_{j=1}^m \beta_{ij}
  \cdot q_j$ satisfies either $t_i'=  t_i$ if all $p_j \in \nf$, or $t_i \toq
  t_i'$.
  In any case, $t_i \toqs t_i'$ is satisfied, and up to readding the terms
  discarded at the beginning through $s \equiv t + 0 \cdot s'$, we obtain $t'
  \equiv \sum_{i=1}^n \alpha_i \cdot t_i'$, We can then conclude by applying
  the induction hypothesis on $t'$, and thus $t_i \toqs t_i' \toqs s_i$,
  therefore $t_i \toqs s_i$.
\end{proof}

\canonicaltyped*

\begin{proof}
  We prove that $\nequivz$ by induction on the typing rules of $t$.
  The fact that it implies that $t$ has a unique canonical form
  is a consequence of Lemma~\ref{lem:canonical}.
  Typing of each $t_i$ is obtained directly by induction on the typing of $t$,
  looking at the way the canonical form is built in Lemma~\ref{lem:canonical}.
  \begin{itemize}
    \item If $t = \mtx$, then we conclude directly as $\mta_1 \cdot \mtx$ is
      a canonical form.
    \item It $t \equiv s$, then $\nequivz[s]$ by induction
      hypothesis, and thus $\nequivz$ by quantity.
    \item Suppose $t = \mtb(t_1, \dots ,t_n)$. Looking at the proof of
      Lemma~\ref{lem:canonical}, $\equivz$ would imply that $\equivz[t_i]$ for
      some $i$, which is impossible by induction hypothesis.
    \item Finally, suppose $t = \sumdef$.
      By typing $\nequivz[t_i]$, thus we can take their canonical form, and up
      to adding $\mta_0 \cdot s$, we can write them as sharing the same
      classical terms: $t_i \equiv \sum_{j=1}^m \beta_{ij} \cdot s_j$, where
      $s_j$ are classical terms and pairwise distinct, and thus $t \equiv
      \sum_{j=1}^m \gamma_j \cdot s_j$, where $\gamma_j = \sum_{i=1}^n \alpha_i
      \beta_{ij}$.
      If $t \equiv 0 \cdot t$, this means that $\gamma_j = 0$ for all $j$.
      As $t_i \perp t_j$, given any substitution $\sigma \in \subcont[\Gamma
      \cup \Delta]$, $t_i \sigma \toqs v_i$.
      By Lemma~\ref{lem:split-red}, there exist $q_j$ such that $v_i \equiv
      \sum_{j=1}^m \beta_{ij} \cdot q_j$.
      Therefore, $v \triangleq \sum_{i=1}^n \alpha_i \cdot v_i \equiv
      \sum_{j=1}^m \gamma_j \cdot q_j \equiv 0 \cdot v$.
      However, expressing each $v_i$ in its canonical form, again by possibly
      complementing them, yields $v_i \equiv \sum_{k=1}^l \delta_{ik} \cdot
      r_k$, and thus $\equivz[v]$ implies that for all $k$, $\sum_{i=1}^n
      \alpha_i \delta_{ik} = 0$.
      Again, as $t_i \perp t_j$, $v_i \perp v_j$, and thus $\sum_{k=1}^l
      \delta_{ik} \delta_{jk} = 0$.
      Furthermore, by Definition~\ref{def:orthogonality}, $v_i$ has a canonical
      form, thus $\nequivz[v_i]$.
      Therefore, $(\gamma_{ik})_k$ are non-zeros orthogonal vectors.
      One can thus invert the matrix $G = (\gamma_{ik})_{ik}$, and the equality
      $\sum_{i=1}^n \alpha_i \delta_{ik} = 0$ would imply that all $\alpha_i =
      0$, which is not true as it is of norm $1$. Therefore,
      $\nequivz$. \qedhere
  \end{itemize}
\end{proof}

\progress*

\begin{proof}
  Prove the set equality by double inclusion, each time proving the
  contraposed result.
  Let us first show that any term that reduces is not equivalent to a value.
  This is done by induction on the reduction $\toq$.
  If $t = C[l \sigma]$, then $t$ is classical.
  In particular, if $t$ is equivalent to a value $v$, by quantity,
  $1 =\inter{\qt t t} = \inter{\qt t v} = 0$, as they are ground terms, thus
  $t \not\equiv v$.
  If $t \equiv t_1$, with $t_1$ reducing, by induction hypothesis, $t_1$ is not
  equivalent to a value, thus so is $t$, else $t_1$ would be.
  Finally, if $t = \candef$, reduction imposes that some $t_i$ is not
  equivalent to a value.
  If $t$ is equivalent to a value, by quantity,
  $0 \neq \inter{\alpha_i} = \inter{\qt{t_i} t} = \inter{\qt{t_i} v} = 0$,
  which concludes.

  Now, prove that any term not equivalent to a value does reduce.
  To do so, we first prove that any well-typed classical term $t$ that is
  not a value reduces through (C).
  By induction on the typing of $t = \mtb(t_1, \dots, t_n)$.
  Either $t_i$ is not a value, and thus $t_i = C[l \sigma]$, and it reduces to
  $C[r\sigma]$ for $l \to r \in \trsrule$. Taking $C' = \mtb(t_1,\dots, C,
  \dots, t_n)$ shows reduction of $t$.
  Else, if all $t_i$ are values, then $\mtb$ must be a function
  symbol, else $t$ is a value.
  As $\mfR$ is total, there exist $l \to r \in \trsrule$ and $\sigma$ such
  that $\mtf(t_1, \dots, t_n) = l \sigma$, thus $t$ reduces with $C = \diamond$.
  Now consider any general term $t$.
  As $t$ is well-typed, then it possesses a canonical form $\candef$ by
  Lemma~\ref{lem:canonical-typed}.
  Now, as $t$ is not equivalent to a value, at least one $t_i$ must be distinct
  from a value; as it is classical, the above result tells us that $t_i$
  reduces. Therefore, $\sumdef$ will reduce through (Q), as one element
  of the sum reduces; we conclude by reducing $t$ through (E).
\end{proof}

\subsection{Proofs of Section~\ref{sec:inference}}

\begin{lemma}\label{lem:type-procedure}
  Given a term $t$, there is a procedure, which may not terminate, to generate
  its typing derivation without any equivalence rule.
\end{lemma}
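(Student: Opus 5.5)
The procedure will be a syntax-directed proof search that enumerates candidate derivations. Rule (equiv) is excluded, so each typing rule other than (sup) is determined by the outermost constructor of the term. A variable forces one of the two variable rules. A symbol $\mtb(t_1,\dots,t_n)$ forces (symb), and its signature $\sign{\mtb}$ fixes the types $T_i$ expected for the $t_i$. An amplitude is handled by its own structural rules. A sum $\sumdef$ forces (sup), once the sum is parsed into a list of summands via the grouping $\alpha_1 \cdot t_1 + (\dots)$. There are finitely many such parsings, because the term is syntactic and no $\equiv$ is allowed, and the procedure tries them all. The contexts are also finitely constrained. $\typedom{\Gamma\cup\Delta}$ must contain $\fv t$; by weakening (Lemma~\ref{lem:weakening}) we may take $\Gamma$ minimal. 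Each free variable gets a type from the signature at its occurrence, and is sent to $\Gamma$ or to $\Delta$ according to whether that type is classical or quantum. In (symb) the linear context is split into $\Delta_1,\dots,\Delta_n$, and there are finitely many such splits. So, up to the side conditions, the candidate derivation trees form a finite (or recursively enumerable) set, and we can enumerate them.

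The non-syntactic part is the side conditions of (sup). The normalisation condition $\forall \sigma \in \subcont,\, \sum_i \size{\inter{\alpha_i\sigma}}^2 = 1$ and the orthogonality conditions $t_i \orthored t_j$ each quantify over all context substitutions, and orthogonality further requires the reductions $t_i\sigma \toqs v$ to terminate. The procedure dovetails. It enumerates the ground substitutions $\sigma\in\subcont[\Gamma\cup\Delta]$ (values of a given basic type can be enumerated recursively from the constructor signatures), together with step bounds $k$. For each pair it runs the reductions of $t_i\sigma$ for $k$ steps, using confluence (Lemma~\ref{lem:confluence}) so that one fixed strategy suffices up to $\equiv$. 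On any normal forms reached, it computes the finite sum $\sum\sum \inter{\alpha_i}\inter{\beta_j}^*\kronweird{v_i}{w_j}$, and it evaluates the amplitude sums.

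The main obstacle is that these universally quantified conditions cannot be confirmed in finite time. What the procedure can do is search for a derivation and accept the derivation shape whenever every instance checked so far succeeds. The lemma only claims a procedure that \emph{may not terminate}, so I would read ``generate its typing derivation'' as semi-deciding the candidate shape. Concretely, the procedure enumerates the finitely many candidate trees in parallel and, for each, runs the infinite verification of its side conditions. It discards a tree as soon as some instance fails, i.e.\ an amplitude sum is not $1$, a $\delta_\perp$ is undefined or gives a nonzero sum, or (detected only in the limit) some reduction diverges. Structurally, a tree is output once no refutation remains among finitely many alternatives. This may never happen, and this non-termination is exactly what Theorem~\ref{thm:undecide} later exploits for the $\Sigma^0_3$ upper bound.
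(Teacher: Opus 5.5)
This is essentially the paper's argument: without (equiv) the derivation is syntax-directed, signatures fix the types, and contexts are settled by finitely many compatibility checks, so the only non-effective part is the side conditions of (sup) (the paper builds contexts bottom-up, giving bare variables a wildcard type, rather than enumerating candidate shapes and splits). One correction to your last paragraph: these side conditions are $\Pi^0_2$ ($\forall\sigma\,\exists k$) and the paper shows (equiv)-free type inference is $\Pi^0_2$-complete, so your dovetailing is not a semi-decision procedure (in general it cannot confirm a derivation of an open term, and it never refutes one that fails only by divergence); in Theorem~\ref{thm:undecide} the lemma only reduces (equiv)-free typability to finitely many such checks, giving the $\Pi^0_2$ bound, and the $\Sigma^0_3$ bound comes from the extra $\exists s \equiv t$ provided by Lemma~\ref{lem:tree-canonical}.
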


\begin{proof}
  By induction on the syntax of $t$.
  If $t = \mtx$, return a typing derivation $\typing * * t *$,
  indicating that $t$ can be typed with any context or type.
  If $t$ is the term we want to type, then it can be typed, by choosing any
  type. Suppose $t = \mtb(t_1, \dots, t_n)$.
  By induction hypothesis, either some $t_i$ cannot be typed, in which case
  so does $t$. Now, suppose each $t_i$ can be typed.
  By hypothesis, $t_i$ must be of type $T_i$, with $\signdef$, else it cannot
  be typed. In the case where $t_i = \mtx$, choose the according type, where
  its contexts containing a single variable, $\mtx$.
  Therefore, $\typing{\Gamma_i}{\Delta_i}{t_i}{T_i}$.
  One need to check that for any variable appearing in one of the contexts,
  either it appears in a single linear-context, or in multiple non-linear
  contexts, with the same type; if not, $t$ cannot be typed.
  This procedure is decidable, as the contexts are finite.
  Finally, denote $\Gamma = \cup_i \Gamma_i$; by weakening,
  $\typing \Gamma {\Delta_i}{t_i}{T_i}$ holds, and thus,
  $\typing \Gamma {\cup_i \Delta_i}{t}{T}$ holds.

  Finally, consider $t = \sumdef$.
  Again, if one $t_i$ cannot be typed, then so does $t$.
  Else, consider each typing derivation $\typing{\Gamma_i}{\Delta_i}{t_i}{T_i}$.
  If all $T_i$ are not identical or do not yield a quantum type, typing
  cannot be inferred.
  In the case where $t_i = \mtx$, either $n = 1$ and then type it as a qubit;
  if $n > 1$, $t_i$ cannot be all variables, else orthogonality will not hold,
  thus take the type from the other terms.
  Now, one need to check that the contexts can be reconciled as previously.
  Furthermore, we need to check that the phases and orthogonality conditions
  are verified between terms, which may be or not decidable.
\end{proof}

\inferenceundecide*

\begin{proof}
  Define the free typing inference as the problem of whether a term
  $t$ can be typed with a typing derivation without any equivalence
  rule. First, prove that free typing inference is $\pitwo$-complete.
  \cite{Sim09} showed that deciding whether a constructor TRS is
  strongly normalizing, i.e., terminates on all inputs, is
  $\pitwo$-complete.
  Suppose we have a STRS with one function symbol $\mtf$, and one wants to type
  $t = \stwo \cdot (\mtf(x), \zket) + \stwo \cdot (\mtf(x), \oket)$.
  If $t$ is typed, it must be typed via the superposition rule, thus
  $(\mtf(\mtx), \zket) \orthored (\mtf(\mtx), \oket)$, which
  imposes that $\mtf$ must terminate over any possible input,
  i.e., $\mfR$ is strongly normalizing.
  Therefore, if one can decide $\orthored$, thus typing, it can decide strongly
  normalization, thus free typing inference is $\pitwo$-hard.
  To prove completeness, we must show that it belongs to $\pitwo$.
  Taking the procedure from Lemma~\ref{lem:type-procedure},
  the only undecidable properties to check are verifying the conditions
  in the superposition rule.
  Deciding orthogonality between two terms $s, t$ is $\pitwo$, as it can
  be seen as follows:
  \[
    \forall \sigma, \exists k \in \mbN, s \sigma \toqs \candef[v] \wedge
    t \sigma \toqs \candefb w \wedge
    \sum_{i=1}^n \sum_{j=1}^m \inter{\alpha_i} \inter{\beta_j}^*
    \kronweird{v_i}{w_j} = 0
  \]
  The condition on the two summations is decidable, as we have restricted
  ourselves to terms in $\bar{\mbC}$, where summation, product and
  zero equality are decidable. Furthermore, $\kronweird - -$ is also
  decidable, as it can be checked inductively on the syntax of the term;
  therefore, it is $\pitwo$.
  The test for amplitudes condition is, by definition, and as we consider
  only algebraic numbers, $\Pi_1^0$.
  The overall check is thus $\pitwo$.
  As these undecidable checks are done a finite amount of times, as $t$
  has a finite syntax, the overall free type inference is $\pitwo$.

  Lemma~\ref{lem:tree-canonical} showed that any typing derivation can be
  written with an unique equivalence relation at the end, and with an
  above typing derivation which, in particular, has no equivalence relation.
  Therefore, type inference of $t$ implies that there exists $s$ such that $t
  \equiv s$ and $s$ is typable with no equivalence rule, thus it is
  $\Sigma_3^0$ by definition of the arithmetical hierarchy.
\end{proof}

\typeinference*

\begin{proof}
  Looking back at the procedure from Lemma~\ref{lem:type-procedure},
  one is able to ensure the following points:
  \begin{itemize}
    \item When one types $t$, the contexts we choose contain exactly
      the variables in $t$. Therefore, the size of the contexts are
      bounded by the size of $t$, and any check on whether contexts
      can be conciliated is done in $P(\size t)$.
    \item The check on amplitude symbols is done in polynomial time,
      as amplitudes are ground, thus there is a unique substitution
      (the empty substitution) for which the check is done.
      On algebraic numbers, this is computable, in a constant time;
      therefore, as the number of amplitudes to sum is bounded by
      the size of the term, it can be checked linearly.
    \item Computing the orthogonal Kronecker product on classical
      terms is done linearly in the syntax of the term.
  \end{itemize}
  Therefore, at each step of the induction, one does a finite number
  of recursive calls, and does a finite number of checks,
  each in polynomial time (denote the total complexity $Q$)
  Therefore, there exist a polynomial $P$ (depending on $Q$) such that
  type inference is done in $P(\size t)$.

  To show that $t$ is well-typed, we need to ensure that $\kronweird s t = 0$
  implies $s \orthored t$.
  Suppose $s = C[\zket, s_1, \dots, s_n]$ and $t = C[\oket, t_1, \dots, t_n]$
  without loss of generality, and take any substitution $\sigma$.
  As $\mfR$ is terminating, we obtain
  $s\sigma \toqs C[\zket, v_1, \dots, v_n]$ and
  $t\sigma \toqs C[\oket, w_1, \dots, w_n]$;
  By linearity, develop each $v_i$, and obtain the related canonical form,
  denoted $v, w$.
  Note that as $C$ is classical, the elements of their canonical forms are of
  the shape $v_i = C[\zket, v_1^i, \dots, v_n^i]$ (similar for $w$).
  One thus needs to compute their inner product.
  However, for any $v_i, w_j$, they are still of the shape $C[\zket, \dots]$
  and $C[\oket, \dots]$, thus $\kronweird{v_i}{w_j} = 0$, yielding directly
  $0$. Therefore, $s, t$ are orthogonal.
\end{proof}

\subsection{Proofs of Section~\ref{sec:circuit}}

\universality*

\begin{proof}
  By \cite[Section 5.3.1]{Ter03}, there exists a constructor TRS $\mfR_1$ of
  main symbol $\mtg$ such that for any $n \in \mbN$, $\mtg(\nats_n)$ outputs
  the representation of $C_n$, as $\toq$ behaves standardly for classical
  terms.
  As it is made only of classical rewrite rules, it can be seen and typed as
  a QTRS. Furthermore, as it contains finitely many constant symbols, they
  can be represented using a combination of lists and natural numbers, thus
  in $\mcC_Q$.
  As it treats only classical data, two distinct rules do not share the same
  input structure, and thus function symbols are structure preserving.
  By the same argument, all symbols quantum controls, taking $0$ control qubits.
  Let $\mcU$ be the universal set of gates used in the representation.
  Any circuit representation can be seen as a chain of wire swaps, applying a
  gate $U \in \mcU$, and another set of wire swaps.
  Any $n$-qubit unitary can be encoded as a QTRS with $2^n$ rules with no
  variables, and it will satisfy Definition~\ref{def:qtrs} as it is an unitary;
  we can encode a swap between the first two elements of a list easily; any
  general swap can then be built from this. All these function encode an
  unitary by definition, and are structure preserving.
  This allows us to build a QTRS $\mfR_2$ with main symbol $\mth$, where
  $\mth(C,t_{\ket \phi}) \toqs t_{C_n \ket \phi}$ for $C$ a
  circuit representation of $C_n$.
  As $\mth$ can be built with a unique rule $\mth(x) \to r$, with $r$ a
  composition of multiple gates, it satisfies the properties of $\qtrscirc$.
  We then conclude by taking the QTRS $\mfR$, where all the symbols and
  variables are joined, up to renaming, i.e. $\trsamp = \trsamp[\mfR_1]
  \biguplus \trsamp[\mfR_2]$, similarly for the other sets; and with one added
  function symbol $\mtf$, with one rewrite rule $\mtf(\nats_n, x) \to
  \mth(\mtg(\nats_n), x)$, which is the main symbol of $\mfR$.
\end{proof}

\compilebound*

\begin{proof}
  Let $w \in \structset \mtf$; denote $N = \size \trsrule + 1$ and $k
  = \size w$.
  We first produce inductively a circuit $C^t_w$ by induction on the syntax
  of $t$, and $C^\mtb_w$ for $\mtb \in \mcC \biguplus \mcF$;
  where the obtained circuit, for terms and function symbols, is of size
  bounded by $(\size{\trsrule}T(\size w))^{\rank[\mtg]} = K$, for
  $C^\mtg_w$ and $C^t_w$, where $\mtg$ is the function symbol in $t$ of maximum
  rank; the size of the $C^\mtc_w$ will be constant.
  From that, we approximate each gate, using the
  chosen universal set of gates, up to a precision $1 - \epsilon$, such that the
  overall circuit has a precision $\frac 2 3$. This approximation, done on each
  gate, can be done using Solovay-Kitaev theorem~\cite{Kit97}, which is known to
  approximate the original gate to a precision $1 - \epsilon$ in time
  $\mcO(\log^c(1/\epsilon))$, for a constant $c$. Choosing $\epsilon = \frac 2
  {3K}$ satisfies the overall precision~\cite{NC12}, and thus, our
  approximated circuit is of size $\mcO(K \log^c(\frac {3 K} 2)) =
  \mcO(T(k)^{\rank + 1})$.

  We may label the wires when they correspond to variables, to be able to
  merge circuits easily.
  Note that, at any step of the induction, the structure of any variable is
  known, as we know the starting structure, and $\fv l \supseteq \fv r$ for any
  rewrite rule.
  At each step, we may also do simplification steps, such as removing a gate
  controlled on a wire that is $\zket$. This removes any ill-path that would
  not terminate, or not in the correct number of steps.

  Start the induction with the symbol $\mtf$.
  As each step of induction corresponds to one reduction, and the QTRS
  terminates (this is a consequence of the structure preserving property), the
  induction will terminate.

  Suppose the induction step considers a classical term $t$.
  If $t = \mtx$, it yields the circuit with one wire, labelled with $\mtx$.
  If $t = \mtb(t_1, \dots, t_n)$, then inductively compute the circuit for each
  $t_i$, stack them vertically (in parallel), and merge their outputs with the
  inputs of the circuit generated by $\mtb$.
  As the structure of each $t_i$ is known, the structure of the inputs of
  $\mtb$ is also known, thus the number of input wires will be coherent with
  the total number of wires of the $t_i$, and thus wires can be merged.
  The number of rewrite steps satisfies $T(k) = \sum_{i=1}^n T_i + T'$,
  where $T_i$ is the number of rewrite steps of $t_i$, and $T'$ of $\mtb$.
  By induction, it thus satisfies
  $\size{C_w^t} = \sum_{i=1}^n \size{C_w^{t_i}} + \size{C_w^\mtb} \leq
  \sum_{i=1}^n(T_iN)^{\rank[\mtg]} + (T_iN)^{\rank[\mtg]} \leq ((N)\sum_{i=1}^n
  T_i + T')^{\rank[\mtg]} = (T(\size k)N)^{\rank[\mtg]}$,
  where $\mtg$ is the function symbol with the maximum rank between $\mtb$,
  and the function symbol of maximum rank in $t$.

  Let us now construct the circuits for symbols $\mtb \in \trscons \biguplus
  \trsfunc$. Suppose that $\mtb$ is a constructor, and enumerate all possible
  cases: $\natz, \nats, \nil$ yield no circuit nor wire; $\ket i$ initializes
  an ancillary qubit; for $(h,t)$ and $h :: t$, concatenate vertically both the
  wires of $h$ and of $t$; this is just wire concatenation, so no gate and thus
  of size $0$.

  Finally, consider the case where $\mtb$ is a function symbol $\mtg$.
  If $\mtg$-rewrite rules map to values, it can be implemented directly by an
  isometry, thus an unitary gate;
  remark that as we know the classical structure, any variable in an amplitude
  is fully known, and thus the amplitude can be interpreted as a scalar.
  Else, suppose that $\mtg$ quantum controls.
  Consider the $\mtg$-structural set of rewrite rules with the structure
  corresponding to the one fixed, and denote it
  $\mcR_\mtg = \{l_i \to r_i \in \mcR\}$.
  As it quantum controls, all of these rewrite rules only differ on their
  qubit constructors.
  Therefore, there exist contexts $C,C' \in \trsetclassical$ such that $l_i =
  C[\ket{j_{i1}, \dots, \ket{j_{im}}}]$ and $r_i = C'[\ket{j_{i1}}, \dots,
  \ket{j_{im}}, t_{i1}, \dots, t_{il}]$.
  One can thus view each $r_i$ as $r_i = \mth(\ket{j_{i1}}, \dots, \ket{j_{im}},
  t_{i1}, \dots, t_{il})$, with $\mth$ being a function symbol with one rule
  $\mth(x_1, \dots, x_m, q_1, \dots, q_l) \to C'[x_1, \dots, x_m, q_1, \dots,
  q_l]$, which corresponds to the computation.
  Therefore, for each $r_i$, compile each $t_{ik}$ to a circuit inductively,
  controlled by the $\ket{j_{i1}}, \dots, \ket{j_{im}}$,
  except for any possible sub symbol $\mtg$;
  as there is no symbol equivalent to $\mtg$ other than itself, $t_{ik}$,
  apart from $\mtg$, will be of rank $\rank[\mtg] - 1$; furthermore, $t_{ik}$
  terminates in $T_i$ steps, and $\sum_{k=1}^l T_{ik} \leq T(k)$.
  As this process is done for each $i$, thus at most $N - 1$ times, this yields
  a total circuit of size $(N-1)(\sum_{k=1}^l (T_{ik}N)^{\rank[\mtg] -1}) \leq
  (N-1)(T(k)N)^{\rank[\mtg] -1}$.
  Then, following the merging process from \cite[Theorem 9]{HPS25}, as all
  calls to $\mtg$ are done on an identical input, one is able to rewrite
  all calls to $\mtg$ as a single call; then, one compiles $\mtg$ inductively.
  This induction step can be done at most $T(k)$ times, as we have
  done one rewrite step; therefore, the size of the circuit is bounded by
  $\frac{(T(k) N)^{\rank[\mtg]}(N-1)}{N}$, which can be bounded by $(T(k)
  N)^{\rank[\mtg]} - 1$, as $\rank[\mtg] \geq 1$.
  After all the controlled statements of each $r_i$, we compile the circuit
  corresponding to $\mth$, corresponding to one gate, thus yielding the
  expected bound.
  Note that as $\mtg$ preserves the structure, each $t_{ik}$ yields the same
  number of wires.
  Furthermore, as $\mtg$ quantum controls, each left-handside contains the
  same variables, thus each $r_i$ has the same variables.
  To connect each circuit of $r_i$, one just has to connect the control wires
  together, the variable wires together, and finally to possibly merge the added
  ancillary qubit wires, up to adding NOT gates to flip them to
  $\zket$ or $\oket$. Finally, with $\mth$, merge the control wires together,
  and merge each wire from $t_{ik}$ with the wire labelled by $q_k$.
\end{proof}

\compile*

\begin{proof}
  The same compilation process can be done as in the proof of
  Theorem~\ref{thm:compile-bound}; as $\mfR$ terminates, it does in $\timeset$
  for some $\tau$; and $\qtrsrec$ only plays a role in the size of the circuit.
\end{proof}

\fbqpcharac*

\begin{proof}
  Let us first show soundness, i.e., let $\mfR \in \qtrspoly$,
  and show that any binary function computed by $\famcirc$ belongs to $\fbqp$,
  i.e., $\famcirc$ is an uniform polynomially-sized
  family of circuits. As $\mfR \in \qtrspoly$, there is a polynomial $Q$ such
  that $\mfR$ terminates in time $Q$. Therefore, by
  Theorem~\ref{thm:compile-bound}, the circuit is of size $\mcO(P(Q(n)))$,
  thus of polynomial size in $n$. Note that we have indexed the circuits by $n$
  rather than $w$, but $\size w = n$. The fact that this
  family is uniform is obtained by building the circuit as done in the proof of
  Theorem~\ref{thm:compile-bound}: one can build a Turing machine, taking as
  input the size $n$ (thus the shape), and suppose that it contains initially
  the encoding of $\mfR$, thus without the input, onto a tape of the QTM. As the
  circuit is build inductively on the syntax, and it differs only by the input
  structure, this creation is uniform, and is done in polynomial time, as the
  generated circuit is of polynomial size.

  \newcommand{\pket}{\ket \phi}
  \newcommand{\construct}[1]{\overline{#1}}
  \newcommand{\kbk}[3]{|#1\rangle\!\langle#2|#3\rangle}
  We now prove completeness, i.e., for any function $f$ in $\fbqp$, there exist
  $\mfR \in \qtrspoly$ such that $\famcirc$ computes $f$.
  We prove this result by using Yamakami's algebra~\cite{Yam20}. This paper
  defines a function algebra, and then proves that it completely characterizes
  \fbqp{}. Formally, it defines the function class $\square_1^{\mathrm{QP}}$,
  which is the smallest class of functions including the gates below, for
  $\theta \in [0, 2\pi) \cap \tilde{\mathbb C}$ where
  $\tilde{\mbC}$
  denotes the complex numbers whose real and imaginary parts can both be
  approximated by a polynomial-time Turing machine,
  \[
    \begin{aligned}
      I(\pket) &\triangleq \pket \\
      PHASE_\theta(\pket) &\triangleq \kbk 0 0 \phi + e^{i\theta}
      \kbk 1 1 \phi \\
      ROT_\theta(\pket) &\triangleq \cos \theta \ket \phi + \sin \theta (\kbk
      1 0 \phi - \kbk 0 1 \phi) \\
      NOT(\pket) &\triangleq \kbk 1 0 \phi + \kbk 0 1 \phi \\
      SWAP(\pket) &\triangleq
      \begin{cases*}
        \pket & if $l(\pket) \leq 1$\\
        \sum_{a, b \in \set{0,1}} \kbk{ab}{ba}{\phi} & otherwise
      \end{cases*}
    \end{aligned}
  \]
  and is closed under the following schemes:
  \[
    \begin{aligned}
      Compo[g, h](\pket) &\triangleq g \circ h(\pket) \\
      Branch[g, h] &\triangleq
      \begin{cases*}
        \pket & if $l(\pket) \leq 1$ \\
        \zket \otimes g(\braket{0|\phi}) + \oket \otimes
        h(\braket{1|\phi}) & otherwise
      \end{cases*} \\
      kQRec_t[g,h,p|\mcF_k](\pket) &\triangleq
      \begin{cases*}
        g(\pket) & if $l(\pket) \leq t$ \\
        h(\sum_{w \in \set{0,1}^k} \ket w \otimes
        f_w(\braket{w|p(\pket)})) & otherwise
      \end{cases*}
    \end{aligned}
  \]
  Here, $\pket$ is a quantum state, i.e., belongs to the Hilbert space $\mathbb
  C^{2^n}$ for a given $n$ written in Dirac notation. Its size $l(\pket)$ is
  $n$. Given $w \in \set{0,1}^n$ with $n \leq l(\pket)$, we may write
  $\pket = \sum_i \alpha_i \ket{w_i z_i}$, where $w_i \in \set{0,1}^n$ and
  $z_i \in \set{0,1}^{l(\pket) -n}$; we then abuse the notation by writing
  $\braket{w|\phi} = \sum_i \alpha_i \braket{w|w_i} \ket{z_i}$.

  This class is proven to be $\fbqp$ complete, thus any function of $\fbqp$ can
  be written as a function $\square_1^{\mathrm{QP}}$. We therefore associate
  any function $\square_1^{\mathrm{QP}}$ with a given QTRS,
  $\construct{\square_1^{\mathrm{QP}}}$, and prove inductively that they belong
  in $\qtrspoly$.
  By abuse of notation, we only write the corresponding set of rules of the
  QTRS below; these QTRS can be merged altogether, up to renaming.
  We also denote $\construct f_m$ as the main function symbol of the QTRS
  $\construct f$.

  One can remark that for any function $f \in \square_1^{\mathrm{QP}}$, and any
  input $w \in \set{0,1}^*$, $f\ket w = v$, and the main symbol of $\construct
  f$, $\construct f_m$, satisfies $\mtf(\computes w) \toqs v$, thus they
  compute the same function (here $\computes w$ is the notation from
  Theorem~\ref{thm:universality}).
  As $\construct{f}$ is approximated by the family of circuits $\mathtt
  C(\construct f)$ by Theorem~\ref{thm:compile-bound}, then $f$ is also
  approximated by this family, with precision $\frac 2 3$, thus giving us the
  wanted result.
  The following phases are used: $\inter{\mta_\theta} = e^{i \theta}$,
  $\inter{\mta^c_\theta} = \cos \theta$, $\inter{\mta^s_\theta} = \sin \theta$,
  $\inter{ \mathtt{sgn}} = -1$.
  \[
    \begin{aligned}
      \construct{I} &\triangleq \{\mtg(x) \to x \} \\
      \construct{Ph_\theta} &\triangleq \{\mtg(\nil) \to \nil,
        \mtg(\zket::t) \to \zket::t, \mtg(\oket::t) \to \mta_\theta
      \cdot \oket::t\} \\
      \construct{Rot_\theta} &\triangleq \left\{
        \begin{aligned}
          \mtg(\nil) &\to \nil \\
          \mtg(\zket::t) &\to \mta^c_\theta \cdot \zket :: t +
          \mta^s_\theta \cdot \oket :: t \\
          \mtg(\oket :: t) &\to \mta^s_\theta \cdot \mathtt{sgn}\cdot
          \zket :: t + \mta^c_\theta \cdot \oket :: t
        \end{aligned}
      \right\} \\
      \construct{Not} &\triangleq \{\mtg(\nil) \to \nil,
      \mtg(\zket::t) \to \oket::t, \mtg(\oket::t) \to \zket::t\} \\
      \construct{SWAP} &\triangleq \{\mtg(\nil) \to \nil,
      \mtg(h::\nil) \to h::\nil, \mtg(h::h'::t) \to h'::h::t\} \\
      \construct{COMP[f, g]} &\triangleq \{\mtg(x) \to
      \construct{f}_m(\construct{g}_m(x))\} \\
      \construct{Branch[f, g]} &\triangleq \left\{
        \begin{aligned}
          \mtg(\nil) &\to \nil \\
          \mtg(h::\nil) &\to h::\nil \\
          \mtg(\zket::h::t) &\to \zket :: \construct f_m(h::t) \\
          \mtg(\oket::h::t) &\to \oket :: \construct g_m(h::t)
        \end{aligned}
      \right\}
    \end{aligned}
  \]

  All terms belong in $\qtrscirc$ by construction, and as
  functions have no recursive construct, they terminate in a fixed time,
  independently of the input, and thus they belong in $\qtrspoly$.
  The definition of this function in \cite{Yam20} comes with $f_w \in \mcF_k$,
  where $\mcF_k$ is a set of functions $\{f_s\}_{s \in \set{0,1}^k}$, where
  each $f_s$ is either $I$ or $kQRec_t[g,h,p|\mcF_k]$ itself.
  Therefore, interpreting $f_s$ as either $\construct I$ or
  $\mathtt{rec}$, one is able
  to build the corresponding QTRS of main symbol $\mathtt{rec}$.

  \[
    \construct{kQRec_t[g,h,p|\mcF_k]} \triangleq \left\{
      \begin{aligned}
        \mtf(\nil) &\to \nil \\
        &\vdots \\
        \mtf(h_1 :: \dots h_k :: \nil) &\to h_1 :: \dots ::h_k :: \nil \\
        \mtf(s_1 :: \dots s_k :: l) &\to s_1 :: \dots :: s_k ::
        \construct{f_{s_1 \dots s_k}}_m(l) \\
        \mathtt{rec}(\nil) &\to \construct g_m(\nil) \\
        &\vdots \\
        \mathtt{rec}(h_1 :: \dots h_t :: \nil) &\to \construct
        g_m(h_1 :: \dots ::h_t :: \nil) \\
        \mathtt{rec}(h_1 :: \dots h_t :: l) &\to \construct
        h_m(\mtf(\construct p_m (h_1 :: \dots :: h_t :: l)))
      \end{aligned}
    \right\}
  \]
  Note that $h_i$ are variables, while $s_i$ are qubit constructors,
  thus $\mtf$ has $2^k + 2$ rules.
  By definition, this QTRS belongs to $\qtrscirc$, as the only structural
  set of rewrite rules with more than one rule is for the $2^k$ rules of $\mtf$
  with different $s_1, \dots, s_k$ which satisfy the definition;
  all other satisfy the definition of quantum controls with $C' = \diamond$,
  As recursive calls are done by $\mtf_q$ on a similar input, then it also
  belongs to $\qtrsrec$.
  Furthermore, terminating in polynomial time is guaranteed, as:
  \begin{itemize}
    \item By induction, each $g,h,p$ computes in polynomial time;
    \item Either $f_s$ is the identity, thus stops here, or is a recursive
      call on an input of size that decreases by $k$;
    \item Therefore, the compute time is roughly, for an input of size
      $n$, $\frac n k \max(P_g(n),P_h(p),P_p(n))$, where $P_g, P_h,P_p$ is the
      compute time of respectively $g,h,p$, which is polynomial, by
      induction hypothesis. The obtain time is thus polynomial in $n$. \qedhere
  \end{itemize}
\end{proof}

\section{Proofs of Section~\ref{sec:complexity}}

\chainsum*

\begin{proof}
  Suppose $t \toq t_1 \toq \dots$. By Lemma~\ref{lem:red-can}, it implies that
  $t \equiv \candefb s$, $t_1 \equiv \sumgen j m \beta {s'}$, and $s_j \toqif
  s_j'$.
  Now, if $t = \sumdef$, one can rewrite $t$ as its canonical form, i.e.,
  remove amplitudes equivalent to $0$, view $t_i \equiv \sum_{j=1}^m
  \gamma_{ij} \cdot s_j$ and $\beta_j = \sum_{i=1}^n \alpha_i \cdot
  \gamma_{ij}$.
  One can rewrite $t_1$ as $t_1 \equiv \sum_{i=1}^n \alpha_i \cdot
  (\sum_{j=1}^m \gamma_{ij} \cdot s_j')$; denoting $t_i' = \sum_{j=1}^m
  \gamma_{ij} \cdot s_j'$, one thus has $t_i \toqif t_i'$ (as we have removed
  $t_i$ where $\equivz[t_i]$, thus they have a canonical form and may reduce).
  Therefore, for each reduction of the infinite chain, each $t_i$ reduces via
  $\toqif$.
  We conclude as the reduction rule imposes that one $s_j$ actually reduces via
  $\toq$ and not $\toqif$, thus at least $t_i$ truly reduces at each step; as
  there is a finite number of $t_i$, one must reduce infinitely.
\end{proof}

\wpoprop*

\begin{proof}
  For the characterization of $\ordexteq$,
  the if condition can be shown by deriving $\ordexteq$ using all three rules
  from Definition~\ref{def:wpo} in the reverse order from the presentation;
  for the only if condition, it can be proven by induction hypothesis on the
  derivation of $\ordexteq$, by noting that any premise of any rule contains at
  least one classical term, which simplifies the characterization.
  All the other points can be seen as direct consequences of this
  characterization; for the last point, one can remark that $s \sigma \equiv
  \candef[s \sigma]$ as $s_i$ are classical terms and $\sigma$ maps to
  classical terms, and then conclude. \qedhere
\end{proof}

\wpotermination*

\begin{proof}
  Suppose there exists an infinite chain starting from $t_1$.
  By Lemma~\ref{lem:chain-sum}, writing its canonical form $\candef[s]$, there
  exists one $s_i$ that starts an infinite chain, thus $s \ordexteq s_i$.
  As it is classical, then $s_i = C[\sigma l]$, and let us denote $t_2 =
  C[\sigma r]$.
  As $\mfR$ is compatible with $\ordext$, $l \ordext r$; and as
  $\ord$ is a rewrite order, $\ordext$ is a rewrite order from
  Lemma~\ref{lem:wpo},
  thus $s_i \ordext t_2$.
  We therefore have an infinite chain $t_1 \ordexteq s_i \ordext t_2 \dots$.
  By transitivity, this yields a chain of classical terms
  $p_1 \ordext p_2 \dots$, which, by Lemma~\ref{lem:wpo}, implies that we have
  an infinite chain in $\ord$, which contradicts well-foundedness. \qedhere
\end{proof}

\begin{restatable}[]{lemma}{interwpo}\label{lem:inter-wpo}
  Let $\assign -$ be a monotonic assignment.
  Let $\ord$ be the restriction of $\ordassigneq$
  to classical terms.
  Then, $\ordexteq \subset \ordassigneq$,
  and $\ord$ is a well-founded rewrite order.
\end{restatable}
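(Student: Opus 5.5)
The plan is to first establish that the restriction $\ord$ of $\ordassign$ to classical terms is a well-founded rewrite order, and then to show the inclusion $\ordexteq \subseteq \ordassigneq$ by induction on the derivation of $\ordexteq$ (Definition~\ref{def:wpo}). Throughout, $\ordeq$ denotes the restriction of $\ordassigneq$ to classical terms and $\ord$ its induced strict order.

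For the rewrite order part, there are three properties to check.

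\emph{Well-foundedness.} A chain $s_1 \ord s_2 \ord \dots$ of ground classical terms gives a strictly decreasing sequence $\assign{s_1} > \assign{s_2} > \dots$ of natural numbers, which is impossible. For non-ground terms, $P \ordnat Q$ means pointwise strict decrease. An infinite chain therefore also yields an infinite decreasing sequence in $\mbN$ after evaluating all polynomials at the point $0$ (or any fixed point).

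\emph{Monotonicity.} This is exactly the hypothesis that $\assign -$ is monotonic.

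\emph{Closure under substitution.} The key step is the substitution lemma $\assign{s\sigma} = \assign{s}[\rho(\mtx) := \assign{\mtx\sigma}]$, proved by induction on the classical term $s$. Since a pointwise inequality $\assign s > \assign t$ over all of $\mbN$ is preserved when variables are replaced by $\mbN$-valued polynomials $\assign{\mtx\sigma}$, closure follows. One subtlety: $\sigma$ maps to values, which may be superpositions, so $s\sigma$ need not be classical. Here I would use the canonical-form definition of $\assign -$ together with $\max$. Since $\assign{\mtx\sigma}$ is a polynomial dominating each summand, and the interpretations of symbols are monotone, $\assign{s\sigma}$ and $\assign{t\sigma}$ compare as required. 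If this causes trouble, I would restrict to classical substitutions, which is all that rule (C) uses.

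For the inclusion, I would proceed by induction on the derivation of $s \ordexteq t$, with one case per rule of Definition~\ref{def:wpo}.

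\emph{Base rule.} Here $s,t$ are classical and $s \ordeq t$, which by definition means $\assign s \ordnateq \assign t$.

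\emph{Left-sum rule.} We have $s \equiv \sum_i \alpha_i \cdot s_i$ with $s_i \ordexteq t$ for some $i$. By the induction hypothesis $\assign{s_i} \ordnateq \assign t$. Since $\assign s = \max_i \assign{s_i} \ordnateq \assign{s_i}$, transitivity gives $\assign s \ordnateq \assign t$.

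\emph{Right-sum rule.} We have $s \ordexteq t_j$ for all $j$, hence $\assign s \ordnateq \assign{t_j}$ for all $j$ by the induction hypothesis. We then need $\assign s \ordnateq \max_j \assign{t_j} = \assign t$.

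This last case is the main obstacle. The paper only specifies the max as some polynomial $P$ with $P \ordnateq \assign{t_j}$, and an upper bound on each $\assign{t_j}$ does not by itself dominate an arbitrary such $P$. I would fix the interpretation of $\max$ as the pointwise maximum, i.e.\ compare functions $\mbN^k \to \mbN$ pointwise. Under that reading, pointwise $\assign s \ge \assign{t_j}$ for every $j$ gives $\assign s \ge \max_j \assign{t_j}$ directly. I would also check that the canonical form used is the unique one given by Lemma~\ref{lem:canonical-typed}, so that $\assign t$ is well defined.
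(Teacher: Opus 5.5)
Your proof is correct and follows essentially the paper's route. The rewrite-order properties are checked the same way (monotonicity of the assignment, evaluation of polynomials under substitution, evaluation at a fixed point for well-foundedness), and the only structural difference is that you prove the inclusion by induction on the derivation of $\ordexteq$, whereas the paper invokes the characterization $s \ordexteq t \iff \forall j, \exists i, s_i \ordeq t_j$ of Lemma~\ref{lem:wpo} and then compares maxima. Your explicit pointwise-maximum reading of $\max$ is a real gain in rigor, since the paper's step $\forall j, \exists i, \assign{s_i} \ordnateq \assign{t_j} \implies \max_i \assign{s_i} \ordnateq \max_j \assign{t_j}$ silently relies on $\max$ being a least upper bound.
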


\begin{proof}
  Given $s \equiv \candef[s]$ and $t \equiv \candefb t$:
  \[
    \begin{aligned}
      \forall j, \exists i, s_i \ordassigneq t_i &\implies
      \forall j, \exists i, \assign{s_i} \ordnateq \assign{t_j} \\
      &\implies \max_{1 \leq i \leq n} \assign{s_i} \ordnateq \max_{1
      \leq j \leq m} \assign{t_j}
      \implies \assign s \ordnateq \assign t
      \implies s \ordassigneq t
    \end{aligned}
  \]
  As this is a characterization of a worst path ordering by Lemma~\ref{lem:wpo},
  $\ordexteq$ is thus encoded inside $\ordassigneq$.
  For any context $C$ and any term $s, t$ satisfying $s \ordassign t$:
  \[
    \assign{C[s]} = \assign{C}[\assign s] \ordnat \assign{C}[\assign t] =
    \assign{C[t]} \implies C[s] \ordassign C[t]
  \]
  where the first and last equality come from the definition of the assignment,
  and for the inequality, either $C$ is empty in which case it is trivial,
  either it is made of multiple symbols, and we can use monotonicity of
  $\assign -$.
  Given $s,t$ two terms and a substitution $\sigma$, $s \ordassign t$ implies
  that $\assign s \ordnat \assign t$, thus for any evaluation of the
  polynomials; and $\assign{s \sigma}$ can be seen as $\assign s$ where
  variables are evaluated correspondingly to $\sigma$.
  Finally, any infinite chain $t_1 \ordext t_2 \dots$ yields an infinite
  chain in $\ordnat$ (as all variables of $t_i$ are
    included in those of $t_1$, thus we can choose a unique set of substitutions
  for all rules), which is impossible. \qedhere
\end{proof}

\thminter*

\begin{proof}
  Direct by Lemma~\ref{lem:inter-wpo} and Theorem~\ref{thm:wpo}.
  While $\ordassigneq$ is not directly a worse path extension, it
  satisfies the induction hypothesis, thus the proof can be
  adapted to this specific case. \qedhere
\end{proof}

\rchain*

\begin{proof}
  The proof is similar to the one from \cite{AG00}.
  Let $t$ be a chain starting an infinite reduction.
  By Lemma~\ref{lem:chain-sum}, there exists $p \subcan t$ which also
  starts an infinite reduction.
  By a minimality argument, $p$ possesses a subterm $f_1(\seq{u_1})$,
  starting an infinite reduction, and where the sequence $\seq{u_1}$ terminates
  to values $\seq{z_1}$.
  Again by Lemma~\ref{lem:chain-sum}, there exists $\seq{v_1} \subcan
  \seq{z_1}$, such that $f_1(\seq{v_1})$ starts an infinite reduction (as
  $f_1(\seq{z_1})$ can be developed by linearity).
  Now, we have $f_1(\seq{w_1}) \to r_1 \in \mcR$, and a substitution
  $\sigma_1$, such that $f_1(\seq{w_1}) \sigma_1 = f_1(\seq{v_1})$, thus
  $f_1(\seq{v_1}) \toq r_1 \sigma_1$, and $r_1 \sigma_1$ starts an infinite
  reduction, and again, there is some classical term $y_1 \subcan
  r_1$ that starts
  an infinite reduction; as $\sigma_1$ maps to classical values, we
  also have $y_1
  \sigma_1 \subcan r_1 \sigma_1$.
  As $\sigma_1$ maps variables to values, and $y_1 \sigma_1$ starts an infinite
  reduction, there is a subterm, thus $C[f_2(\seq{u_2})] = y_1$ for some
  context $C$, such that $f_2(\seq{u_2})\sigma_1$ starts an infinite reduction,
  and $\seq{u_2} \sigma_1$ terminate.
  Therefore, let $\dpair{F_1(\seq{w_1})}{F_2(\seq{u_2})}$ be a valid dependency
  pair.
  We can repeat this process, as $f_2(\seq{u_2})$ starts an infinite reduction
  chain, thus obtaining an infinite sequence
  \[
    \dpair{F_1(\seq{w_1})}{F_2(\seq{u_2})},
    \dpair{F_2(\seq{w_2})}{F_3(\seq{u_3})}, \dots
  \]
  Finally, we need to prove that this is an $\mfR$-chain.
  By construction, $\seq{u_j}\sigma_{j-1} \toq^* \seq{z_j}$, with
  $\seq{w_j}\sigma_j \subcan \seq{z_j}$.
  One can assume, without loss of generality, that each rule is used with
  distinct variables, therefore we can take $\sigma = \sigma_1 \circ \sigma_2
  \circ \dots$, i.e., the disjoint union of all substitutions $\sigma_i$.
  Therefore, $F_j(\seq{u_j})\sigma \toq r$ with $F_j(\seq{w_j}) \sigma \subcan
  r$, thus this is an $\mfR$-chain. \qedhere
\end{proof}

\dpairorder*

\begin{proof}
  Suppose there is an infinite $\mfR$-chain $\dpairdef 1, \dpairdef 2, \dots$.
  Therefore, we have a substitution such that $t_j \sigma \toq^* r_j$ and
  $s_{j+1} \sigma \subcan r_j$ for all $j$.
  We will construct an infinite chain, using the worst path ordering
  generated from
  $\ordeq$.
  In particular, as $l \ordeq r_i$ for $l \to \candef[r] \in \mcR$, we have
  $l \ordexteq r$ for all rules.
  As $\ordeq$ is weakly monotonic and closed under substitutions, so is
  $\ordexteq$ by Lemma~\ref{lem:wpo}, and $t \toq^* s$ implies $t \ordexteq s$.
  Therefore, $t_j\sigma \ordexteq r_j \sigma \ordexteq s_{j+1}\sigma$, with the
  last inequality coming from the definition of a worst path ordering.
  Therefore, we can build the following chain:
  \[
    s_1 \sigma \ordext t_1 \sigma \ordexteq s_2 \sigma \ordext t_2 \sigma \dots
  \]
  thus, a chain $s_1 \sigma \ordext s_2 \sigma \dots$ between
  classical terms, thus
  an infinite chain in $\ord$, which contradicts well-foundedness. We then
  conclude by Theorem~\ref{thm:rchain}. \qedhere
\end{proof}

\redsum*

\begin{proof}
  This is proven by induction on the maximum number of reduction steps of
  the $t_i$, denoted $k$. If $k = 0$, they are all values, and we conclude.
  Suppose $k > 0$, thus one $t_i$ is not a value.
  For each $t_i$ such that $t_i \equiv \beta \cdot s$ with $\inter{\beta} = 0$,
  or where $\inter{\alpha_i} = 0$, we can remove it via $\equiv$,
  thus $t \equiv \sum_{k=1}^l \alpha_k \cdot t_k$.
  By writing each canonical form, up to completion (meaning we may add $\mta_0
    \cdot t'$ to the canonical forms so that they are defined on the same set of
  terms $s_j$), as $t_k \equiv \sum_{j=1}^m \beta_{k,j} \cdot s_j$,
  $s_j \toq s_j'$ or $s_j = s_j'$ is a value,
  and $\sum_{k=1}^l \alpha_k \cdot \beta_{k,j} = \gamma_j$:

  \[
    t \equiv \sum_{j=1}^m \gamma_j \cdot s_j \equiv \sum_{j, \gamma_j
    \neq 0} \gamma_j \cdot s_j
    \toq \sum_{j, \gamma_j \neq 0} \gamma_j \cdot s_j' \equiv
    \sum_{j=1}^m \gamma_j \cdot s_j' \equiv \sum_{k=1}^n \alpha_k \cdot t_k',
  \]

  as the $s_j$ are pairwise distinct, this is a canonical form.
  If the obtained form is a value,
  then we can stop here, and the result will be correct; even if $t_i$ is not
  a value, the non-value parts will still be canceled via $\equiv$ at the
  end, as the reduction is deterministic.
  Else, by $\equiv$, and we can replace $\mta_0 \cdot s_j$ by $\mta_0
  \cdot s_j'$.
  Furthermore, we can add back the removed terms from the beginning, adding
  back $t_i'$ instead of $t_i$, to get that
  $\sum_{k=1}^l \alpha_k \cdot t_k' \equiv \sum_{i=1}^n \alpha_i \cdot t_i'$.
  Thus, by $\equiv$, $t$ rewrites in one step to $\sumdef[t']$;
  and all $t_i'$ terminate in one less step, thus we conclude by
  induction hypothesis.
\end{proof}

\interbound*

\begin{proof}
  Let us prove by induction on $\assign t$, which is a natural number, that
  there is no $t$ such that $t$ terminates in at least $\assign t + 1$ steps;
  this implies the wanted result.
  Suppose $\assign t = 0$, thus $t$ reduces at least once.
  If $t$ is classical, then $t \toq s$, and by definition of the reduction rule
  (C), $\assign t > \assign s$; as $\assign s < 0$ and belongs to
  $\mbN$, this is not possible.
  If $t \equiv \candef$, at least one $t_i$ must reduce, else
  Lemma~\ref{lem:red-sum} would imply that $t$ terminates in $0$ steps; thus
  $t_i$ reduces, and $\assign{t_i} \leq \assign t$, thus we can conclude as
  above.
  Now, suppose that the result is proven for any $t'$ where
  $\assign{t'} \leq k$,
  and take $t$ where $\assign t = k + 1$.
  If $t$ is classical, then, as before, there exists $s$ such that $t \toq s$,
  thus $\assign s < \assign t$, and $s$ terminates in at least $\assign t \geq
  \assign s + 1$ steps, which by induction hypothesis cannot happen.
  If $t$ is non-classical, proceed as previous, by picking one $t_i$ from
  the canonical form that reduces in at least $\assign t + 1 \geq
  \assign{t_i} + 1$
  steps, which must exist by Lemma~\ref{lem:red-sum}.
\end{proof}
\begin{lemma}\label{lem:additive}
  Let $\mfR$ be a QTRS, and let $\assign -$ be an assignment.
  Then, there exists $k \in \mbN$, such that for any classical value
  ${v}$, $\assign{{v}} < k \times \size{{v}}$.
\end{lemma}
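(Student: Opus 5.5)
The plan is a structural induction on the classical value $v$. I read the statement in the setting of Definition~\ref{def:additive}: $\assign -$ is an \emph{additive} assignment, and the constructor signature $\mcC$ of $\mfR$ is finite. Additivity seems necessary, since for instance $\assign{\consht}(X,Y) = X \cdot Y + X + Y + 1$ would make $\assign v$ grow super-linearly on lists. I will therefore assume it, as it is the hypothesis under which the lemma is used for Theorem~\ref{thm:polytime}.

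A classical value contains neither sums nor amplitudes, and by the grammar of $\trsetval$ it contains no function symbols. Hence $v = \mtc(v_1, \dots, v_n)$ with $\mtc \in \mcC$ and each $v_i$ again a classical value. In particular $\assign v$ is a natural number, and $\size v = 1 + \sum_{i} \size{v_i}$.

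I fix the constant
\[
  k \triangleq 1 + \max\Bigl(\{\assign{\mtc} \gmid \mtc \in \mcC,\ \arity[\mtc] = 0\} \cup \{\alpha_\mtc \gmid \mtc \in \mcC,\ \arity[\mtc] > 0\}\Bigr),
\]
which is finite because $\mcC$ is finite. If $\mcC$ were infinite, I would restrict to the constructors occurring in the rules together with the finitely many considered types. The argument then has two cases.

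\emph{Base case} ($n = 0$): here $\assign v = \assign{\mtc} < k = k \cdot \size v$.

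\emph{Inductive step} ($n > 0$): additivity gives $\assign v = \sum_{i=1}^n \assign{v_i} + \alpha_\mtc$. By the induction hypothesis each $\assign{v_i} < k \size{v_i}$, and these are integers, so $\assign{v_i} \le k\size{v_i} - 1$. Together with $\alpha_\mtc < k$ this yields
\[
  \assign v \le k \sum_i \size{v_i} - n + \alpha_\mtc < k\Bigl(\sum_i \size{v_i} + 1\Bigr) = k \size v .
\]
The inequality is actually strict even without using the $-n$ slack.

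I expect no real obstacle in the computation itself. The only delicate points are at the level of hypotheses: the implicit additivity assumption, and the finiteness of $\mcC$ that makes $k$ a well-defined natural number. The observation that classical values contain only constructors, so that the inductive structure of $\assign -$ applies directly without any canonical-form maximum, follows immediately from the grammar.
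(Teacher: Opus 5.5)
Your proof is correct and follows the paper's own argument: structural induction on $v$ with $k = 1 + \max_{\mtc \in \mcC} \alpha_\mtc$, using additivity in the inductive step. You are also right that the lemma silently needs the assignment to be additive (Definition~\ref{def:additive}) and $\mcC$ to be finite, and your explicit treatment of nullary constructors via $\assign{\mtc}$ fills in what the paper's one-line proof leaves implicit.
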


\begin{proof}
  By induction on the syntax of ${v} $; take $k = 1 + \max_{\mtc \in \mcC}
  \alpha_\mtc$.
\end{proof}

\polytime*

\begin{proof}
  By Lemma~\ref{lem:inter-bound}, $f({v}_1,\ldots,{v}_n)$ terminates
  in at most $\assign{f({v}_1,\ldots,{v}_n)}$ steps.
  It satisfies:
  \[
    \assign{f({v}_1,\ldots,{v}_n)} =
    \assign \mtf(\assign{{v}_1}, \dots, \assign{{v}_n})
    < \assign \mtf(k\size{{v}_1}, \dots, k\size{{v}_n})
    = P(\size{{v}_1}, \dots, \size{{v}_n}),
  \]
  where the second inequality comes from Lemma~\ref{lem:additive} and
  monotonicity of $\assign -$, and the last inequality as $\assign \mtf$ is a
  polynomial $Q$, and write $P$ as the polynomial which replaces each
  indeterminate $X$ by $kX$, which is also a polynomial. \qedhere
\end{proof}

\end{document}